\newtheorem{thm}{Theorem}[section]
\newtheorem{lem}{Lemma}[section]
\newtheorem{cor}{Corollary}[section]
\newcommand{\z}{\mathbf{z}}
\newcommand{\x}{\mathbf{x}}
\newcommand{\y}{\mathbf{y}}
\newcommand{\A}{\mathbf{a}}
\newcommand{\R}{\mathbf{R}}
\newcommand{\HH}{\mathbf{H}}
\newcommand{\g}{\mathbf{g}}
\newcommand{\h}{\mathbf{h}}
\newcommand{\rr}{\mathbf{r}}
\newcommand{\X}{\mathbf{X}}
\newcommand{\eps}{\epsilon}
\newcommand{\F}{\mathbf{F}}
\begin{document}

\title{Sparse Phase Retrieval: Uniqueness Guarantees \\and Recovery Algorithms\vspace{0.5cm}}
\author{Kishore Jaganathan \hspace{2.5cm} Samet Oymak \hspace{2.5cm} Babak Hassibi\thanks{ 
This work was supported in part by the National Science Foundation under grants CCF-0729203, CNS-0932428 and CIF-1018927, by the Office of Naval Research under the MURI grant N00014-08-1-0747, and by a grant from Qualcomm Inc.} \vspace{0.3cm}\\ 
 Department of Electrical Engineering, \\ Caltech, Pasadena, CA - 91125. \\
\small{\href{mailto:kishore@caltech.edu}{kishore@caltech.edu}, \href{mailto:soymak@caltech.edu}{soymak@caltech.edu}, \href{mailto:hassibi@caltech.edu}{hassibi@caltech.edu}}
}
\date{}
\maketitle

%%%%%%%%%%%%%%%%%%%%%%%%%%%%%%%%%%%%%%%%%%%%%%%%%%

\begin{abstract}

The problem of signal recovery from its Fourier transform magnitude is of paramount importance in various fields of engineering and has been around for over 100 years. Due to the absence of phase information, some form of additional information is required in order to be able to uniquely identify the  signal of interest. In this work, we focus our attention on discrete-time sparse signals (of length $n$). We first show that, if the DFT dimension is greater than or equal to $2n$, almost all signals with {\em aperiodic} support can be uniquely identified by their Fourier transform magnitude (up to time-shift, conjugate-flip and global phase). 

Then, we develop an efficient Two-stage Sparse Phase Retrieval algorithm (TSPR), which involves: (i) identifying the support, i.e., the locations of the non-zero components, of the signal using a combinatorial algorithm (ii) identifying the signal values in the support using a convex algorithm. We show that TSPR can {\em provably} recover most $O(n^{1/2-\eps})$-sparse  signals (up to a time-shift, conjugate-flip and global phase). We also show that, for most $O(n^{1/4-\eps})$-sparse signals, the recovery is {\em robust} in the presence of measurement noise. Numerical experiments complement our theoretical analysis and verify the effectiveness of TSPR.

\end{abstract}

%%%%%%%%%%%%%%%%%%%%%%%%%%%%%%%%%%%%%%%%%%%%%%%%%%

\section{Introduction}

In many physical measurement systems, the power spectral density of the signal, i.e., the magnitude square of the Fourier transform, is the measurable quantity. The phase information of the Fourier transform is completely lost, because of which signal recovery is difficult. Recovering a signal from its Fourier transform magnitude is known as phase retrieval \cite{patt1}. This recovery problem is one with a rich history and occurs in many areas of engineering and applied physics, including X-ray crystallography \cite{millane}, astronomical imaging \cite{dainty}, speech processing \cite{rabiner}, optics \cite{walther}, computational biology \cite{stef} and so on. 

%Suppose $\x = (x_0, x_1, ..., x_{n-1}, 0, 0, ... 0)$ is a discrete-time signal of length $n$ (the underlying signal is of length $n$, it is zero-padded with $n$ zeros).  Let $\y = (y_0,y_1,..., y_{2n-1})$ be its Fourier transform, i.e., 
%\begin{equation}
%\y=\F\x,
%\end{equation} 
%where $\F$ is the normalized $N$-DFT matrix. The phase retrieval problem can be mathematically stated as
%\begin{align}
%\label{PRf}
%&\textrm{find} \hspace{1cm} \x \hspace{1cm} \textrm{subject to} \hspace{1cm} |\y|=|\F\x|.
%\end{align}
%Since the magnitude square of the Fourier transform and the autocorrelation are Fourier pairs, the phase retrieval problem can be equivalently formulated as recovering a signal from its autocorrelation. Mathematically, the phase retrieval problem can be stated as: 
%\begin{align}
%\label{PRa}
%& \textrm{find} \hspace{1cm}  {\x} \hspace{1cm} \textrm{subject to}  \hspace{1cm} \A = \x \star \tilde{\x},
%\end{align}
%where $\tilde{\x}$ is the conjugate-flipped version of $\x$. 

%Throughout this work, we assume that the DFT dimension is greater than or equal to $2n$. In this case, we have

Let $\x = ( x_0, x_1, ..., x_{n-1} )$ be a complex signal and $\y$ be its Fourier transform. The phase retrieval problem can be mathematically stated as:
\begin{align}
\label{PRf}
&\textrm{find} \hspace{1cm} \x \hspace{1cm} \textrm{subject to} \hspace{1cm} |\y|=|\F\x|,
\end{align}
where $\F$ is the $n$-DFT matrix. For any given Fourier transform magnitude, the phase can be chosen from an $n$-dimensional set and distinct phases correspond to a distinct signals. Hence, the feasible set of (\ref{PRf}) is a manifold with $n$ dimensions because of which the phase retrieval problem is very ill-posed. Researchers have observed that zero padding the signal $\x$ with $n$ zeros and considering the $2n$-DFT greatly reduces the size of the feasible set. In this setup, the phase retrieval problem can be equivalently stated as the problem of recovering a signal from its autocorrelation, i.e.,
\begin{align}
\label{PRal}
& \textrm{find} \hspace{2cm}  {\x} \\
\nonumber & \textrm{subject to}  \hspace{1.15cm} a_i = \sum_{j=0}^{n-1-i} x_j x_{i+j}^\star: 0 \leq i \leq n-1,
\end{align} 
where $\A = ( a_0, a_1, ..., a_{n-1} )$ is the autocorrelation of the signal.

Observe that the operations of time-shift, conjugate-flip and global phase-change on the signal do not affect the autocorrelation (and hence the Fourier transform magnitude), because of which there are {\em trivial ambiguities}. Signals obtained by these operations are considered equivalent, and in most applications it is good enough if any equivalent signal is recovered. For example, in astronomy, where the underlying signal corresponds to stars in the sky, or in X-ray crystallography, where the underlying signal corresponds to atoms or molecules in a crystal, equivalent solutions are equally informative (\cite{dainty, millane}). Throughout this work, when we refer to unique recovery, it is assumed to be up to a time-shift, conjugate-flip and global phase.

It is well-known that (\ref{PRal}) can have up to $2^n$ non-equivalent solutions (a detailed discussion is provided in Appendix \ref{appA}). Hence, prior information on the signal of interest is necessary in order to be able to uniquely identify it. In this work, we assume that the signals of interest are sparse (i.e., the number of non-zero entries are much lesser than $n$), a property which is true in many applications of the phase retrieval problem. For example, astronomical imaging deals with sparsely distributed stars \cite{dainty}, electron microscopy deals with sparsely distributed atoms or molecules \cite{millane} and so on. 

The sparse phase retrieval problem can be mathematically stated as:
\begin{align}
\label{SPR0}
& \textrm{minimize} \hspace{1.2cm}  {||\x||_0} \\
\nonumber & \textrm{subject to}  \hspace{1.15cm} a_i = \sum_{j=0}^{n-1-i} x_j x_{i+j}^\star: 0 \leq i \leq n-1.
\end{align} 

\subsection{Contributions}

In this work, we first show that almost all signals with aperiodic support (defined in Section \ref{sec:ident}) can, in theory, be uniquely recovered by solving (\ref{SPR0}). In other words, if the signal of interest is known to have aperiodic support, we show that the sparse phase retrieval problem is almost surely well-posed.

We then develop the TSPR algorithm to {\em efficiently} solve (\ref{SPR0}), and provide the following recovery guarantees: (i) most $O(n^{1/2-\eps})$-sparse signals can be recovered uniquely by TSPR (ii) most $O(n^{1/4-\eps})$-sparse signals can be recovered robustly by TSPR when the measurements are corrupted by additive noise. 

%To the best of our knowledge, this is the first work that provides an efficient algorithm with theoretical guarantees for the sparse phase retrieval problem.

{\em Remark:} Characterizing the set of signals that can be recovered by TSPR is a difficult task and hence, we use a probabilistic approach. Also, we would like to emphasize that the theoretical guarantees we provide for TSPR are asymptotic in nature.

\subsection{Related Work}

The phase retrieval problem has challenged researchers for over $100$ years, and a considerable amount of research has been done. The Gerchberg-Saxton algorithm \cite{gerchberg} was the first popular method to solve this problem when certain time domain constraints are imposed on the signal (sparsity can be considered as one such constraint). The algorithm starts by selecting a random Fourier phase, and then alternately enforces the time-domain constraints specific to the setup and the observed frequency-domain measurements. Fienup, in his seminal work \cite{fienup}, proposed a broad framework for such iterative algorithms. \cite{bauschke} provides a theoretical framework to understand these algorithms, which are in essence an alternating projection between a convex set and a non-convex set. The problem with such an approach is that convergence is often to a local minimum, and hence chances of successful recovery are minimal. 

Recently, attempts have been made by researchers to exploit the sparse nature of the underlying signals. \cite{vetterli} proposes an alternating projection-based heuristic to solve the sparse phase retrieval problem. \cite{mukherjee} explores the traditional iterative algorithm with additional sparsity constraints. Semidefinite relaxation based heuristics were explored by several researchers (see \cite{candespr, eldar, kishore}). In \cite{eldar2}, a greedy-search method was explored to solve the sparsity-constrained optimization problem. In \cite{fannjiang, afonso, candesn}, the idea of using masks to obtain more information about the signal is explored.

We would like to note that a considerable amount of literature is available on the ``generalized" phase retrieval problem, which can be stated as:
\begin{align}
\label{GPRf}
&\textrm{find} \hspace{1cm} \x \hspace{1cm} \textrm{subject to} \hspace{1cm} |\y|=|\mathbf{A}\x|,
\end{align}
where $\mathbf{A}$ is a matrix with randomly chosen entries (see \cite{candespl, eldar3, sanghavi, ohlsson, schniter, samet, li}). We would like to emphasize here that, while in appearance, (\ref{GPRf}) is similar to the classic Fourier phase retrieval problem, the Fourier phase retrieval problem is more challenging due to the inherent structure of the DFT matrix. In particular, due to the trivial ambiguities (time-shift and conjugate-flip), standard convex relaxation methods do not work (a detailed discussion is provided in Section \ref{sec:TSPR}).

%%%%%%%%%%%%%%%%%%%%%%%%%%%%%%%%%%%%%%%%%%%%%%%%%%

\section{Identifiability}

\label{sec:ident}

%Suppose $X(z)$ and $A(z)$ are the $z$-transforms of the signal $\x$ and its autocorrelation $\A$ respectively. We have
%\begin{equation}
%\label{autocorrz}
%\nonumber A(z)=X(z)X(z^{-1})
%\end{equation}
%Since $\x$ is real valued, $X(z)$ is a polynomial in $z$ with real coefficients and hence its zeros occur in conjugate pairs. Also, since $A(z)=A(z^{-1})$, if $z_0$ is a zero of $A(z)$, then $z_0^{-1}$ is also a zero. Hence, the zeros of $A(z)$ appear in quadruples of the form $(z_0,z_0^\star, z_0^{-1}, z_0^{-\star})$. 

%The extraction of $\x$ from $\A$, or equivalently $X(z)$ from $A(z)$, corresponds to assignment of the zeros of $A(z)$ to $X(z)$ and $X(z^{-1})$.  In particular, for every quadruple $(z_0,z_0^\star, z_0^{-1}, z_0^{-\star})$, we can either assign $(z_0,z_0^\star)$ to $X(z)$ and $(z_0^{-1}$,$z_0^{-\star})$ to $X(z^{-1})$, or assign $(z_0^{-1}$,$z_0^{-\star})$  to $X(z)$ and $(z_0,z_0^\star)$ to $X(z^{-1})$, because of which, (\ref{PRa}) can have up to $2^{n/2}$ non-equivalent solutions (see \cite{kailath}). 

In this section, we present our identifiability results for the sparse phase retrieval problem (\ref{SPR0}).

{\em Definition}: A signal is said to have periodic or aperiodic support if the locations of its non-zero components are uniformly spaced or not uniformly spaced respectively.

For example: Consider the signal $\x = ( x_0, x_1, x_2, x_3, x_4 )$ of length $n=5$.
\begin{enumerate}[(i)]
\item Aperiodic support: $\{i | x_i \neq 0\} = \{0 , 1 , 3\}$, $\{1, 2, 4\}$.
\item Periodic support:  $\{i | x_i \neq 0\} = \{0, 2, 4\}$, $\{0, 1, 2, 3, 4\}$. 
\end{enumerate}

\begin{thm}
\label{uniquethm}
Let $\mathcal{S}_k$ represent the set of all $k$-sparse signals with aperiodic support, where $3 \leq k \leq n-1$. Almost all signals in $\mathcal{S}_k$ can be uniquely recovered by solving (\ref{SPR0}).
\end{thm}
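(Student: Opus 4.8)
\emph{Reduction to zero-flipping.} The plan is to pass to the $z$-transform, where the phase-retrieval solutions are ``zero-flips'' of one another, reduce the $\ell_0$ statement to the non-vanishing of elementary symmetric functions of the zeros, and show that aperiodicity of the support is precisely the hypothesis ruling out the degenerate configurations. Concretely: using the time-shift and global-phase ambiguities, normalize so that $0=\min\{i:x_i\neq0\}$; put $T=\mathrm{supp}(\x)$, $k=|T|$, $d=\max T$, and $P(z)=\sum_{i\in T}x_iz^{i}$, a degree-$d$ polynomial with $P(0)\neq0$ and roots $\beta_1,\dots,\beta_d$. The data, i.e.\ the autocorrelation $a_0,\dots,a_{n-1}$, is equivalent to knowledge of the polynomial $A(z)=P(z)\widehat P(z)$, where $\widehat P(z)=\sum_{i\in T}\overline{x_i}\,z^{d-i}$; this $A$ has degree $2d$, constant term $x_0\overline{x_d}\neq0$, and roots $\beta_1,\dots,\beta_d,1/\overline{\beta_1},\dots,1/\overline{\beta_d}$. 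Two routine facts: (i) outside a proper real-algebraic subset of $\mathbb C^{T}$ the $\beta_i$ are distinct, nonzero, and the $2d$ numbers $\{\beta_i\}\cup\{1/\overline{\beta_i}\}$ are pairwise distinct; (ii) because $a_d=x_0\overline{x_d}\neq0$ is the extreme nonzero autocorrelation lag, any signal with the same data has support of diameter exactly $d$, hence --- after its own normalization --- its $z$-transform is $P_S(z)=\prod_{i\in S}(z-\beta_i)\prod_{i\notin S}(z-1/\overline{\beta_i})$ for a unique $S\subseteq\{1,\dots,d\}$, recovered by matching the root multisets of $A$. Here $S=\{1,\dots,d\}$ returns $\x$ and $S=\emptyset$ returns its conjugate-flip, while every other $S$ gives a genuinely non-equivalent signal (under the genericity of (i)). Hence $\x$ is the unique minimizer of (\ref{SPR0}), up to the trivial ambiguities, iff $\|\x_S\|_0>k$ for every $S\notin\{\emptyset,\{1,\dots,d\}\}$, where $\x_S$ denotes the signal with $z$-transform $P_S$.

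\emph{From sparsity to genericity.} The coefficient of $z^{d-m}$ in $P_S$ is $\pm e_m$ of the flipped zero multiset $\{\beta_i:i\in S\}\cup\{1/\overline{\beta_i}:i\notin S\}$; since $e_0$ and $e_d$ never vanish, $\|\x_S\|_0\le k$ is equivalent to the vanishing of at least $d+1-k$ of $e_1,\dots,e_{d-1}$. Fixing a nontrivial $S$ and a target support pattern $\Pi$ of size $k$, this is a finite system of polynomial equations in $(\beta_1,\dots,\beta_d)$ and their conjugates, so --- pulling back along the algebraic map $\x\mapsto(\beta_i)$ --- it cuts out a real-algebraic subset $V_{S,\Pi}\subseteq\mathbb C^{T}$. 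There are finitely many pairs $(S,\Pi)$, so the ``bad'' set $\bigcup_{S,\Pi}V_{S,\Pi}$ is Lebesgue-null as soon as each $V_{S,\Pi}$ is a \emph{proper} subset of $\mathbb C^{T}$; granting this, for almost every $\x\in\mathcal S_k$ every nontrivial flip has $\|\x_S\|_0\ge k+1>k=\|\x\|_0$, which --- after a union over the finitely many aperiodic supports $T$ (the range $3\le k\le n-1$ being exactly where such supports exist) --- is the theorem.

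\emph{The crux.} What remains is: if $T$ is aperiodic and $S$ nontrivial, then \emph{some} signal with support $T$ has $\|\x_S\|_0>k$. A dimension count is suggestive: writing $P=QR$ with $\deg Q=|S|$ and letting $\widetilde R$ be the conjugate-reversal of $R$, one has $P_S\propto Q\widetilde R$, so the signals with $\|\x_S\|_0\le k$ and pattern $\Pi$ form the image of $\{(Q,R):QR\in\mathbb C^{T},\ Q\widetilde R\in\mathbb C^{\Pi}\}$ under $(Q,R)\mapsto QR$, whose expected dimension is $(d+1)-2(d+1-k)=2k-d-1<k$ (aperiodicity forces $T\neq\{0,\dots,d\}$, hence $k\le d$). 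But this heuristic collapses for a periodic support such as $T=\{0,2,4\}$, where $P(z)=G(z^{2})$ for a quadratic $G$ and conjugate-reciprocating a proper factor of $G$ ``blockwise'' produces a distinct signal of the same sparsity: there the defining equations become dependent. So the real content of this step is to show that the relevant Jacobian has full rank at some (hence a generic) point exactly when $T$ has no such periodic structure. I would do this by specializing the $\beta_i$ to values for which $P_S$ is computable in closed form and visibly has full support, the only obstruction being $T$ an arithmetic progression --- equivalently, by an induction that peels extreme elements off the support of the autocorrelation. This rigidity lemma is where essentially all the difficulty lies; everything before it is bookkeeping with genericity.
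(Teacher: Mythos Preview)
Your framework is essentially the paper's: both pass to the spectral factorization (your $P=QR$, $P_S\propto Q\widetilde R$ is exactly the paper's Lemma in Appendix~A, $\x_1\equiv\g\star\h$, $\x_2\equiv\g\star\tilde\h$), and both reduce to a dimension/measure-zero count over the finitely many choices of factor degree and target support pattern. Your bookkeeping in the first two paragraphs is fine.

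The genuine gap is your third paragraph. You correctly isolate the crux --- for aperiodic $T$ and nontrivial $S$, the variety $V_{S,\Pi}$ is proper --- and you correctly note that the naive codimension count $2(d+1-k)$ can collapse (your $T=\{0,2,4\}$ example). But you then stop, offering only ``specialize the $\beta_i$'' or ``peel off extreme elements'' as strategies without carrying either out. This is not a minor omission: as you yourself say, this is where essentially all the difficulty lies, and neither proposed strategy is obviously going to work. Specializing roots to make $P_S$ explicitly computable while simultaneously forcing $P$ to have support \emph{exactly} $T$ is delicate, since the support constraint on $P$ is itself $d+1-k$ algebraic conditions on the $\beta_i$.

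The paper fills this gap by a direct rank analysis rather than by exhibiting a witness. Writing the sparsity constraints on $\g\star\h$ and $\g\star\tilde\h$ as bilinear equations $\HH\g=0$, it observes that the relevant submatrices are (lower/upper) triangular in $\h$ with $h_0=1$ and $h_{l_x-l}\neq 0$ on the diagonal, which pins down their ranks exactly. The only way the extra constraints from the second signal fail to drop the dimension is if certain $2\times2$ minors of $\HH$ all vanish; chasing these vanishing conditions forces specific entries of $\g$ and $\h$ to be zero, and then a short combinatorial argument shows that $\g\star\h$ with those zero patterns has exactly $k$ nonzero entries only when the support is an arithmetic progression. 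That last combinatorial step --- comparing the sumset $\{u_i\}+\{v_j\}$ to its ``boundary'' $\{u_0,\dots,u_{a-1},u_{a-1}+v_1,\dots\}$ --- is the concrete content you are missing, and it is not recoverable from your dimension heuristic alone.
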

\begin{proof}
The proof technique we use is popularly known in literature as {\em dimension counting}. Since $\mathcal{S}_k$ represents the set of all $k$-sparse signals with aperiodic support, it is a manifold with $2k$ degrees of freedom (each non-zero location has $2$ degrees of freedom, as the value can be complex). We show that the set of signals in $\mathcal{S}_k$ which cannot be uniquely recovered by solving (\ref{SPR0}) is a manifold with degrees of freedom less than or equal to $2k-1$ and hence, almost all signals in $\mathcal{S}_k$ can be uniquely recovered by solving (\ref{SPR0}). The details are provided in Appendix \ref{appA}.
\end{proof}

Signals with sparsity $k \leq 2$ can always be recovered by solving (\ref{SPR0}) (the quadratic system of equations can be solved trivially).

{\em Remark:} Sparse signals with periodic support can be viewed as an oversampled version of a signal which is not sparse. The sparse phase retrieval problem (\ref{SPR0}) reduces to the phase retrieval problem (\ref{PRal}), and hence these signals cannot be uniquely recovered from their autocorrelation without further assumptions. For a detailed discussion, we refer the readers to Section II in \cite{vetterli}.

%%%%%%%%%%%%%%%%%%%%%%%%%%%%%%%%%%%%%%%%%%%%%%%%%%

\section{Two-stage Sparse Phase Retrieval (TSPR)}

\label{sec:TSPR}

In this section, we discuss the drawbacks of the standard convex relaxation-based approaches to solve (\ref{SPR0}) and then develop TSPR.

It is well known that $l_0$-minimization is NP-hard in general, hence (\ref{SPR0}) is difficult to solve. Convex relaxation-based approaches have enjoyed some success in solving quadratically-constrained problems. The convex relaxation for such problems can be obtained by a procedure popularly known as {\em lifting}: Suppose we embed (\ref{SPR0}) in a higher dimensional space using the transformation $\X = \x\x^\star$, the problem can be equivalently written as: 
\begin{align}
\label{SPRL}
 &\textrm{minimize} \hspace{1cm}    ||\X||_0\\
& \nonumber \textrm{subject to} \hspace{1cm}  {a_i}={trace}(\mathbf{A}_i\X):   ~~  0 \leq i \leq n-1\\
& \nonumber \hspace{2.4cm} \X \succcurlyeq 0  \quad \& \quad  rank(\X)=1,
 \end{align}
where the matrices $\mathbf{A}_i$ are given by
\begin{equation*}
[\mathbf{A}_i]_{gh}=\begin{cases}
1 & \textrm{if} \quad |h-g|=i =0 \\
1/2 & \textrm{if}  \quad |h-g|=i \neq 0\\
0 & \textrm{otherwise}.
\end{cases}
\end{equation*}

Researchers have explored many convex approaches to solve such problems. $l_1$-minimization \cite{candesl0} is known to promote sparse solutions  and nuclear norm minimization \cite{fazel1} (or, equivalently, trace minimization for positive semidefinite matrices) is known to promote low rank solutions. Since the solution we desire is both sparse and low rank, a natural approach would be to solve: 
\begin{align}
\label{SPRR}
&\textrm{minimize} \hspace{1cm}  trace(\X) +  \lambda ||\X||_1 \\
& \nonumber \textrm{subject to}\hspace{1cm}  {a_i}={trace}(\mathbf{A}_i\X): ~~  0 \leq i \leq n-1\\
& \nonumber \hspace{2.4cm} \X \succcurlyeq 0,
\nonumber \end{align}
for some regularizer $\lambda$, and hope that the resulting solution is both sparse and rank one. While this relaxation is a powerful tool when the measurement matrices are random (for example, the generalized phase retrieval setup (\ref{GPRf})), it fails in the phase retrieval setup. 

This does not come as a surprise as the issue of trivial ambiguities (due to time-shift and conjugate-flip) is still unresolved. If $\X_{0}=\x_0\x^\star_0$ is the desired sparse solution, then $\tilde{\X}_0=\tilde{\x}_0\tilde{\x}^\star_0$, where $\tilde{\x}_0$ is the conjugate-flipped version of $\x_0$, $\X_i=\x_i\x_i^\star$, where $\x_i$ is the signal obtained by time-shifting $\x_0$ by $i$ units, and $\tilde{\X}_i=\tilde{\x}_i\tilde{\x}_i^\star$, where $\tilde{\x}_i$ is the signal obtained by time-shifting $\tilde{\x}_0$ by $i$ units are also feasible with the same objective value as $\X_0$. Since (\ref{SPRR}) is a convex program, any convex combination of these solutions are also feasible and have an objective value less than or equal to that of $\X_0$, because of which the optimizer is neither sparse nor rank one. One approach to break this symmetry would be to solve a weighted $l_1$ minimization problem, which can potentially introduce a bias towards a particular equivalent solution. Numerical simulations suggest that this approach does not help in the phase retrieval setup. 

Many iterative heuristics have been proposed to solve (\ref{SPRR}). In \cite{candespr},  log-det function is used as a surrogate for rank (see \cite{fazel2}).  In \cite{eldar},  the solution space is iteratively reduced by calculating bounds on the support of the signal.  Reweighted minimization (see \cite{candesw}) is explored in \cite{kishore2}, where the weights are chosen based on the solution of the previous iteration. While these methods enjoy empirical success, no theoretical guarantees were provided for their behavior. 

The time-shift and time-reversal ambiguities stem from the fact that the support of the signal is not known. Therefore, let us momentarily assume that we somehow know the support of the signal (denoted from now on by $V$, which is the set of locations of the non-zero components of $\x$), (\ref{SPRR}) can be reformulated as
\begin{align}
\label{SPRS}
 &\textrm{minimize} \hspace{1cm} trace(\X) +  \lambda ||\X||_1 \\
\nonumber & \textrm{subject to} \hspace{1cm} a_i =trace(\mathbf{A}_i\X): ~~  0 \leq i \leq n-1\\
\nonumber & \hspace{2.4cm} X_{ij} = 0 ~  \textrm{if}  ~  \{i,j\} \notin V\\
\nonumber & \hspace{2.4cm} \X \succcurlyeq 0.
\end{align} 
Figure \ref{sim0} plots the probability of the solution to (\ref{SPRS}) (with $\lambda = 0$) being rank one against various sparsities $k$ for $n=32,64,128,256$. For a given $n$ and $k$, the $k$ non-zero locations were chosen uniformly at random and the signal values in the support were chosen from an i.i.d Gaussian distribution. It can be observed that (\ref{SPRS}) recovers the signal with very high probability as long as the sparsity satisfies $k \lesssim n/2$\footnote{This is an empirical observation. In this work, we provide recovery guarantees only for sparsities up to $O(n^{1/2-\eps})$.}. This observation suggests a two-stage algorithm: one where we first recover the support of the signal and then use it to solve (\ref{SPRS}).
\begin{figure}
\begin{center}
\includegraphics[scale=0.45]{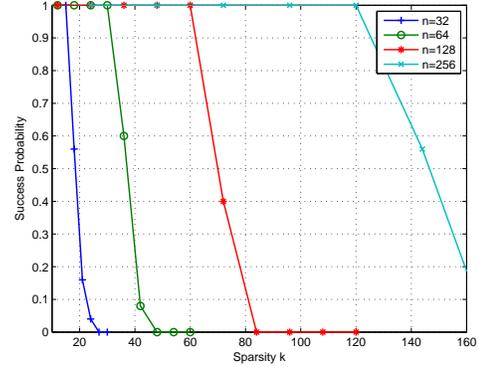}  
\end{center}
\caption{Probability of successful signal recovery of (\ref{SPRS}) (with $\lambda = 0$) for various sparsities for $n = 32,64,128,256$.}
\label{sim0}
\end{figure}
\begin{algorithm}[H]
\caption{Two-stage Sparse Phase Retrieval (TSPR)}
\textbf{Input:} Autocorrelation $\A$ of the signal of interest\\
\textbf{Output:} Sparse signal $\x$ which has an autocorrelation $\A$
\begin{enumerate}[(i)]
\item Recover $V$ using Algorithm \ref{support1}
\item Recover $\x$ by solving (\ref{SPRS}) with $\lambda=0$
\end{enumerate}
\label{SPRalgo}
\end{algorithm}

\begin{thm}
TSPR can recover sparse signals from their autocorrelation with probability $1-\delta$ for any $\delta>0$ if
\begin{enumerate}[(i)]
\item Support chosen from i.i.d $Bern\left(\frac{s}{n}\right)$ distribution
\item  $s =O(n^{\frac{1}{2} - \eps})$, $s$ is an increasing function of $n$
\item $n$ is sufficiently large
\item Signal values chosen from a continuous i.i.d distribution
\end{enumerate}
\label{tsprthm}
\end{thm}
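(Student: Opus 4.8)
\emph{Proof plan.} I would decompose the analysis along the two stages of TSPR. Let $\mathcal{A}$ be the event that Stage (i) outputs the true support $V$ up to the trivial shift/reversal ambiguity, and $\mathcal{B}$ the event that Stage (ii), fed the correct $V$, outputs $\x$. It suffices to prove $\Pr[\mathcal{A}^c]\le\delta/2$ and $\Pr[\mathcal{B}^c\mid\mathcal{A}]\le\delta/2$ once $n$ is large. A preliminary reduction applies to both stages: conditioning on the random support $V$ and using that the nonzero values are i.i.d.\ from a continuous distribution, with probability one there is no cancellation in the autocorrelation, so the support of $\A$ is exactly $\{0\}\cup\{v-v':v,v'\in V,\ v>v'\}$, the nonnegative difference set of $V$; likewise no exceptional linear relations hold among the $a_i$. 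Thus everything reduces to combinatorial properties of the random set $V$. I will freely use that $|V|\sim\mathrm{Binomial}(n,s/n)$ concentrates near $s$ and that $V$ is spread out, $\max V-\min V=n-o(n)$, with high probability.

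\emph{Stage (i), event $\mathcal{A}$.} Algorithm~\ref{support1} must reconstruct $V$, up to shift and reversal, from its difference set $D$. The skeleton of the argument is: the largest lag equals $w:=\max V-\min V$, so after shifting we may assume $0,w\in V\subseteq[0,w]$, with reversal interchanging $0$ and $w$; every other element $v$ of $V$ lies in the candidate set $\mathcal{C}:=\{c\in(0,w):c\in D,\ w-c\in D\}$; and the algorithm then prunes the spurious members of $\mathcal{C}$ by checking consistency of pairwise differences against $D$ and reassembles $V$. The probabilistic heart is to show, by union/moment bounds, that with probability $\ge1-\delta/2$: (a) $\mathcal{C}$ does not contain too many spurious elements; (b) every spurious candidate is globally inconsistent with $D$, hence eliminated; and (c) $V$ has no pathological sub-configuration (a long near-arithmetic-progression, or a cluster of high-multiplicity lags) that would make the reconstruction non-unique. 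Each of these is controlled by estimating counts of tuples of elements of $V$ realizing a prescribed difference pattern, which are polynomially small in $n$ throughout the range $s=O(n^{1/2-\eps})$.

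\emph{Stage (ii), event $\mathcal{B}$.} Given the correct $V$, problem (\ref{SPRS}) with $\lambda=0$ is the feasibility problem of finding $\X\succcurlyeq0$ supported on $V\times V$ with $a_i=\mathrm{trace}(\mathbf{A}_i\X)$ for all $i$, and I must show its only solution is $\X_0=\x\x^\star$. Group the pairs $\{v',v\}\subseteq V$ by their gap $v-v'$: a gap of multiplicity one pins the corresponding entry of $\X$ exactly to $x_{v'}\bar x_v$, while a colliding gap pins only the sum of the corresponding entries. In the regime $s=O(n^{1/2-\eps})$ only a vanishing fraction of gaps collide, so the ``isolated-gap'' graph $G^\star$ on $V$ is, with probability $\ge1-\delta/2$, nearly complete and in particular connected and spanning. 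It remains a deterministic rigidity step: writing $\Delta=\X-\X_0$, the matrix $\Delta$ is Hermitian, supported on $V\times V$, vanishes on the edges of $G^\star$, has zero trace and zero sum over each colliding gap, and satisfies $\X_0+\Delta\succcurlyeq0$; since $\X_0$ is rank one with all $x_v\neq0$, these constraints force $\Delta=0$. A union bound over $\mathcal{A}$ and $\mathcal{B}$ completes the proof.

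\emph{Main obstacle.} The difficulty lives in Stage (i) in the range $n^{1/4-\eps}\lesssim s\lesssim n^{1/2-\eps}$. If $s=O(n^{1/4-\eps})$ the pairwise differences of $V$ are all distinct with high probability, the candidate set $\mathcal{C}$ equals $V\setminus\{0,w\}$ exactly, and reconstruction is immediate --- this is why the noisy statement is confined to that smaller exponent. For $s$ up to $n^{1/2-\eps}$, however, $|\mathcal{C}|$ is of order $s^4/n$, which dwarfs $|V|\approx s$: there are many spurious candidates, and the crux is to argue that each is inconsistent with the \emph{global} difference structure of $V$ (local difference checks alone do not suffice) and that Algorithm~\ref{support1} detects this in polynomial time. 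Driving the resulting union bounds down to an arbitrary constant $\delta$, rather than merely $o(1)$, is what forces the strict exponent $\tfrac12-\eps$ and the ``$n$ sufficiently large'' hypothesis.
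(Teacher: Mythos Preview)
Your two–stage decomposition into events $\mathcal{A}$ and $\mathcal{B}$ is exactly how the paper proceeds (it literally invokes Theorems~\ref{supportthm} and~\ref{signalthm}). But your sketch of each stage diverges from the paper's actual argument, and in Stage~(i) this diverges from the \emph{algorithm} you are meant to be analyzing.

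\textbf{Stage (i).} Algorithm~\ref{support1} does \emph{not} form your candidate set $\mathcal{C}=\{c:c\in D,\ w-c\in D\}$ anchored at the largest lag. Its first move is to read off $u_{01}=w_{K-1}-w_{K-2}$ (the \emph{smallest} gap in $U$), then form $Z=\{0\}\cup\big(W\cap(W+u_{01})\big)$. For $s=O(n^{1/4-\eps})$ one shows directly (Lemma~\ref{14lem}) that $Z=V$ with high probability, matching your easy regime. The hard regime $n^{1/4}\lesssim s\lesssim n^{1/2}$ is handled not by ``global consistency checks'' but by two concrete devices you do not mention: a \emph{Graph Step} on $G(Z,W)$, which certifies that the $t=\sqrt[3]{\log s}$ smallest elements of $Z$ having an edge to $u_{0,k-1}$ are genuinely in $U$ (Lemma~\ref{glem}), followed by a $t$-fold \emph{Intersection Step} $\bigcap_{p=1}^t(W+u_{0p})$ whose survival probability for a spurious integer is driven down to roughly $s(s^2/n)^{t^{1/3}}$ (Lemma~\ref{intt}), small enough to union-bound over all $n$ integers when $s=O(n^{1/2-\eps})$. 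Your ``each spurious candidate is globally inconsistent with $D$'' is the right intuition, but the proof needs the Graph Step to \emph{find} additional anchors $u_{02},\dots,u_{0t}$, and the multiple-intersection bound to exploit them; without these, the union bound over $\sim s^4/n$ spurious candidates does not close.

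\textbf{Stage (ii).} Your rigidity route is genuinely different from the paper's. The paper passes to a relaxation that keeps only the isolated-gap constraints and the $2\times 2$ PSD minors, then uses (a) Lemma~\ref{largedeg} to get $d_{\min}(H(U))\ge k(1-1/t)$, (b) Hajnal--Szemer\'edi to tile $H(U)$ by cliques of size $t$, (c) a trace-minimization matrix-completion lemma (Lemma~\ref{matcomp}) on each clique to pin the diagonal, and (d) a Hamiltonian cycle in $H(U)$ to propagate the first off-diagonal and force the rank-one completion. Your direct argument (``$\Delta$ vanishes on $G^\star$, has zero trace, and $\X_0+\Delta\succeq 0$, hence $\Delta=0$'') hides exactly the step the paper works hardest on: the diagonal of $\Delta$ is constrained only by $\mathrm{tr}(\Delta)=0$ and the $2\times 2$ minors, and showing these force $\Delta_{ii}=0$ is precisely the content of Lemma~\ref{matcomp}. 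Your approach can likely be completed, but as written the rigidity step is an assertion, not an argument.
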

\begin{proof}
This is a direct consequence of Theorem \ref{supportthm} and \ref{signalthm}.
\end{proof}

{\em Remark:} Theorem \ref{tsprthm} also holds when the support is chosen uniformly at random if the  sparsity $k = O(n^{\frac{1}{2}-\eps})$ is an increasing function of $n$, and $n$ is sufficiently large. 

\subsection{Support Recovery}

%In this subsection, we develop an algorithm to recover the support of a sparse signal from the support of its autocorrelation. We relate this recovery problem to a problem well-known in literature as the turnpike problem (\cite{skiena, dakic, kishore4}). The proposed algorithm can hence be used to solve the ``sparse random" turnpike problem.

%It is often useful to be able to reconstruct the support of the signal $V$ from the (support of the) autocorrelation $W$. In many applications (e.g, astronomy), the signal's support  is the desired information. In other applications, support knowledge makes the signal reconstruction process  significantly easier \cite{kishore3}.

Consider the problem of recovery of the support of the signal $V$ from the support of the autocorrelation (denoted from now on by $W$). We will assume that if $a_i=0$, then no two elements in $\x$ are separated by a distance $i$, i.e.,
\begin{equation}
\nonumber a_i=0 \Rightarrow x_jx_{i+j}^\star=0 \ \forall \ j.
\end{equation}
This holds with probability one if the non-zero components of the signal are chosen from a continuous i.i.d distribution. With this assumption, the support recovery problem can be stated as
\begin{align}
\label{supprec}
 & \textrm{find} \hspace{0.75cm} V  & \textrm{subject to}  \hspace{0.75cm} \{|i-j| ~| ~\{i,j\} \in V\}=W,
\end{align}
which is the problem of recovering an integer set from its pairwise distance set (also known as {\em Turnpike Problem}\footnote[2]{Many papers consider the problem of recovering a set of integers from the multiset of their pairwise distances, i.e., multiplicity of pairwise distances is known. We provide a solution without using multiplicity information.}).

For example, consider the set $V=\{2,5,13,31,44\}$. Its pairwise distance set is given by $W=\{0,3,8,11,13,18,26,29,31,39,42\}$. Turnpike problem (and (\ref{supprec})) is the problem of reconstruction of the set $V$ from the set $W$.

In \cite{skiena}, a backtracking-based algorithm is proposed to solve the turnpike problem. The algorithm needs multiplicity information of the pairwise distances which is not available in the phase retrieval setup, and is known to have a worst case exponential $O(2^k)$-complexity. In \cite{lemke}, a polynomial factorization-based algorithm with complexity $O(k^d)$ is proposed, where $d$ is the largest pairwise distance. \cite{dakic} provides a comprehensive summary of the existing algorithms for the turnpike problem. In the following part, we will develop a $O(k^4)$-complexity algorithm which can provably recover most $O(n^{\frac{1}{2}-\eps})$-sparse integer sets. 

Suppose $V=\{v_0,v_1,....,v_{k-1}\}$ is a set of $k$ integers and $W=\{w_0,w_1,....,w_{K-1}\}$ is its pairwise distance set\footnote[4]{The elements of $V$ and $W$ are assumed to be in ascending order without loss of generality for convenience of notation, i.e., $v_0 < v_1 <.... < v_{k-1}$ and  $w_0 < w_1 <.... < w_{K-1}$.}. 

If $V$ has a pairwise distance set $W$, then sets $c\pm V$ also have a pairwise distance set $W$ for any integer $c$, because of which there are trivial ambiguities. These solutions are considered equivalent,  we attempt to recover the equivalent solution $U=\{u_0,u_1,....,u_{k-1}\}$ defined as follows:
\begin{equation}
\nonumber U=\begin{cases}
& V-v_0 \quad \textrm{if} \quad v_1-v_0 \leq v_{k-1}-v_{k-2}\\
& v_{k-1}-V \quad \textrm{otherwise},
\end{cases}
\end{equation}
i.e., the equivalent solution set $U$ we attempt to recover has the following properties:
\begin{enumerate}[(i)]
\item $u_0=0$ 
\item $u_1 - u_0 \leq u_{k-1}-u_{k-2}$.
\end{enumerate}
Let $u_{ij}=|u_j-u_i|$ for $0 \leq i \leq j \leq k-1$. With this definition, $W=\{u_{ij}: 0 \leq i \leq j \leq k-1\}$ and $U=\{u_{0j}: 0 \leq j \leq k-1\}$. The reason for choosing to recover the equivalent solution $U$ is the following:  We have the property $U \subseteq W$. Algorithm \ref{support1}, in essence, crosses out all the integers in $W$ that do not belong to $U$ using two instances of {\em Intersection Step} and one instance of {\em Graph Step}.

\begin{algorithm}
\caption{Support Recovery: Combinatorial Algorithm}
\label{support1} 
\textbf{Input:} Pairwise distance set $W$ \\
\textbf{Output:} Integer set $U$ which has $W$ as its pairwise distance set
\begin{enumerate}[1.]
\item $u_{01} = w_{K-1} - w_{K-2}$
\item Intersection Step using $u_{01}$: get $Z = 0 \cup \left( W \cap (W+u_{01}) \right)$ 
\item Graph Step using $(Z, W)$:  get $\{u_{0p}: 0 \leq p \leq t=\sqrt[3]{\log(s)}\}$ (smallest $t+1$ integers which have an edge with $u_{0,k-1}$)
%\item Construct the set $W_{p}=W+u_{0p}$ for $1 \leq p \leq t$
%\item Calculate $U_{t}=$
\item Intersection Step using $\{u_{0p}: 1 \leq p \leq t\}$: get $U=\{u_{0p}:0\leq p \leq t-1\} \cup \left(W \cap \left(\bigcap_{p=1}^{t} (W + u_{0p})\right)\right)$
\end{enumerate}
\end{algorithm}

\subsubsection{Inferring $u_{01}$}

The largest integer in $W$  (i.e., $w_{K-1}$) corresponds to the term $u_{0,k-1}$ and the second largest integer in $W$ (i.e., $w_{K-2}$) corresponds to the term $u_{1,k-1}$ (due to $u_1 - u_0 \leq u_{k-1}-u_{k-2}$). Hence, $w_{K-1}-w_{K-2} = u_{0,k-1} - u_{1,k-1} = u_{01}$. Observe that $u_{01} = v_{01}$ if $v_1 - v_0 \leq v_{k-1}-v_{k-2}$ and $u_{01} = v_{k-2,k-1}$ otherwise.

\subsubsection{Intersection Step}

The key idea of this step can be summarized as follows: suppose we know the value of $u_{0p}$ for some $p$, then
%if $U_p$ and $W_p$ are defined as 
%\begin{equation}
%\nonumber U_p=\{u_{0j}: p \leq j \leq k-1\}  \quad \&  \quad W_p=W+u_{0p}
%\end{equation}
%then we have
\begin{equation}
\nonumber \{u_{0j}: p \leq j \leq k-1\}  \subseteq W \cap (W + u_{0p}),
\end{equation}
where the set $(W+u_{0p})$ is the set obtained by adding the integer $u_{0p}$ to each integer in the set $W$. This can be seen as follows: $u_{0j} \in W$ by construction for $0 \leq j \leq k-1$. $u_{pj} \in W$ by construction for $p \leq j \leq k-1$, which when added by $u_{0p}$, gives $u_{0j}$ and hence $u_{0j} \in (W + u_{0p})$ for $p \leq j \leq k-1$. 

The idea can be generalized to multiple intersections. Suppose we know $\{u_{0p}:1\leq p\leq t\}$, we can construct $\{(W + u_{0p}) :1\leq p\leq t\}$ and see that
\begin{equation} 
\nonumber \{u_{0j}: t \leq j \leq k-1\}  \subseteq W \cap \left(\cap_{p=1}^{t} (W + u_{0p})\right).
\end{equation}

The idea can also be extended to the case when we know the value of $u_{q, k-1}$ for some $q$:
\begin{equation}
\nonumber \{u_{j,k-1}: 0 \leq j \leq q\}  \subseteq W \cap (W + u_{q,k-1}),
\end{equation}
which can be seen as follows: $u_{j,k-1} \in W$ by construction for $0 \leq j \leq k-1$. $u_{jq} \in W$ by construction for $0 \leq j \leq q$, which when added by $u_{q,k-1}$, gives $u_{j,k-1}$ and hence $u_{j,k-1} \in (W + u_{q,k-1})$ for $0 \leq j \leq q$.

Consider the example $V=\{2,5,13,31,44\}$, $W=\{0,3,8,11,13,18,26,29,31,39,42\}$. We have $u_{01} = 3$, because of which $W_1 = \{3,6,11,14,16,21,29,32,34,42,45\}$ and hence $W \cap W_1 = \{ 3, 11, 29, 42\}$, which contains $\{u_{01}, u_{02}, u_{03}, u_{04}\}= \{ 3 , 11, 29, 42 \}$.

\subsubsection{Graph Step}

For an integer set $U$ whose pairwise distance set is $W$, consider any set $Z=\{z_0,z_1,..., z_{|Z|-1}\}$ which satisfies $U \subseteq Z \subseteq W$. Construct a graph $G(Z,W)$ with $|Z|$ vertices (each vertex corresponding to an integer in $Z$) such that there exists an edge between $z_i$ and $z_j$ iff the following two conditions are satisfied: 
\begin{enumerate}[(i)]
\item  $\forall z_g,z_h \in Z,  z_g-z_h\neq z_i-z_j \quad \textrm{unless} \quad  (i,j)=(g,h)$
\item $|z_i-z_j| \in W$,
\end{enumerate}
i.e., there exists an edge between two vertices iff their corresponding pairwise distance is unique and belongs to $W$. 

%For example, consider the integer set $U=\{0, 10, 15, 50\}$ whose pairwise distance set is $W=\{0, 5, 10, 15, 35, 40, 50\}$. Consider the set $Z=\{0, 10, 15, 35, 40, 50\}$. The graph $G(Z,W)$ looks as shown in Figure \ref{graph1}.
%\begin{figure}
%\begin{centering}
%\includegraphics[scale=0.4]{graph1.pdf}
%\caption{ $G(Z,W)$ for $Z=\{0, 10, 15, 35, 40, 50\}$ and $W=\{0, 5, 10, 15, 35, 40, 50\}$}
%\label{graph1}
%\end{centering}
%\end{figure}
%Note that there exists an edge between $10$ and $40$ as it is the only pair of integers with difference $30$ and $30 \in W$, there doesn't exist an edge between $0$ and $40$ as there is another pair $\{10, 50\}$ with difference $40$ and so on.

The main idea of this step is as follows: suppose we draw a graph $G(Z,W)$ where $U \subseteq Z \subseteq W$. If there exists an edge between a pair of integers  $\{z_i,z_j\} \in Z$, then $\{z_i,z_j\} \in U$. This holds because if $\{z_i,z_j\} \notin U$, then since $|z_i-z_j| \in W$ there has to be another pair of integers in $U$ (and hence in $Z$) which have a pairwise distance $|z_i-z_j|$. This would contradict the fact that an edge exists between $z_i$ and $z_j$ in $G(Z,W)$.

Consider the example $V=\{2,5,13,31,44\}$, $W=\{0,3,8,11,13,18,26,29,31,39,42\}$. Suppose we have $Z = \{ 0 , 3 , 8, 11, 29, 42\}$. There will be an edge between $\{11,42\}$ as they have a difference of $31$, which belongs to $W$ and there are no other integer pairs in $Z$ which have a difference of $31$. Hence, the only way a pairwise distance of $31$ in $W$ can be explained is if $\{11,42\}\in U$.

\begin{thm}
Algorithm \ref{support1} can recover the support of the signal from the support of the autocorrelation with probability greater than $1-\delta$ for any $\delta>0$ if 
\begin{enumerate}[(i)]
\item Support chosen from i.i.d $Bern\left(\frac{s}{n}\right)$ distribution
\item  $s =O(n^{\frac{1}{2} - \eps})$, $s$ is an increasing function of $n$
\item $n$ is sufficiently large
\item Signal values chosen from a continuous i.i.d distribution
\end{enumerate}
\label{supportthm}
\end{thm}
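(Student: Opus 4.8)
The plan is to condition on the random support $V$ and to split the analysis of Algorithm~\ref{support1} into a purely combinatorial \emph{soundness} part --- no step ever outputs an integer outside $U$ --- and a probabilistic \emph{completeness} part --- every step retains all the elements of $U$ it is meant to. Since the claimed failure probability is $o(1)$, this gives the theorem's ``$1-\delta$ for $n$ sufficiently large'' form. Throughout one works on a high-probability ``regularity'' event for $V$: standard first- and second-moment bounds give $k=|V|=\Theta(s)$, control the extreme order statistics of $V$ (in particular $u_t=O((n/s)\log n)$, so the two ``extreme windows'' $[0,u_t]$ and $[u_{k-1}-u_t,u_{k-1}]$ have length $o(n)$, are disjoint, and contain $O(\log n)$ sites of $V$ each), and guarantee that $V$ has aperiodic support (so Theorem~\ref{uniquethm} applies). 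Each of these fails with probability $o(1)$ and is union-bounded at the end.

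\emph{Soundness.} Step~1 is deterministic given the ordering convention, as already noted. For Step~2, the inclusion $U\subseteq\{0\}\cup\bigl(W\cap(W+u_{01})\bigr)=Z\subseteq W$ is exactly what the Intersection-Step discussion establishes. For Step~3 the key deterministic fact is that \emph{no spurious vertex can be an edge-neighbour of} $w_{K-1}=u_{0,k-1}$: if $z\in Z\setminus U$ and $w_{K-1}-z\in W$, write $w_{K-1}-z=u_b-u_a$ with $a<b$; then $u_{0a},u_{0b}\in U\subseteq Z$ realize the same difference through a pair distinct from $\{z,w_{K-1}\}$ (as $z\notin U$), so the uniqueness requirement for an edge fails. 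Hence every edge-neighbour of $w_{K-1}$ lies in $U$, so the $t+1$ smallest edge-neighbours are precisely $u_{00},\dots,u_{0t}$ --- provided each $u_{0j}$ with $j\le t$ really is an edge-neighbour, which the completeness part supplies. For Step~4, $\{u_{0j}:t\le j\le k-1\}\subseteq W\cap\bigcap_{p=1}^{t}(W+u_{0p})$ is the multiple-intersection inclusion and the remaining $u_{0j}$ are appended explicitly, so $U$ is contained in the output; combined with the above, the output equals $U$ exactly once the two completeness events below hold.

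\emph{Completeness.} (i) For each $j\le t$ the near-diameter distance $u_{j,k-1}=u_{k-1}-u_j$ must be realized \emph{uniquely} inside $Z$. Any competing pair in $Z$ at this distance is forced to have one endpoint within $u_t$ of $0$ and the other within $u_t$ of $u_{k-1}$; since $V$ has only $O(\log n)$ sites in each extreme window, $W$ has only $O(\log^2 n)$ differences in the relevant near-diameter range, so the outer endpoint of a competing pair can take only $O(\log^2 n)$ values, each of which pins down the inner endpoint, and a pinned-down integer is again a difference of $V$ with probability $O(s^2/n)$; a union bound over the $\le t+1$ values of $j$ and these candidates gives failure probability $O\!\bigl(t\,s^2\log^2 n/n\bigr)=o(1)$. (ii) No spurious $z$ survives Step~4: $z\in W\cap\bigcap_{p=1}^{t}(W+u_{0p})$ with $z\notin U$ forces $z$ and $z-u_{01},\dots,z-u_{0t}$ to \emph{all} be differences of $V$, so a fixed candidate survives with probability roughly $\bigl(|W|/n\bigr)^{t}=O(n^{-2\eps t})$, while there are at most $|W|=O(k^2)=O(n^{1-2\eps})$ candidates, for a union bound of roughly $O\!\bigl(n^{1-2\eps(t+1)}\bigr)$. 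This tends to $0$ as soon as $t>\tfrac{1}{2\eps}-1$, which holds for all large $n$ because $t=\sqrt[3]{\log s}\to\infty$. This is precisely why $s=O(n^{1/2-\eps})$ is assumed --- it makes $|W|=O(n^{1-2\eps})$, leaving enough ``room'' --- and why $t$ is chosen unbounded but slowly growing: slow enough that Step~3 still has to pin down only $t+1=o(\log n)$ elements.

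\emph{Main obstacle.} The genuinely delicate work is the probabilistic bookkeeping in (ii), and to a lesser degree (i): because $W=V-V$ is a structured difference set rather than an i.i.d.\ random set, the events ``$z-u_{0p}\in W$'' are correlated, so the informal $\bigl(|W|/n\bigr)^{t}$ bound cannot be obtained by multiplying probabilities. One has to enumerate the collision patterns --- which index-pairs of $V$ realize each of the $t+1$ required differences --- and separate the non-degenerate patterns, which involve enough distinct sites of $V$ to yield the stated decay in the $\mathrm{Bern}(s/n)$ model, from the degenerate ones, which share indices (hence fewer free sites) but are correspondingly far fewer. Carrying this through uniformly in $z$ and in the patterns, and then assembling it with all the regularity properties of $V$ on a single high-probability event, is the bulk of the argument; the soundness facts, by contrast, are essentially immediate.
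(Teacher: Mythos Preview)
Your decomposition into soundness and completeness is clean and matches the paper's strategy; the soundness claims are correct (in particular your observation that no $z\in Z\setminus U$ can be an edge-neighbour of $u_{0,k-1}$ is exactly the deterministic content of the Graph Step), and you have correctly located the two probabilistic lemmas that carry the weight. However, two of your quantitative heuristics are too optimistic, and one of them is precisely the obstacle you flagged but did not resolve.

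First, in completeness~(ii) the exponent you get after the collision analysis is \emph{not} $t$ but only $t^{1/3}$. The paper's Lemma~\ref{intt} shows that the conditioned elements $\{0,d_1,\dots,d_t\}$ can already explain up to $(t+1)/2$ of the $t+1$ events among themselves, and the remaining events can be covered by a small set $\mathcal G_1$ of size $\le (t+1)^2$ in a way that costs only a factor $(s/n)$ per explained event; after bookkeeping one is left with roughly $t^{1/3}$ events that must be explained by fresh pairs, yielding a bound of order $s\,(s^2/n)^{t^{1/3}}$ rather than $(s^2/n)^t$. With $t=\sqrt[3]{\log s}$ this still gives $o(1)$ after the union bound (Lemma~\ref{finalint}), so your overall plan survives, but the enumeration of collision patterns is both harder and lossier than your sketch suggests: the degenerate patterns do not merely ``share indices and are correspondingly far fewer'' --- they can swallow a constant fraction of the $t$ events at essentially no probabilistic cost, which is why the exponent degrades.

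Second, in completeness~(i) the step ``a pinned-down integer is again a difference of $V$ with probability $O(s^2/n)$'' is not valid as stated, because the pinned-down integer $z_a=z_b-u_{j,k-1}$ is a function of $V$ (both $z_b$ and $u_{j,k-1}$ depend on the random support), so you cannot invoke Lemma~\ref{int0} directly. The paper handles this with Lemma~\ref{intc} (and Corollary~\ref{intc2}), which computes $\Pr\{v_{0p}-v_{k-1-q,k-1}\in W\}$ by explicitly conditioning on $v_{0p}=d_1$, $v_{k-1-q,k-1}=d_2$, $v_{0,k-1}=l$ and then partitioning the range $[0,l]$ into three regions with different conditional laws. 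The resulting bound is $c_1(p^2+q^2)\bigl(\tfrac{1}{s}+\tfrac{s^2}{n}\bigr)$, and Lemma~\ref{glem} then combines this with a case split on whether the outer index $j$ satisfies $j\le k-s^\alpha$ or $j>k-s^\alpha$. Your ``extreme windows'' idea is in the right direction, but the correlation with $V$ has to be dealt with by conditioning, not by treating the pinned value as fresh.
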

\begin{proof}
The proof of this theorem is constructive, i.e., we prove the correctness of the various steps involved in Algorithm \ref{support1} with the desired probability. See Appendix \ref{appC} for details.
\end{proof}

{\em Remarks:} (i) The Graph Step and the second instance of the Intersection Step are needed only for signals with sparsity $k \geq O(n^{\frac{1}{4}-\eps})$. 

(ii) Theorem \ref{supportthm} also holds when the support is chosen uniformly at random if the  sparsity $k = O(n^{\frac{1}{2}-\eps})$ is an increasing function of $n$, and $n$ is sufficiently large (a discussion is provided in Appendix \ref{appC}).  

\subsection{Signal recovery with known support}

%The signal value recovery problem once the support is known can be stated as
Once the support is known, the signal can be recovered by solving (\ref{SPRS}). We use $\lambda = 0$ as the support constraints promote sparsity by themselves.

\begin{thm}
If $\x_0$ is the signal of interest, the optimizer of (\ref{SPRS}), with $\lambda = 0$, is $\X_0 = \x_0\x_0^\star$ with probability $1-\delta$ for any $\delta>0$ if
\begin{enumerate}[(i)]
\item Support chosen from i.i.d $Bern\left(\frac{s}{n}\right)$ distribution
\item  $s =O(n^{\frac{1}{2} - \eps})$, $s$ is an increasing function of $n$
\item $n$ is sufficiently large
\item Signal values chosen from a continuous i.i.d distribution
\end{enumerate}
\label{signalthm}
\end{thm}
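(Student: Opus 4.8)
The plan is to prove the stronger assertion that $\X_0=\x_0\x_0^\star$ is the \emph{only} feasible point of (\ref{SPRS}); the objective and the choice $\lambda=0$ then play no role. First I would reduce: a feasible $\X$ vanishes outside $V\times V$, so it is determined by its $k\times k$ Hermitian positive semidefinite principal submatrix on $V=\{v_0,\dots,v_{k-1}\}$, and the lag-$0$ autocorrelation constraint forces $\sum_i X_{v_iv_i}=a_0=\|\x_0\|_2^2$, so every feasible $\X$ has the same trace. Hence it suffices to prove uniqueness; note the time-shift and conjugate-flip ambiguities are already resolved since the support $V$ is frozen. Write $x_i$ for the nonzero entry $(\x_0)_{v_i}$; with probability one every $x_i\neq0$.

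The engine is a combinatorial property of the random support, and this is precisely where $s=O(n^{1/2-\eps})$ enters: below the $\sqrt n$ threshold the distance set of $V$ is essentially that of a Sidon set. Let $H$ be the graph on $V$ with an edge $\{v_i,v_j\}$ iff the distance $|v_i-v_j|$ is attained by no other pair in $V$. The expected number of additive quadruples through a fixed element is $O(s^3/n)=o(s)$, so — by the same circle of moment estimates underlying Theorem \ref{supportthm} (Appendix \ref{appC}) — with probability $1-\delta$ the graph $H$ has minimum degree $k-o(k)$. Since $a_d=0$ forces no pair at distance $d$, for every edge $\{v_i,v_j\}$ of $H$ the autocorrelation equation at lag $|v_i-v_j|$ reduces to a single term and yields $X_{v_iv_j}=x_i\overline{x_j}$. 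Thus any feasible $\X$ already agrees with $\X_0$ on every off-diagonal entry indexed by an edge of $H$.

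I would then pin the diagonal, and finally the remaining off-diagonal entries, using small positive semidefinite minors. Put $X_{v_iv_i}=|x_i|^2(1+r_i)$ with $r_i\ge-1$; the lag-$0$ constraint is $\sum_i|x_i|^2r_i=0$, while positivity of the $2\times2$ minor on any $H$-edge $\{v_i,v_j\}$ gives $(1+r_i)(1+r_j)\ge1$. If some $r_{i_0}<0$, set $-\eta=\min_i r_i<0$: then each of the $k-o(k)$ $H$-neighbours $v_j$ of $v_{i_0}$ satisfies $r_j\ge\eta/(1-\eta)$, while all $r_j\ge-\eta$; feeding this into $\sum_i|x_i|^2r_i=0$ forces $\sum_{v_j\in N_H(v_{i_0})}|x_j|^2\le\tfrac12\sum_i|x_i|^2$, which contradicts the fact that, for i.i.d.\ continuous values, with probability $1-\delta$ any set of $o(k)$ indices carries an $o(1)$ fraction of $\sum_i|x_i|^2$ (a law-of-large-numbers estimate). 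Hence all $r_i\ge0$, and with $\sum_i|x_i|^2r_i=0$ this gives $r\equiv0$: $\X$ agrees with $\X_0$ on the diagonal too. Finally, for any ``bad'' pair $\{v_i,v_j\}$ (not an $H$-edge) choose a common $H$-neighbour $v_l$, which exists since $v_i$ and $v_j$ each miss only $o(k)$ vertices; in the $3\times3$ principal submatrix on $\{v_i,v_j,v_l\}$ every entry is already known to equal the corresponding entry of $\X_0$ except $X_{v_iv_j}=x_i\overline{x_j}+e$, and a direct computation gives determinant $-|x_l|^2|e|^2$, which positive semidefiniteness forces to be $\ge0$, i.e.\ $e=0$. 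Therefore $\X=\X_0$, as claimed.

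The hard part is the probabilistic input of the second paragraph — establishing that, for a $\mathrm{Bern}(s/n)$ support with $s=O(n^{1/2-\eps})$ increasing in $n$, the unique-distance graph $H$ has minimum degree $k-o(k)$ with probability $1-\delta$, along with the companion ``spread-out values'' concentration used for the diagonal. These are exactly the kind of moment-method estimates that drive Theorem \ref{supportthm}, and once they are granted the remaining steps — reading off the $H$-entries from single-term equations, the $2\times2$-minor argument for the diagonal, and the $3\times3$-minor argument for the bad entries — are short, elementary, and independent of $n$. The same proof works verbatim for a uniformly random $k$-subset with $k=O(n^{1/2-\eps})$, since only the near-Sidon property of $V$ is used.
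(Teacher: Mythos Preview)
Your proof is correct and takes a genuinely different route from the paper's. Both arguments rest on the same probabilistic input --- that the unique-distance graph $H$ has minimum degree $k-o(k)$ with high probability (the paper's Lemma~\ref{largedeg}) --- but diverge from there. The paper passes to a further relaxation in which only the $H$-edge off-diagonals are fixed and only $2\times2$ minors are required, then \emph{minimizes} trace: it invokes the Hajnal--Szemer\'edi theorem to partition $H$ into $\Theta(k/\log s)$ vertex-disjoint cliques of size $\Theta(\log s)$, applies a per-clique matrix-completion lemma to pin the diagonal, uses Dirac's theorem to extract a Hamiltonian cycle of $H$-edges, and finishes with a unique-PSD-completion argument. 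You instead observe that the lag-$0$ constraint already fixes the trace, so the objective is vacuous and it suffices to prove unique \emph{feasibility}; you then pin the diagonal directly from the trace equality and the $2\times2$ minors on $H$-edges, and handle each non-$H$ pair via a single $3\times3$ minor with a common $H$-neighbour (your determinant computation $-|x_l|^2|e|^2$ is correct). This is more elementary --- no extremal graph theory, no completion theory --- and yields a stronger conclusion. One small imprecision: the claim that ``any set of $o(k)$ indices carries an $o(1)$ fraction of $\sum|x_i|^2$'' is too strong uniformly over all such subsets, but you only need it for the $k$ non-neighbour sets (one per choice of the argmin $i_0$), and a union bound over these $k$ sets is harmless; this is the same concentration the paper appeals to in its matrix-completion lemma.
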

\begin{proof}
See Appendix \ref{appD} for details.
\end{proof}

\section{Stability}

%In this section, we consider the impact of measurement noise on TSPR and provide stability guarantees. 
In practice, the measured autocorrelation is corrupted with additive noise, i.e., the measurements are of the form
\begin{equation}
\nonumber a_i=\sum_{j=0}^{n-1-i} x_j x_{i+j}^\star + z_i: 0 \leq i \leq n-1,
\end{equation}
where $\z = (z_0, z_1, ... , z_{n-1})$ is the additive noise. TSPR, in its pure form (support recovery using Algorithm \ref{support1}), is not robust to noise as the $u_{01}$ identification step and Graph step are not robust. In this section, we present a modified version of TSPR, which in essence, considers the pairwise distance set of the pairwise distance set to identify $u_{i_0j_0}$, for some $0 \leq i_0 < j_0 \leq 2c+1$, robustly and then uses a sequence of {\em generalized} Intersection Steps to provably recover the true support of most $O(n^{\frac{1}{4}-\eps})$-sparse signals. 

The support of the noise-corrupted autocorrelation, denoted by $W^\dagger=(w_0^\dagger, w_1^\dagger, ... , w^\dagger_{K^\dagger-1})$, can be defined as the set of integers $\{ ~i ~| ~ |a_i| \geq \tau \}$ where $\tau$ is a threshold parameter. Let $T^\dagger = \{ ( w_i^\dagger, w_j^\dagger) : 0 \leq i < j \leq K^\dagger - 1 \}$ denote the set containing the ${K^\dagger \choose 2}$ integer pairs formed using the  $K^\dagger$ integers in $W^\dagger$. Let $T^\dagger_{sub}$ be a subset of $T^\dagger$ which contains all the integer pairs $(w_i^\dagger,w_j^\dagger)$ (where $j > i$), satisfying the following two conditions:
\begin{enumerate}[(i)]
\item $w_j^\dagger - w_i^\dagger \in W^\dagger$
\item  $\exists {\frac{\sqrt{K^\dagger}}{4}}$ integers $\{g_1,g_2,...,g_{\frac{\sqrt{K^\dagger}}{4}}\}$, such that  $(g_l, g_l + w_j^\dagger-w_i^\dagger)\in T^\dagger: 1 \leq l \leq \frac{\sqrt{K^\dagger}}{4}$
\end{enumerate}
The first condition requires that the difference between the integers in the pair should be in $W^\dagger$ and the second condition requires that at least $\frac{\sqrt{K^\dagger}}{4}$ integer pairs in $W^\dagger$ should be separated by the same difference. 

\begin{algorithm}
\caption{Two-stage Sparse Phase Retrieval: Noisy Setup}
\label{SPRn} 

\textbf{Input:} Noisy autocorrelation $\A$ of the signal of interest, threshold $\tau$, $\eta$ such that $||\z||_2 \leq \eta$, constant $c$ \\
\textbf{Output:} Sparse signal $\hat{\x}$ satisfying the noisy autocorrelation measurements
\begin{enumerate}[(i)]
\item $W^\dagger = \{ ~i ~|~ |a_i| \geq \tau \}$

\item $u_{i_0 j_0} = w_{max}^\dagger - w_{min}^\dagger$, where $0 \leq i_0 < j_0 \leq 2c+1$:  $w_{min}^\dagger$ is the largest integer for which there exists an integer $w_{max}^\dagger > w_{min}^\dagger$ such that $(w_{min}^\dagger, w_{max}^\dagger) \in T^\dagger_{sub}$
%Infer $\{L = u_{0k}\}$ (maximum integer in $W^\dagger$), $\{ u_{i_0j_0}\}$ for an $\{i_0,j_0\}$ such that $\{i_0 \leq 4c, j_0 = k - 1\}$ or  $\{i_0 = 0, j_0 > (k - 1) - 4c \}$ from $W^\dagger$ using $L - g$, where the maximum integer $g$ in $W^\dagger$ which satisfies:

%\begin{itemize}
%\item $L - g \in W^\dagger$
%\item  There exists $\frac{k}{2}$ integers $\{g_1,...,g_{\frac{k}{2}}\}$, such that  $\{g_1, g_1 + L - g\} \in W^\dagger$
%\end{itemize}
\item Intersection Step using $u_{i_0 j_0}$: get $\{u_{i_0q_0}, u_{i_0q_1}, ... , u_{i_0q_{c+1}}\}$, where $\{q_0, q_1, ... , q_{c+1}\} \geq (k-1) - (3c+1)$  (largest $c+2$ integers in $W^\dagger \cap (W^\dagger + u_{i_0j_0})$)

\item Intersection Step using each of the ${c+2 \choose 2 }$ terms $\{u_{q_i q_j}: 0 \leq i <j \leq c+1$\}: obtain $\{ u_0, u_1 , ... , u_{\frac{\sqrt{K^\dagger}}{4}-1}\}$ (largest $\frac{\sqrt{K^\dagger}}{4}$ integers in $\bigcup_{0 \leq i < j \leq c+1} \left( (W^\dagger \cap (W^\dagger+u_{q_iq_j}) )+ u_{q_jq_{c+1}} \right)$ correspond to $\{u_{i q_{c+1}}: 0 \leq i \leq \frac{\sqrt{K^\dagger}}{4}-1\})$

\item Intersection Step using each of the ${c+2 \choose 2}$ terms $\{u_{ij}: 0 \leq i <j \leq c+1\}$: obtain $\{ u_{\frac{\sqrt{K^\dagger}}{4}}, u_{\frac{\sqrt{K^\dagger}}{4}+1}, ..., u_{k-1}\}$ (all the integers greater than $u_{\frac{\sqrt{K^\dagger}}{4}-1}$ in $\bigcup_{0 \leq i < j \leq c+1} \left( (W^\dagger \cap (W^\dagger+u_{ij}) )+ u_{0i} \right)$)
%\item Perform $W^\dagger_3 = \bigcup_{1 \leq i,j \leq c+2} \Big\{ (W^\dagger+u_{0,i})  \cap (W^\dagger+u_{0,j}) \Big\}$ and obtain $\{u_{\frac{k}{2}+1}, ... ,u_{k-1}\}$ (largest $\frac{k}{2}$ integers in $W^\dagger_3)$
\item Obtain ${\X}^\dagger$ by solving
\begin{align}
\label{SPRS2n}
&\textrm{minimize} \hspace{1cm}  trace(\X) \\
\nonumber & \textrm{subject to } \hspace{0.2cm}   |a_i - trace(\mathbf{A}_i\X)| \leq \eta: ~~ 0\leq i \leq n-1\\
\nonumber & \hspace{2.3cm} X_{ij}= 0 ~~  \textrm{if}  ~~  \{i,j\} \notin U ~ ~ \& ~ ~  \X \succcurlyeq 0
\end{align} 
\item Return ${\x}^\dagger$, where ${\x}^\dagger{\x}^{\dagger\star}$ is the best rank one approximation of ${\X}^\dagger$
\end{enumerate}
\end{algorithm}

As earlier, let $W$ denote the support of the autocorrelation (in the absence of noise). Let $W_{ins}$ denote the set of integers which belong to $W^\dagger$ but do not belong to $W$: these are the integers which got inserted due to a noise value higher than the threshold. Also, let $W_{del}$ denote the set of integers which belong to $W$ but do not belong to $W^\dagger$: these are the integers which got deleted due to the autocorrelation value being below the threshold or due to noise reducing the autocorrelation value below the threshold. We have:

\begin{equation}
W^\dagger = ( W \cup W_{ins} ) \backslash W_{del}.
\end{equation}

\begin{thm}
TSPR (noisy setup) can recover sparse signals from noisy autocorrelation measurements ($|| \z ||_2 \leq \eta$) with an estimation error 
\begin{equation}
\nonumber ||\X^\dagger - \x_0\x_0^\star ||_2 \leq 4k\eta,
\end{equation}
where $\x_0$ is the signal of interest, with probability $1 - \delta $ for any $\delta > 0$, if 
\begin{enumerate}[(i)]
\item Support chosen from i.i.d $Bern\left(\frac{s}{n}\right)$ distribution
\item  $s =O(n^{\frac{1}{4} - \eps})$, $s$ is an increasing function of $n$
\item  $n$ is sufficiently large
\item Signal values chosen from a continuous i.i.d distribution
\end{enumerate}
and if $\z$ and $\tau$ satisfy, for some constant $c$,
\begin{enumerate}[(i)]
\item $W_{ins}$ chosen from i.i.d $Bern(p)$ distribution, where $p  = o\left(\frac{s^2}{n}\right)$
%\item At most $s^{\frac{1}{2}}$ integers in $W_{ins}$ are greater than $w_{K-c^2}$
%\item Size of $W_{ins}\leq s^{\frac{3}{2}}$, 
\item For each $0 \leq i \leq k-1$, $W_{del}$ contains at most $c$ terms of the form $\{v_{ij}: 0 \leq j \leq k-1\}$, and $v_{0,k-1} \notin W_{del}$
% $v_{0,k-1} \notin W_{del}$ 
\end{enumerate}
\label{noisethm}
\end{thm}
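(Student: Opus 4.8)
\emph{Proof plan.} As with Theorem~\ref{tsprthm}, the statement decomposes into two essentially independent halves, which I would establish as separate lemmas and then combine by a union bound. Half~(A): steps (i)--(v) of Algorithm~\ref{SPRn} recover the true normalized support $U$ \emph{exactly}, with probability $1-\delta/2$. Half~(B): conditioned on $U$ being correct, the semidefinite program (\ref{SPRS2n}) returns an $\X^\dagger$ with $\|\X^\dagger-\x_0\x_0^\star\|_2\le 4k\eta$. Throughout I would use, exactly as in the remark after Theorem~\ref{tsprthm}, that a $\mathrm{Bern}(s/n)$ support concentrates to a uniform $k$-subset of $[n]$ with $k\sim s$, so I may freely pass to ``random $k$-subset'' when convenient.

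\emph{Part A: robust support recovery.} The point the noisy algorithm must address is that both the rule $u_{01}=w_{K-1}-w_{K-2}$ and the Graph Step of Algorithm~\ref{support1} are brittle: an insertion, or deletion of $v_{1,k-1}$, destroys the first, and the uniqueness test underlying the second fails under a single perturbation. The replacement is a second-order argument on $T^\dagger$. First I would show a clean separation: for a genuine low-index distance $u_{ij}$ (with $j$ small) the number of pairs $(g,g+u_{ij})\in T^\dagger$ is of order $k$, since every pair $(u_{i\ell},u_{j\ell})$, $\ell\ge j$, contributes and at most $c$ of them per endpoint are killed by $W_{del}$; whereas for any spurious difference---one that is not a genuine $u_{ij}$, or is a genuine $u_{ij}$ with large index, or is created by $W_{ins}$---the number of supporting pairs is $o(k)$ with probability $1-\delta$. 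Here the hypothesis $p=o(s^2/n)$ is precisely what keeps insertion-generated coincidences below the threshold while $K^\dagger\asymp k^2$, so that the cutoff $\sqrt{K^\dagger}/4\asymp k/4$ sits strictly between the two populations. Consequently $T^\dagger_{sub}$ consists w.h.p.\ of exactly the pairs $(u_i,u_j)$ realizing genuine low-index distances, and taking the one with largest $w^\dagger_{\min}$ isolates $u_{i_0 j_0}$ with $0\le i_0<j_0\le 2c+1$, the bound $2c+1$ absorbing the at-most-$c$-per-row deletions. Second, I would run the generalized Intersection Step analysis of steps (iii)--(v): starting from $u_{i_0 j_0}$, and then from the $\binom{c+2}{2}$ second-generation distances among recovered anchors, each target element lies in the relevant shifted intersection along at least one deletion-free chain (the $c+2$ redundant anchors guarantee one chain survives the $\le c$ deletions in each row), while any spurious survivor would force either two independent insertions or an insertion meeting a shifted genuine element---events of total probability $O(np^2+k^2p)=o(1)$ under $p=o(s^2/n)$. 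Making this threshold separation provable and controlling all $O(c^2)$ intersections simultaneously against the inserted/deleted sets, rather than against the clean $W$, is the step I expect to be the main obstacle; it reuses the anti-concentration and Bernstein-type estimates behind Theorem~\ref{supportthm}.

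\emph{Part B: robust signal recovery from the SDP.} Given the correct support $U$, first note $\x_0\x_0^\star$ is feasible for (\ref{SPRS2n}): $|a_i-\mathrm{trace}(\mathbf{A}_i\x_0\x_0^\star)|=|z_i|\le\|\z\|_2\le\eta$, it is PSD, and it is supported on $U\times U$. Hence the minimizer satisfies $\mathrm{trace}(\X^\dagger)\le\|\x_0\|_2^2$ and, by the triangle inequality, $|\mathrm{trace}(\mathbf{A}_i(\X^\dagger-\x_0\x_0^\star))|\le 2\eta$ for every $i$. Next I would invoke genericity of the random aperiodic support: for $s=O(n^{1/4-\eps})$ a uniform $k$-subset of $[n]$ is a Sidon ($B_2$) set with probability $1-\delta$ (the expected number of violations is $O(k^4/n)=o(1)$), so all $\binom{k}{2}$ pairwise differences of $U$ are distinct. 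On Hermitian matrices supported on $U\times U$ this makes the measurement operator essentially coordinate-wise---each off-diagonal block entry $X_{u_pu_q}$ is read off from a single measurement---so the perturbation $\Delta=\X^\dagger-\x_0\x_0^\star$ has entrywise magnitude $O(\eta)$ on its $O(k^2)$ support; propagating these errors back to the rank-one structure through a chain of length $O(k)$ linking any entry to a fixed reference entry, exactly as in the noiseless proof of Theorem~\ref{signalthm}, and using minimality of the trace to eliminate the residual PSD freedom, yields $\|\X^\dagger-\x_0\x_0^\star\|_2\le 4k\eta$ (equivalently, the support-restricted operator has effective smallest singular value $\gtrsim 1/(4k)$). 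Combining Parts~A and~B by a union bound completes the proof; the restriction to $s=O(n^{1/4-\eps})$, versus $O(n^{1/2-\eps})$ in the noiseless theorem, enters precisely here, through the Sidon property and the $O(k)$ error-amplification factor.
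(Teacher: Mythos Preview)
Your decomposition and Part~A strategy match the paper's proof almost exactly: the paper proves a lemma showing step~(ii) returns some $v_{i_0j_0}$ with $0\le i_0<j_0\le 2c+1$ via precisely the threshold-separation argument you outline, and then walks through the generalized intersection steps with the same three-way case split (genuine $W\cap W$, $W_{ins}\cap W_{ins}$, and cross terms) and the same $O(s^4/n)$ bounds you obtain from $p=o(s^2/n)$. One small correction: the $s=O(n^{1/4-\eps})$ hypothesis is already essential in Part~A (it is what makes your $O(np^2+k^2p)$ and the analogous clean-$W$ survivor bound $o(1)$), not only in Part~B.

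In Part~B your Sidon observation is exactly what the paper uses---under $s=O(n^{1/4-\eps})$ the graph $H(U)$ is a clique w.h.p., so every off-diagonal $X_{u_pu_q}$ is pinned to within $2\eta$ by a single measurement. But your ``chain of length $O(k)$'' mechanism for the diagonals is not what the paper does and would not obviously give the right bound (error along a chain compounds multiplicatively, not additively). The paper instead argues directly from the $2\times2$ PSD submatrices: since $|X^\dagger_{ij}|^2\ge(|r_i|^2-2\eta)(|r_j|^2-2\eta)$, at most one diagonal entry can fall below $|r_i|^2-2\eta$; if that single entry dropped below $|r_i|^2-k\eta$, the $2\times2$ constraints would force every other diagonal up by more than $\eta$ in aggregate, contradicting trace minimality. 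This yields $\sum_j(R^\dagger_{jj}-|r_j|^2)^2\le 12k^2\eta^2$, and combining with the $k^2$ off-diagonal errors of size $\le 2\eta$ gives $\|\X^\dagger-\x_0\x_0^\star\|_2^2\le 16k^2\eta^2$. You have the right ingredients (PSD $+$ trace minimality), just not the right way to combine them.
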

\begin{proof}
The proof of this theorem is constructive, i.e., we prove the correctness of the various steps involved with the desired probability. See Appendix \ref{appE} for details. 
\end{proof}

%$(iii)$ ensures that the {\em true} support (i.e., support in the absence of noise) of the autocorrelation can be identified even in presence of noise. Recall that $\{a_{w_i}: 0 \leq i \leq K-1\}$ corresponds to the set of values of the autocorrelation function in its true support. This set can be identified by thresholding the autocorrelation at $\eta$ if $(iii)$ holds, as the entries in the autocorrelation which would have been zero in the absence of noise will be less than $\eta$ as $||\z||_\infty<\eta$. We would like to note that this is a reasonable assumption when the SNR is sufficiently\footnote{The precise value would depend on the various distributions involved, and is beyond the scope of this paper} high.

%{\em Remark}: The $4k\eta$ term was obtained after analyzing a further relaxed problem, and hence we believe that it can be improved upon.
%\end{proof}

\section{Numerical Simulations}

In this section, we demonstrate the performance of TSPR using numerical simulations. The procedure is as follows: for a given $n$ and $k$, the $k$ locations of the non-zero components were chosen uniformly at random. The signal values in the chosen support were drawn from an i.i.d standard normal distribution.

\subsection{Success Probability}

In the first set of simulations, we demonstrate the performance of TSPR for $n=12500$, $n=25000$ and $n=50000$ for various sparsities.  The results of the simulations are shown in Figure \ref{sim1}, the $O(n^{1/2-\eps})$ theoretical prediction can be clearly seen. For instance, $n = 12500$, $k = 80$ and $n = 50000$, $k=160$ have a success probability of $0.5$ and so on. 

\begin{figure}
\begin{center}
\includegraphics[scale=0.45]{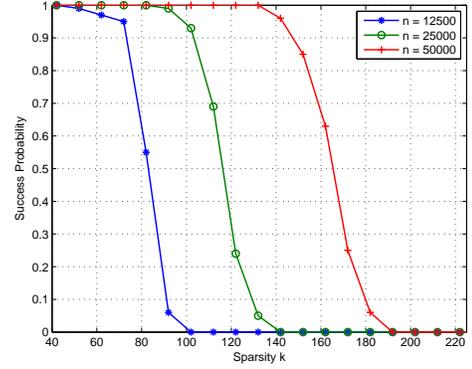}  
\end{center}
\caption{Probability of successful signal recovery of TSPR for various sparsities and $n=12500, 25000, 50000$.}
\label{sim1}
\end{figure}

\subsection{Comparison with fast algorithms}

In this set of simulations, we compare the recovery ability of TSPR with other popular sparse phase retrieval algorithms. We choose $n=6400$ and plot the success probabilities of the algorithms TSPR, GESPAR \cite{eldar2} and Sparse-Fienup ($100$ random initializations) \cite{fienup} for sparsities $20 \leq k \leq 90$.   The results of the simulations are shown in Figure \ref{sim3}. 

\begin{figure}
\begin{center}
\includegraphics[scale=0.45]{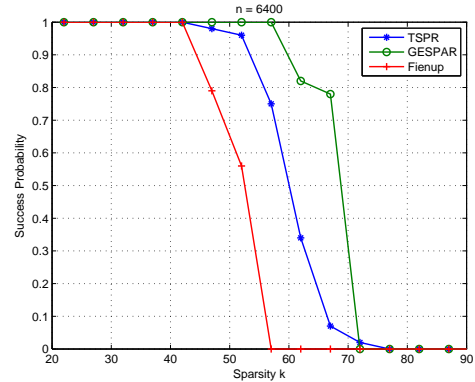}  
\end{center}
\caption{Probability of successful signal recovery of various efficient sparse phase retrieval algorithms for various sparsities and $n=6400$.}
\label{sim3}
\end{figure}

Figure \ref{sim3} shows that TSPR outperforms Sparse-Fienup algorithm and is almost on par with GESPAR. We expect TSPR to outperform GESPAR for higher values of $n$ due to the fact that it can recover $O(n^{\frac{1}{2}-\eps})$-sparse signals (GESPAR empirically recovers $O(n^{\frac{1}{3}})$-sparse signals). We suspect that the recovery ability of the two algorithms for $n = 6400$ is similar due to the effect of the constants multiplying these terms. We were unable to compare the performances for higher values of $n$ due to scalability limitations of GESPAR. For instance, TSPR took an average run time of $80ms$ to recover a signal with $n = 25000$ and $k = 100$ where as GESPAR needed an average run time of $33s$ to recover a signal with $n=512$ and $k = 35$.  

\subsection{Comparison with SDP algorithms}

In this set of simulations, we compare the recovery ability of TSPR with the SDP heuristic (based on log-det minimization) proposed in \cite{candespr}. We choose $n=64$ and plot the success probabilities for sparsities $0 \leq k \leq 20$. The results are shown in Figure \ref{sim2}, we observe that the performances are similar.

\begin{figure}
\begin{center}
\includegraphics[scale=0.45]{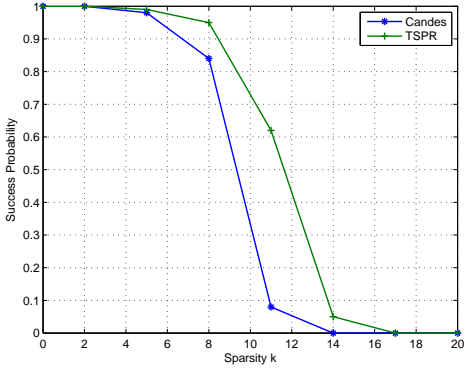}  
\end{center}
\caption{Probability of successful signal recovery of various SDP-based sparse phase retrieval algorithms for various sparsities and $n=64$.}
\label{sim2}
\end{figure}

\section{Conclusions}

We showed that almost all signals with aperiodic support can, in theory, be recovered by solving (\ref{SPR0}). We then developed the TSPR algorithm to {\em efficiently} solve (\ref{SPR0}), and provided the following recovery guarantees: (i) most $O(n^{1/2-\eps})$-sparse signals can be recovered uniquely by TSPR (ii) most $O(n^{1/4-\eps})$-sparse signals can be recovered robustly by TSPR when the measurements are corrupted by noise. Numerical simulations complement our theoretical analysis, and show that TSPR can perform as well as the popular algorithms (which enjoy empirical success, but do not have theoretical guarantees).

\section*{Appendix}

\section{Proof of Theorem \ref{uniquethm}}

\label{appA}

We use the following notation in this section: if $\x$ is a signal of length $l_x$, then $\x = \{x_0, x_1, ... , x_{l_x-1}\}$ and $\{x_0, x_{l_x-1}\} \neq 0$. $\equiv$ implies equality up to time-shift, conjugate-flip and global phase, i.e., equality up to trivial ambiguities. $\tilde{\x}$  denotes the signal obtained by conjugate-flipping $\x$, i.e., $\tilde{\x} = \{ x_{l_x-1}^\star, x_{l_x-2}^\star, ... , x_{0}^\star\}$.

In order to characterize the set of signals with aperiodic support which cannot be uniquely recovered by (\ref{SPR0}), we make use of the following lemma:

\begin{lem}
\label{autolem}
If two non-equivalent signals $\x_1$ and $\x_2$ have the same autocorrelation, then there exists signals $\g$ and $\h$, of lengths $l_g$ and $l_h$ respectively, such that 
\begin{enumerate}[(i)]
\item $\x_1\equiv \g \star \h ~~ \& ~~ \x_2 \equiv \g \star \tilde{\h}$
\item $l_g + l_h - 1 = l_x$, and $l_g, l_h \geq 2$
\item $h_0 = 1$, $h_{l_h - 1} \neq 0$, $g_0 \neq 0$, $g_{l_g-1} \neq 0$   
\item $l_g \geq l_h$ 
\end{enumerate}
\end{lem}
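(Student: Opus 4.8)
The plan is to pass to $z$-transforms and argue by polynomial factorization over $\mathbf{C}$; this is the classical route to describing the ambiguities of one-dimensional autocorrelation inversion. Associate to a length-$l$ signal $\x$ its $z$-transform $X(z)=\sum_{i=0}^{l-1}x_i z^i$; since $x_0\neq0$ and $x_{l-1}\neq0$ this is a degree-$(l-1)$ polynomial with $X(0)\neq0$, so none of its roots vanish. The autocorrelation of $\x$ is, up to reindexing, the coefficient sequence of the Laurent polynomial $X(z)\,\overline{X(1/\bar z)}$, where $\overline{X(1/\bar z)}:=\sum_i\overline{x_i}\,z^{-i}$; its extreme coefficient $a_{l-1}=x_0x_{l-1}^\star$ is nonzero, so two signals with the same autocorrelation have the same length $l_x$, and their $z$-transforms obey the polynomial identity $X_1(z)\,\overline{X_1(1/\bar z)}=X_2(z)\,\overline{X_2(1/\bar z)}$. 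I will also use that the conjugate-flip corresponds to $\tilde X(z)=z^{l_x-1}\overline{X(1/\bar z)}$, so $\tilde X$ has roots $\{1/\bar\beta:\beta\text{ a root of }X\}$.

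The key step is to take $G(z)=\gcd(X_1,X_2)$ in $\mathbf{C}[z]$ and write $X_1=G\,H_1$, $X_2=G\,H_2$ with $\gcd(H_1,H_2)=1$; then $G(0),H_1(0),H_2(0)\neq0$ and, since $\deg X_1=\deg X_2=l_x-1$, also $\deg H_1=\deg H_2=:d$. Cancelling the common factor $G(z)\,\overline{G(1/\bar z)}$ from the identity above leaves $H_1(z)\,\overline{H_1(1/\bar z)}=H_2(z)\,\overline{H_2(1/\bar z)}$, and multiplying by $z^{d}$ turns this into the polynomial identity $H_1\tilde H_1=H_2\tilde H_2$. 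Hence $H_1\mid H_2\tilde H_2$, and since $\gcd(H_1,H_2)=1$ Euclid's lemma gives $H_1\mid\tilde H_2$; as both have degree $d$ we get $H_1=c\,\tilde H_2$ for a nonzero constant $c$, equivalently $H_2=\bar c^{-1}\tilde H_1$. Let $\h$ be the signal whose $z$-transform is $\tilde H_2$ rescaled to have constant term $1$, and let $\g$ be the signal whose $z$-transform is $G$ times the reciprocal scalar, chosen so that $\g\star\h$ has $z$-transform exactly $X_1$; then $\x_1=\g\star\h$. Since $\tilde H_2\propto H_1$, the $z$-transform of $\tilde\h$ is $\propto\tilde H_1\propto H_2$, so $X_2=G\,H_2\propto G\cdot(\text{$z$-transform of }\tilde\h)$, i.e. $\x_2=\lambda\,(\g\star\tilde\h)$ for some scalar $\lambda$.

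It remains to verify the listed properties. Degrees add, $(l_g-1)+(l_h-1)=\deg G+d=l_x-1$, so $l_g+l_h-1=l_x$. The roots of $G$ and of $\h$ are nonzero, so $g_0\neq0$, $g_{l_g-1}\neq0$ and $h_{l_h-1}\neq0$, while $h_0=1$ holds by the normalization. If $d=0$ then $H_1,H_2$ are constants and $X_1\propto X_2$, so $\x_1\equiv\x_2$; if $\deg G=0$ then $X_1\propto H_1\propto\tilde H_2\propto\tilde X_2$, so $\x_1$ is a conjugate-flip of $\x_2$ and again $\x_1\equiv\x_2$; both contradict the hypothesis, hence $l_g,l_h\geq2$. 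Moreover $|\lambda|=1$: conjugate-flipping the factor $\h$ leaves the autocorrelation of $\g\star\h$ unchanged, so $\g\star\tilde\h$ and $\x_1$ (hence $\x_2=\lambda(\g\star\tilde\h)$) have the same autocorrelation, forcing $|\lambda|^2=1$; thus $\x_2\equiv\g\star\tilde\h$. Finally, to arrange $l_g\geq l_h$, note $\x_1=\h\star\g$ and $\tilde\x_2\equiv\h\star\tilde\g$, so replacing $(\g,\h)$ by $\bigl(g_0\h,\ g_0^{-1}\g\bigr)$ interchanges the two lengths while keeping the normalization $h_0=1$, the nonvanishing of the relevant end coefficients, and both equivalences intact (the spurious unimodular factor $g_0/\bar g_0$ is absorbed by $\equiv$); so we may assume $l_g\geq l_h$ from the start.

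The only delicate points I anticipate are bookkeeping: keeping the $z$-transform and conjugate-flip conventions consistent, and, above all, tracking the various nonzero and unimodular scalars so that the conclusions hold up to precisely the trivial ambiguities ($\equiv$) rather than up to an arbitrary nonzero complex factor. The algebraic core — extracting the gcd and invoking Euclid's lemma on $H_1\tilde H_1=H_2\tilde H_2$ — is routine.
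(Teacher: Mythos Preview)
Your proof is correct and takes essentially the same route as the paper: both pass to $z$-transforms and use polynomial factorization over $\mathbf{C}$ to isolate the factor common to $X_1$ and $X_2$. The paper does this by root-by-root assignment (its $P_1$ collects the zeros shared by $X_1$ and $X_2$, which is precisely your $G=\gcd(X_1,X_2)$), while you invoke Euclid's lemma on the coprime quotients $H_1,H_2$; the two packagings are equivalent, and your gcd formulation arguably handles multiplicities more cleanly.
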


\begin{proof}
(i) Let $X_1(z)$, $X_2(z)$, $G(z)$ and $H(z)$ be the $z$-transforms of the signals $\x_1$, $\x_2$, $\g$ and $\h$ respectively. Since $\x_1$ and $\x_2$ have the same autocorrelation, we have
\begin{equation}
\nonumber A(z)=X_1(z)X_1^\star(z^{-\star})=X_2(z)X_2^\star(z^{-\star})
\end{equation}
where $A(z)$ is the $z$-transform of the autocorrelation of $\x_1$ and $\x_2$. If $z_0$ is a zero of $A(z)$, then $z_0^{-\star}$ is also a zero of $A(z)$\footnote[4]{The problem of recovering $X(z)$ from $A(z)$ is hence equivalent to the  problem of assigning pairs of zeros of the form ($z_0, z_0^{-\star}$) between $X(z)$ and $X^\star(z^{-\star})$ (see \cite{kailath}). Since $A(z)$ can have at most $n$ such pairs, this can be done in at most $2^n$ ways and hence for a given autocorrelation, there can be at most $2^n$ non-equivalent solutions.}. For every such pair of zeros ($z_0, z_0^{-\star}$), $z_0$ can be assigned to $X_1(z)$ or $X_1^\star(z^{-\star})$, and $X_2(z)$ or $X_2^\star(z^{-\star})$.  Let $P_{1}(z)$, $P_{2}(z)$ and $P_3(z)$ be the polynomials constructed from such pairs of zeros which are assigned to $(X_1(z),X_2(z))$,  $(X_1(z),X_2^\star(z^{-\star}))$ and $(X_1^\star(z^{-\star}),X_2(z))$ respectively. Note that $P_3(z) \equiv P_2^\star(z^{-\star})$. We have
%For every quadruple ($z_0$,$z_0^\star$,$z_0^{-1}$,$z_0^{-\star}$) which are zeros of $A(z)$,  $(z_0,z_0^\star)$ has to be assigned to  $X_1(z)$ or $X_1(z^{-1})$, and $X_2(z)$ or $X_2(z^{-1})$. Let $P_{1}(z)$, $P_{2}(z)$ and $P_3(z)$ be the polynomials constructed from such conjugate pairs of zeros which are assigned to $(X_1(z),X_2(z))$,  $(X_1(z),X_2(z^{-1}))$ and $(X_1(z^{-1}),X_2(z))$ respectively. Note that $P_2(z)=P_3(z^{-1})$. We have
\begin{equation}
\nonumber X_1(z) \equiv P_1(z)P_2(z) ~~ \&~ ~X_2(z) \equiv P_1( z )P_2^\star(z^{-\star})
\end{equation}
and hence $X_1(z)$ and $X_2(z)$ can be written as
\begin{equation}
\nonumber X_1(z) \equiv G(z)H(z) ~ ~ ~ X_2(z) \equiv G(z)H^\star(z^{-\star})
\end{equation}
where $G(z)=P_1(z)$ and $H(z)=P_2(z)$, or equivalently
\begin{equation}
\nonumber \x_1 \equiv \g \star \h ~ ~ ~ \x_2 \equiv \g \star \tilde{\h}
\end{equation}
in the time domain.

(ii) If two signals of lengths $l_g$ and $l_h$  are convolved, the resulting signal (in this case $\x_1$ or $\x_2$) will be of length $l_g + l_h -1$. $l_g$ and $l_h$ are greater than or equal to $2$ because otherwise, $\x_1$ and $\x_2$ will be equivalent. 

(iii) Since $\g$ and $\h$ are signals of lengths $l_g$ and $l_h$ respectively, $\{h_0, h_{l_h-1}, g_0, g_{l_g-1} \} \neq 0$ by definition. The signals ($\g \star \h$) and ($\alpha\g \star {\h / \alpha}$) are the same for any constant $\alpha$. Hence, without loss of generality, we can set $h_0 = 1$. 

(iv) Suppose $l_g < l_h$. Since $\x_1$ and $\tilde{\x}_2$ have the same autocorrelation, we can apply part (i) of this lemma to signals $\x_1$ and $\tilde{\x}_2$ to get $\x_1 = \h \star \g$ and $\tilde{\x}_2 = \h \star \tilde{\g}$.  Hence, without loss of generality, the signals $\g$ and $\h$ can be interchanged. 
\end{proof}

First, we will prove the theorem for the $k=n-1$ case as it is relatively easier and provides intuition for the $k < n-1$ case. 

{\bf Case I}: $k = n - 1 $

$\mathcal{S}_{n-1}$, i.e., the set of signals with aperiodic support and sparsity equal to $n-1$, has $2(n-1)$ degrees of freedom (as each non-zero location can have a complex value and hence can have $2$ degrees of freedom). We will show that the set of signals in $\mathcal{S}_{n-1}$ that cannot be recovered by (\ref{SPR0}) has degrees of freedom strictly less than $2( n-1) $.

Suppose $\x_1 \in \mathcal{S}_{n-1}$ is not recoverable by (\ref{SPR0}), then there must exist another signal $\x_2$, with sparsity less than or equal to $n-1$, which has the same autocorrelation. At least one location in both $\x_1$ and $\x_2$  have a value zero, say $x_{1,i} = 0$ and $x_{2,j} = 0$ for some $ 1 \leq i , j \leq n - 2$. Note that we can always find an $i$ and $j$ in this range as $\x_1$ has aperiodic support and the lengths of $\x_1$ and $\x_2$ are the same. From Lemma \ref{autolem}, there must exist two signals $\g$  and $\h$, of lengths $l$ and $n-l+1$ for some $\frac{n+1}{2} \leq l \leq n-1$, such that 
\begin{equation}
\sum_rg_r h_{i-r}=0 ~~ \& ~~ \sum_rg_r h_{n-l-j+r}^\star=0
\label{bilin1}
\end{equation}
and $\{ g_0 , g_{l-1}, h_{n-l} \} \neq 0, h_0 = 1$.

Our strategy is the following: we will count the degrees of freedom of the set of all possible $\{\g, \h\}$ which satisfy (\ref{bilin1}) for some choice of $\{l, i, j\}$ and show that it is strictly less than $2( n-1)$.

The following arguments can be made for any particular choice of $\{l, i, j\}$: the two bilinear equations in (\ref{bilin1}) can be represented in the matrix form as
\begin{equation}
\HH\g = 0 \nonumber
\end{equation}
where $\g$ is the column vector $\{ g_0, g_1, ..., g_{l - 1} \}^T$ and $\HH$ is the $2 \times l $ matrix containing the corresponding entries of $\h$ given by (\ref{bilin1}). For example, if $i < j < l - 1 $, then (\ref{bilin1}) can be written as
\begin{equation}
\begin{bmatrix}
h_i & h_{i-1} & ... & h_0  & ... & 0 & 0.. \\ 
h_{n-l-j}^\star & h_{n-l-j+1}^\star & ... & ... & ... & h_{n-l}^\star & 0..
\end{bmatrix}
\begin{bmatrix}
g_0 \\
g_1 \\
... \\
g_{l-1}
\end{bmatrix} = 0 \nonumber
\end{equation}

The degrees of freedom of the set of all possible $\{\g, \h \}$ which satisfy the system of equations (\ref{bilin1}) can be calculated as follows: since $\h$ is a complex vector of length $n - l + 1$ and $h_0=1$, $\h$ can have $2(n - l)$ degrees of freedom. For each $\h$, since each independent row of $\HH$ restricts $\g$ by one dimension in the complex space, or equivalently, $2$ degrees of freedom, $\g$ can have $2 l  - 2 \times rank( \HH )$ degrees of freedom. 

There are two possibilities: 

(i) $rank( \HH)  = 2 $: This happens generically, hence $\h$ can have $2(n - l)$ degrees of freedom. For each choice of $\h$ such that $rank( \HH ) = 2$, $\g$ can have $2(l - 2)$ degrees of freedom. Hence, the degrees of freedom of the set of all possible $\{\g,\h\}$ in this case which satisfy (\ref{bilin1}) is $2 (n - l) + 2 (l - 2) = 2(n - 2)$.  

(ii) $rank( \HH ) = 1$: In this case, each $2 \times 2$ submatrix of $\HH$ must be rank $1$, which could happen for some $\h$. The set of such $\h$ has degrees of freedom at most $2(n-l)-1$, as the degrees of freedom of at least one entry of $\h$ gets reduced by one. For example, if the $2 \times 2$ submatrix is $[ h_1, h_0 ; h_{n-l-1}^\star, h_{n-l} ^ \star ]$, then $h_{n-l-1}^\star= \frac{h_1 h_{n-l}^\star}{h_0}$ and hence once $h_1$ and $h_{n-l}$ are chosen, $h_1$ can take precisely one value and hence $2$ degrees of freedom are lost for $h_1$. For some $2 \times 2$ matrices, like $[ h_1, h_0 ; h_{0}^\star, h_{1} ^ \star ]$, the condition is $|h_1| = |h_0|$ because of which there will be a loss of one degree of freedom for $h_1$. For each choice of $\h$ such that $rank(\HH) = 1$, $\g$ can have $2(l - 1)$ degrees of freedom. Hence, the degrees of freedom of the set of all possible $\{\g,\h\}$ in this case which satisfy (\ref{bilin1}) is at most $2(n-l)-1 + 2(l - 1) = 2( n - 1 ) - 1$. 

We have shown that for any particular choice of $\{l, i, j\}$, the degrees of freedom of the set of all possible $\{\g, \h\}$ which satisfy (\ref{bilin1}) is at most $2( n - 1 ) - 1$. The degrees of freedom of the set of all possible $\{\g, \h\}$ which satisfy (\ref{bilin1}) for {\em some} choice of $\{l,i,j\}$ can be obtained by considering each valid choice of $\{l, i, j \}$ and taking a union of the resulting $\{\g, \h\}$. Since the union of a finite number of manifolds with degrees of freedom at most $2(n-1)-1$ is a manifold with degrees of freedom at most $2(n-1)-1$, the set of signals in $\mathcal{S}_{n-1}$ which cannot be recovered uniquely by (\ref{SPR0}) is a manifold of dimension at most $2(n-1)-1$, which is strictly less than $2(n-1)$. Hence, almost all signals in $\mathcal{S}_{n-1}$ can be uniquely recovered by solving (\ref{SPR0}).

%The set of such signals is a manifold of dimension $(n-2)$ (for details, see Appendix \ref{appB}), which implies that almost all\footnote{The set of violations is measure zero} signals in this set have a unique autocorrelation. The argument can be extended to the set of signals with aperiodic support (for details, see Appendix \ref{appB}). 
%\end{proof}

%Theorem \ref{uniquethm} in essence states that out of the $2^{n/2}$ possible solutions to (\ref{PRa}), almost surely, at most one of them has aperiodic support. 

%Note that sparse signals naturally meet the requirements of Theorem \ref{uniquethm} unless the locations of the non-zero components are chosen periodically.

%\begin{cor}
%\label{hspuniquethm}
%$O(n)$-sparse signals can be uniquely recovered from their Fourier transform magnitudes with probability $1-\delta$ for any $\delta>0$ if 
%\begin{enumerate}[(i)]
%\item $n>N_0(\delta)$
%\item Support chosen uniformly at random
%\item Signal values  chosen from a continuous distribution
%\end{enumerate}
%\end{cor}
%\begin{proof}
%This is a direct consequence of Theorem \ref{uniquethm}\footnote{We would like to note that recent works \cite{vetterli2, ohlsson3} provide guarantees assuming ``collision free" support, a property which is true only for sparsities $O(n^{1/4-\eps})$ when the support is chosen at random.}. $(i)$ and $(ii)$  ensure that the probability that the support of the signal is periodic is sufficiently small, $(iii)$ ensures that the probability that the signals are not picked from the ``almost surely" set of Theorem \ref{uniquethm} is sufficiently small. 
%\end{proof}

(ii) {\bf Case II}: $k \leq n - 1$

$\mathcal{S}_k$, i.e., the set of signals with aperiodic support and sparsity equal to $k$, has $2k$ degrees of freedom (as each non-zero location can have a complex value and hence can have $2$ degrees of freedom). This can also be calculated as follows: 

Consider the set of signals of length $l_x \geq 3$ which have zeros in the locations $\{ i_1, i_2, ..., i_{l_x-k}\}$ (the indices are arranged in increasing order, $i_1 \geq 1$ and $i_{l_x-k}\leq l_x-2$ by definition). Since any $\x$ of length $l_x$ can be written as $\g \star \h$, where $\g$ and $\h$ are signals of lengths $l$ and $l_x-l+1$ for {\em any} $2 \leq l \leq l_x-1$ (see proof of Lemma \ref{autolem}), there must exist two signals $\g$ and $\h$, of lengths $l$ and $l_x-l+1$ for {\em any} $2 \leq l \leq l_x-1$ such that 
\begin{equation}
\sum_rg_r h_{i_p-r}=0 ~~\forall~~ 1 \leq p \leq l_x-k
\label{bilin2}
\end{equation}
and $\{ g_0 , g_{l-1}, h_{l_x-l} \} \neq 0, h_0 = 1$. The degrees of freedom of the set of all possible $\{\g, \h \}$ which satisfy the system of equations (\ref{bilin2}) can be calculated as follows: let $\mathcal{M}_1$ be the manifold containing the set of all possible $\{\g, \h \}$ which satisfy the following set of equations: 
\begin{equation}
\sum_rg_r h_{i_p-r}=0 ~~\forall~~ 1 \leq p \leq z_1
\label{bilin3}
\end{equation}
where $z_1$ is the maximum integer such that $i_{z_1} < l-1 $. Also, let  $\mathcal{M}_2$ be the manifold containing the set of all possible $\{\g, \h \}$ which satisfy the following set of equations: 
\begin{equation}
\sum_rg_r h_{i_p-r}=0 ~~\forall~~ z_1 + 1 \leq p \leq l_x-k
\label{bilin4}
\end{equation}
The bilinear equations in (\ref{bilin3}) can be represented in the matrix form as
\begin{equation}
\HH_1\g = 0 \nonumber
\end{equation}
where $\g$ is the column vector $\{ g_0, g_1, ..., g_{l - 1} \}^T$ and $\HH_1$ is a $z_1 \times l$ matrix containing the corresponding entries of $\h$ given by (\ref{bilin3}). The matrix $\HH_1$ can be obtained by considering the rows corresponding to $\{i_1, i_2, ..., i_{z_1} \}$ of the following matrix:
\begin{equation}
\begin{bmatrix}
h_0 & 0 & 0 & ... & ... & 0 & 0\\
h_1 & h_0 & 0 & ... & ... &0 & 0\\
h_2 & h_1 & h_0 & 0 & ... &0 & 0 \\
& & & ... \\
& & & ... \\
& & & ... & h_1 & h_0& 0
\end{bmatrix} \nonumber
\end{equation}
Note that $rank( \HH_1) = z_1$ for all choices of $\h$ due to the fact that the columns corresponding to $\{i_1, i_2, ..., i_{z_1} \}$ of $\HH_1$ have a lower triangular structure and $h_0=1$. Since $\h$ is a vector of length $l_x-l+1$ with $h_0=1$, $\h$ has $2(l_x-l)$ degrees of freedom. $\g$ is a vector of length $l$ and for each $\h$, since each independent row of $\HH$ restricts $\g$ by one dimension in the complex space, or equivalently, $2$ degrees of freedom, $\g$ can have $2l - 2 \times rank(\HH_1) = 2l-2z_1$ degrees of freedom. Hence, the manifold $\mathcal{M}_1$ has $2l - 2z_1 + 2(l_x-l) = 2( l_x - z_1)$ degrees of freedom. In other words, the manifold $\mathcal{M}_1$ has lost $2z_1$ degrees of freedom (from the maximum possible $2l_x$ degrees of freedom).

Similarly, the bilinear equations in (\ref{bilin4}) can be represented in the matrix form as
\begin{equation}
\HH_2\g = 0 \nonumber
\end{equation}
where $\g$ is the column vector $\{ g_0, g_1, ..., g_{l - 1} \}^T$ and $\HH_2$ is a $(l_x-k -z_1) \times l$ matrix containing the corresponding entries of $\h$ given by (\ref{bilin4}). The matrix $\HH_2$ can be obtained by considering the rows corresponding to $\{i_{z_1+1}, i_{z_1+2}, ..., i_{l_x-k} \}$ of the following matrix:
\begin{equation}
\begin{bmatrix}
... & ... &  & ... & h_2 & h_1 & h_0\\
0 & ... & ... & ... & ... & h_2 & h_1\\
0 & 0 & ... & ... & ... & ... & h_2 \\
& & & ... \\
& & & ... & 0 & h_{l_x-l} & h_{l_x-l-1}\\
& & & ... & 0 & 0 &  h_{l_x-l} 
\end{bmatrix} \nonumber
\end{equation}
and hence $rank( \HH_2 ) = l_x - k - z_1$ for all choices of $\h$ due to the fact that the columns corresponding to $\{i_{z_1+1}, i_{z_1+2}, ..., i_{l_x-k} \}$ have an upper triangular structure and $h_{l_x-l} \neq 0$. Since $\h$ is a vector of length $l_x-l+1$ with $h_0=1$, $\h$ can have $2(l_x-l)$ degrees of freedom, and since $\g$ is a vector of length $l$, it can have $2l - 2 \times rank(\HH_2) = 2l-2(l_x-k-z_1)$ degrees of freedom. Hence, the manifold $\mathcal{M}_2$ has $2l - 2(l_x-k-z_1) + 2(l_x-l) = 2( k + z_1)$ degrees of freedom. In other words, the manifold $\mathcal{M}_2$ has lost $2l_x - 2(k+z_1)$ degrees of freedom (from the maximum possible $2l_x$ degrees of freedom).

The number of degrees of freedom lost by the manifold $\mathcal{M}_1 \cap \mathcal{M}_2$ is, in this case, given by the sum of the number of degrees of freedom lost  by the manifolds $\mathcal{M}_1$ and $\mathcal{M}_2$, i.e., $2l_x - 2k$ (see \cite{fannjiang} for a proof based on codimension). This can be seen as follows: the total loss of degrees of freedom in the manifold $\mathcal{M}_1 \cap \mathcal{M}_2$ is given by the sum of the loss of degrees of freedom in each individual set minus the   degrees lost due to overcounting (due to the fact that some linear combinations of the bilinear equations in $\mathcal{M}_2$  can be written as linear combinations of the bilinear equations in $\mathcal{M}_1$ for all possible $\{\g, \h\}$ considered in $\mathcal{M}_1$ (or vice versa)). Since $g_{l-1}\neq 0$, by observing the coefficients of $g_{l-1}$ in $\mathcal{M}_1$ and $\mathcal{M}_2$, it can be seen that a necessary condition for  some linear combinations of the bilinear equations in $\mathcal{M}_2$  to be written as linear combinations of the bilinear equations in $\mathcal{M}_1$ for all possible $\{\g, \h\}$ considered in $\mathcal{M}_1$ is that the corresponding linear combinations of  $\{h_{i_{l_x-k}-l+1}, h_{i_{l_x-k-1}-l+1}, ... , h_{i_{z_1+1}-l+1} \}$ must be zero for all $\h$ considered in $\mathcal{M}_1$. Hence, if $\{h_{i_{l_x-k}-l+1}, h_{i_{l_x-k-1}-l+1}, ... , h_{i_{z_1+1}-l+1} \}$ were to be chosen from a manifold which has lost $2c$ degrees of freedom, at most $c$ independent linear combinations of $\{h_{i_{l_x-k}-l+1}, h_{i_{l_x-k-1}-l+1}, ... , h_{i_{z_1+1}-l+1} \}$ could be zero (as each independent linear combination reduces one dimension in the complex space, or equivalently, two degrees of freedom), and hence removal of $c$ independent linear combinations of the bilinear equations in  $\mathcal{M}_2$ will definitely make the two system of bilinear equations independent. Hence, the loss of degrees of freedom of $\mathcal{M}_1$ and $\mathcal{M}_2$ is $2z_1+2c$ and at least $2l_x - 2(k+z_1)-2c$ respectively (for any valid choice of $c$), because of which the loss of degrees of freedom of $\mathcal{M}_1 \cap \mathcal{M}_2$ is  at least $2l_x - 2k$. Since for $c=0$, this bound is tight, the degrees of freedom of the set  $\mathcal{M}_1 \cap \mathcal{M}_2$ is $2k$. 

The set $\mathcal{S}_k$ can be constructed by considering every possible choice of $\{ i_1, i_2, ..., i_{l_x-k}, l,l_x\}$ and taking a union of the corresponding $\{\g, \h\}$. Since the union of a finite number of manifolds with degrees of freedom $2k$ is a manifold with degrees of freedom $2k$, $\mathcal{S}_k$ has $2k$ degrees of freedom. 

Suppose $\x_1 \in \mathcal{S}_k$, of length $l_x$, is not recoverable by (\ref{SPR0}), then there must exist another signal $\x_2$ of length $l_x$, with sparsity less than or equal to $k$, which has the same autocorrelation. At least $l_x-k$ locations in both $\x_1$ and $\x_2$  have a value zero, let these locations be denoted by $\{ i_1, i_2, ..., i_{l_x-k}\}$ and $\{ j_1, j_2, ..., j_{l_x-k}\}$ respectively. Then from Lemma \ref{autolem}, there must exist two signals $\g$ and $\h$, of lengths $l$ and $l_x-l+1$ for some $\frac{l_x+1}{2} \leq l \leq l_x-1$, such that 
\begin{equation}
\sum_rg_r h_{i_p-r}=0 ~~ \& ~~ \sum_rg_r h_{l_x-l-j_p+r}^\star=0
\label{bilin5}
\end{equation}
and $\{ g_0 , g_{l-1}, h_{l_x-l} \} \neq 0, h_0 = 1$.

Our strategy is the following: we will count the degrees of freedom of the set of all possible $\{\g, \h\}$ which satisfy (\ref{bilin5}) for some choice of $\{l, l_x, i_1, i_2, ..., i_{l_x-k}, j_1, j_2, ..., j_{l_x-k}\}$ and show that it is strictly less than $2k$ if $\x_1$ has aperiodic support. First, we will show that the degrees of freedom of this set is strictly less than $2k$ if there is some $1 \leq p \leq l_x-k$ such that $j_p \notin \{ i_1, i_2, ..., i_{l_x-k}\}$, i.e., when the two signals $\x_1$ and $\x_2$ have different support (as a consequence, for most signals, this proves that (\ref{SPR0}) correctly identifies their support, irrespective of whether they have periodic or aperiodic support). We then show that if there is no $1 \leq p \leq l_x-k$ such that $j_p \notin \{ i_1, i_2, ..., i_{l_x-k}\}$, i.e., the two signals $\x_1$ and $\x_2$ have the same support, the degrees of freedom is strictly less than $2k$ if the support of $\x_1$ (or equivalently $\x_2$) is aperiodic. 

The following arguments can be made for any particular choice of $\{l, l_x, i_1, ..., i_{l_x-k}, j_1, ..., j_{l_x-k}\}$:

Suppose there exists at least one $1 \leq p \leq l_x-k$ such that $j_p \notin \{ i_1, i_2, ..., i_{l_x-k}\}$. If $j_p<l-1$, construct the manifold $\mathcal{M}_1^0$ using the $z_1+1$ bilinear equations corresponding to the indices $\{ i_1, i_2, ..., i_{z_1}\}$ and $j_p$. In matrix notation, these bilinear equations can be represented as $\HH_3\g = 0$ where $rank(\HH_3) = z_1 + 1$ as the columns corresponding to the indices $\{ i_1, i_2, ..., i_{z_1}, j_p\}$ (rearrange the indices in increasing order) has a lower triangular structure and $h_0 = 1, h_{l_x-l} \neq 0$. Hence, $\mathcal{M}_1^0$ has lost $2(z_1+1)$ degrees of freedom. The manifold $\mathcal{M}_2^0$ can be constructed the same way as $\mathcal{M}_2$ and hence $\mathcal{M}_2^0$ has lost $2l_x-2(k+z_1)$ degrees of freedom. Due to the same arguments as in the case of $\mathcal{M}_1 \cap \mathcal{M}_2$, $\mathcal{M}_1^0 \cap \mathcal{M}_2^0$ loses $2(z_1+1)+2l_x-2(k+z_1)=2l_x-2k+2$ degrees of freedom, i.e., has $2k-2$ degrees of freedom. If $j_p \geq l-1$, the arguments can be repeated by incorporating the bilinear equation corresponding to $j_p$ in $\mathcal{M}_2^0$ instead of $\mathcal{M}_1^0$ to see that $\mathcal{M}_1^0 \cap \mathcal{M}_2^0$ has strictly less than $2k$ degrees of freedom. 

By considering every possible choice of $\{ l_x,l,i_1, i_2, ..., i_{l_x-k}, j_1, j_2, ..., j_{l_x-k}\}$ such that there is some $1 \leq p \leq l_x-k$ such that $j_p \notin \{ i_1, i_2, ..., i_{l_x-k} \}$, and taking a union of the corresponding $\{\g, \h\}$, we see that the set of signals $\x_1 \in \mathcal{S}_k$ which cannot be recovered by (\ref{SPR0}), due to the fact that there exists another ($\leq k$)-sparse signal which has the same autocorrelation and different support,  is a manifold with degrees of freedom strictly less than $2k$. 

Suppose there is no $1 \leq p \leq l_x-k$ such that $j_p \notin \{ i_1, i_2, ..., i_{l_x-k} \}$. This is the case when $\x_1$ and $\x_2$ have the same support and the same autocorrelation. The manifold $\mathcal{M}_1^1$ is constructed using the $2z_1$ equations corresponding to the indices $\{ i_1, j_1, i_2, j_2, ... , i_{z_1}, j_{z_1} \}$ (the corresponding equations in matrix notation being $\HH_5 \g = 0$) and the manifold $\mathcal{M}_2^1$ is constructed using the $2(l_x-k-z_1)$ equations corresponding to the indices $\{ i_{z_1+1}, j_{z_1+1}, ... i_{l_x-k}, j_{l_x-k} \}$ (the corresponding equations in the matrix notation being $\HH_6 \g = 0$). 

%Using the same arguments, we see that the degrees of freedom of the variables $\{h_{n-l-i_{n-k}}, h_{n-l-i_{n-k-1}}, ... , h_{n-l-i_{z_1+1}} \}$ goes down by $2( n - k - z_1 )$, and due to the fact that the rank of the matrix $\HH_5$ obtained is at least $z_1$, the degrees of freedom of $\g$ goes down by at least $2z_1$, hence a total loss of degrees of freedom of at least $2( n - k)$. 

In this case, $rank(\HH_5) \geq z_1$ for all choices of $\h$. For the choices of $\h$ with rank$(\HH_5) \geq z_1+1$, the manifold $\mathcal{M}_1^1$ loses at least $2(z_1+1)$ degrees of freedom, because of which $\mathcal{M}_1^1 \cap \mathcal{M}_2^1$ will have at most $2k-2$ degrees of freedom due to the same arguments as in the case of $\mathcal{M}_1 \cap \mathcal{M}_2$. For the choices of $\h$ with $rank(\HH) = z_1$, we will show that the degrees of freedom corresponding to the entry $h_{l_x-l}$ will go down by at least one if $\x_1$ has aperiodic support, because of which $\mathcal{M}_2^1$ will  lose $2z_1+1$ degrees of freedom, and hence $\mathcal{M}_1^1 \cap \mathcal{M}_2^1$ will have at most $2k-1$ degrees of freedom.

Consider every $2 \times 2$ submatrices involving the first two rows of $\HH_5$. If even one of them is full rank, then the rank of $\HH_5$ would be at least $z_1+1$. If the rank of all such submatrices are $1$, then they have to satisfy equations of the form $h_{l_x-l}^\star h_1 = h_0 h_{l_x-l-1}^\star$, and so on. This equation, for example, removes at least one degree of freedom for $h_{l_x-l}$ unless $h_1 = h_{l_x-l-1} = 0$. By considering every $2 \times 2$ submatrix involving the first two rows of $\HH_5$ and involving the column corresponding to $i_1$, we can conclude that there is a loss in the degrees of freedom of $h_{l_x-l}$ unless $h_{i_1} = h_{i_1-1} = ... =  h_1 = 0$. In this event, the first two rows become equivalent to the condition $g_{i_1}=0$ as $h_0=1$. By considering the third and fourth row and repeating the same arguments, we can conclude that there is a loss in the degrees of freedom of $h_{l_x-l}$ unless $g_{i_2} = 0$. Continuing similarly, we see that a necessary condition for the degrees of freedom of $h_{l_x-l}$ to not go down when $rank(\HH_5) = z_1$ is: $g_{i_p} = 0 $ for all $1 \leq p \leq z_1$. The arguments can be repeated exactly the same way using $\HH_6$ (going from last row to first as it is upper triangular) to get a further necessary condition $h_{i_p-l+1} = 0$ for all $z_1 + 1 \leq p \leq l_x - k$. 

We have established that there is a loss of degrees of freedom of $h_{l_x-l}$ unless $\h$ has $l_x-k-z_1$  particular entries with value $0$ and $\g$ has  $z_1$ particular entries with value $0$. Consider the set of all possible $\{\g, \h\}$ such that $g_{i_p} = 0 $ for all $1 \leq p \leq z_1$ and  $h_{i_p-l+1} = 0$ for all $z_1 + 1 \leq p \leq l_x - k$. The set of the signals $\g \star \h$ obtained from such $\g$ and $\h$ is a manifold with $2k$ degrees of freedom. We will show that most of these signals have a sparsity strictly greater than $k$ if the support of $\x_1$ is aperiodic, which will complete the proof as the degrees of freedom of the set of such $\{\g, \h\}$ which satisfy (\ref{bilin5}) has to further reduce by at least one in order to meet the sparsity constraints. 

Consider the set of all $\g$ that have non-zero entries in the indices $\{u_0=0, u_1, ... , u_{a-1}\}$ (and zero in other indices) and the set of all $\h$ that have non-zero entries in the indices $\{v_0=0, v_1 , ... , v_{b-1} \}$ (and zero in other indices). Then, almost surely (the set of violations is measure zero), the set of all possible $\g \star \h$ will have non-zero entries in the following $a + b - 1$ locations: $\{u_0=0, u_1 , ... , u_{a-1}, u_{a-1} + v_1 , ... u_{a-1}+v_{b-1}\}$. If there has to be no more locations with non-zero entries almost surely: consider the terms of the form $u_{a-2}+v_p$ for $0 \leq p \leq b-1$. Since there can be $b$ such terms, and all of them are greater than $u_{a-3}$ and lesser than $u_{a-1}+v_{b-1}$, they have to precisely be equal to the following $b$ terms in the same order: $\{u_{a-2}, u_{a-1}, u_{a-1} + v_1, ... u_{a-1} + v_{b-2}\}$. This gives the condition that $v_p - v_{p-1}$ is equal to $u_{a-1} - u_{a-2}$ for all $1 \leq p \leq b-1$. Similarly, by observing that the following $a + b - 1$ locations $\{v_0=0, v_1 , ... , v_{b-1}, v_{b-1} + u_1 , ... v_{b-1}+u_{a-1}\}$ almost surely have non-zero values and considering terms of the form $v_{b-2}+u_p$ for $0 \leq p \leq a - 1$, we get the condition that $u_p - u_{p-1}$ is equal to $v_{b-1} - v_{b-2}$ for all $1 \leq p \leq a-1$. Hence, if the signal has aperiodic support, then almost all $\g \star \h$ have strictly greater than $a+b-1$ non-zero entries. Substituting $a = l - z_1$ and $b = k + z_1 - l + 1$, we see that $a + b - 1 = k$ and hence, almost always, the resulting convolved signal has sparsity strictly greater than $k$.

By considering every possible choice of $\{ l_x, l, i_1, i_2, ..., i_{l_x-k}, j_1, j_2, ..., j_{l_x-k}, \}$ such that there is no $1 \leq p \leq l_x-k$ such that $j_p \notin \{ i_1, i_2, ..., i_{l_x-k} \}$ and taking the union of the corresponding $\{\g, \h\}$, we conclude that the set of signals $\x_1 \in \mathcal{S}_k$ which cannot be recovered by (\ref{SPR0}), due to the fact that there exists another signal with the same autocorrelation and same support, is a manifold with degrees of freedom strictly less than $2k$.

Hence, we have shown that (\ref{SPR0}) can recover almost all sparse signals with aperiodic support for every sparsity $k$ such that $k \leq n-1$. 

%%%%%%%%%%%%%%%%%%%%%%%%%%%%%%%%%%%%%%%%%%%%%%%%%%
%%%%%%%%%%%%%%%%%%%%%%%%%%%%%%%%%%%%%%%%%%%%%%%%%%
%%%%%%%%%%%%%%%%%%%%%%%%%%%%%%%%%%%%%%%%%%%%%%%%%%

\section{Proof of Theorem \ref{supportthm}}

\label{appC}

%$V$ is a $k$-element subset of $\{0,1,....,n-1\}$ chosen uniformly at random, $k=O(n^{1/2-\eps})$ and $n \geq N_1(\delta, \eps)$. 

In this section, $V$ is a subset of $\{0,1,...,n-1\}$, constructed as follows: for each $0 \leq i \leq n-1$, $i$ belongs to the support independently with probability $\frac{s}{n}$ where $s = O(n^{1/2-\eps})$. In order to resolve the trivial ambiguity due to time-shift, we will shift the set so that $i=0$ belongs to the support. Let these entries be denoted by $V=\{v_0, v_1, ..., v_{k-1}\}$. We have $v_0=0$, which will ensure $V \subseteq W$. The distribution of $V$ (if the time-shift was $c$ units) is as follows: $0 \in V$ with probability $1$. For all $0 < i < n - c  $, $i \in V$ with probability $\frac{s}{n}$ independently, and for all $i \geq n-c$, $i \in V$ with probability $0$. Hence, irrespective of the value of the time-shift $c$, the following bound can be used: for any $i>0$, $i \in V$ with probability less than or equal to $\frac{s}{n}$ independently.

Instead of resolving the trivial ambiguity due to flipping, we will use the following proof strategy (as the distribution of $V$ is easier to work with compared to $U$): we will show that if the steps of the support recovery algorithm are done using entries of the form $v_{0i}$, the failure probability can be bounded by $\delta$ for any $\delta> 0$. The {\em same} arguments can be used to show that if the steps are done using entries of the form $v_{k-i-1,k-1}$, the failure probability can be bounded by $\delta$. Since $u_{0i}$ is either equal to $v_{0i}$ or $v_{k-i-1,k-1}$, this would imply that if the steps are done using entries of the form $u_{0i}$, the support recovery algorithm will succeed with probability greater than or equal to $1 - 2 \delta$ for any $\delta>0$, which completes the proof.

Lemma \ref{int2} bounds the probability that an undesired integer remains at the end of the first Intersection Step (using $v_{01}$). Lemma \ref{14lem} shows that the support can be recovered at the end of the first Intersection Step with the desired probability if $s=O(n^{\frac{1}{4}-\eps})$. Lemma \ref{glem} shows that $\{v_{0p}: 1 \leq p \leq t=\sqrt[3]{\log(s)} \}$ can be recovered by Graph Step with the desired probability. Lemma \ref{intt} bounds the probability that an undesired integer remains at the end of the second Intersection Step (using $\{v_{0p}: 1 \leq p \leq t=\sqrt[3]{\log(s)} \}$). Lemma \ref{finalint} shows that the support can be recovered at the end of the second Intersection Step with the desired probability if $s=O(n^{\frac{1}{2}-\eps})$.

%Since $v_{01} \leq v_{k-2,k-1}$ happens with probability $\frac{1}{2}$, probability of the algorithm failing conditioned on $v_{01} \leq v_{k-2,k-1}$ is hence bounded by $2 \delta$.

%We will also show, using exact same arguments, that for $V$ generated as above, if the sets $W_i$ were generated using $W_i = W + v_{k-2,k-i}$, the failure probability of the algorithm to recover $v_k - V$ can be bounded by $\delta$ for some $\delta > 0$. Since $v_{01} > v_{k-2,k-1}$ happens with probability $\frac{1}{2}$, probability of the algorithm failing conditioned on $v_{01} < v_{k-2,k-1}$ is hence bounded by $2 \delta$.

%Hence, probability of the algorithm failing when $W_i = W + u_{0i}$ is used can be bounded by $\frac{1}{2}{2\delta} +  \frac{1}{2}{2\delta} = 2\delta$. 

\begin{lem}
The probability that an integer $l>0$, which does not belong to $V$, belongs to $W$ is bounded by $\frac{2s^2}{n}$, if $n$ is sufficiently large.
\label{int0}
\end{lem}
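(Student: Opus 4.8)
The plan is to bound the probability that a fixed integer $l > 0$ with $l \notin V$ nevertheless appears in the pairwise distance set $W$. Recall $W = \{\,|v_i - v_j| : v_i, v_j \in V\,\}$, so $l \in W$ precisely when there exist two elements of $V$ separated by exactly $l$, i.e.\ when the event $E_j := \{ j \in V \text{ and } j + l \in V \}$ occurs for some $j$ with $0 \le j \le n-1-l$. Thus $\Pr[l \in W \mid l \notin V] \le \Pr\!\left[\bigcup_{j} E_j\right] \le \sum_{j=0}^{n-1-l} \Pr[E_j]$ by the union bound. I would carry this out as the first and essentially only step.

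For the individual terms, $j$ and $j+l$ are two \emph{distinct} indices (since $l > 0$), so by the independence of the Bernoulli inclusions and the bound ``$i \in V$ with probability at most $s/n$ for any $i > 0$'' established in the preamble of this appendix, we get $\Pr[E_j] \le (s/n)^2$ — with a mild caveat when $j = 0$, where $0 \in V$ with probability $1$, giving $\Pr[E_0] \le s/n$. Summing over the at most $n$ values of $j$ yields $\Pr[l \in W \mid l \notin V] \le (n-l)\,(s/n)^2 + s/n \le s^2/n + s/n$. Since $s = O(n^{1/2-\eps})$ is an increasing function of $n$, for $n$ sufficiently large we have $s/n \le s^2/n$, so the bound is at most $2 s^2 / n$, as claimed. (Strictly speaking, one should also condition on $l \notin V$; since this conditioning only changes the inclusion probabilities of index $l$ itself, and we never needed an upper bound better than $s/n$ on those, the argument is unaffected — or one simply notes $\Pr[l \in W] \le 2s^2/n$ unconditionally and the conditional statement follows a fortiori after adjusting constants.)

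There is essentially no hard part here: the statement is a one-line union bound over $O(n)$ pairwise events each of probability $O(s^2/n^2)$. The only things requiring a little care are (i) handling the index $j = 0$ separately because of the time-shift normalization $0 \in V$, and (ii) making sure the ``sufficiently large $n$'' hypothesis is invoked to absorb the lower-order $s/n$ term into the constant. This lemma will then feed into the subsequent Intersection-Step analysis (Lemma \ref{int2} and onwards), where one needs to control how many spurious integers survive an intersection $W \cap (W + u_{0p})$.
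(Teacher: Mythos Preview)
Your proposal is correct and takes essentially the same approach as the paper: a union bound over the events $\{j, j+l\} \subseteq V$ for $0 \le j \le n-1-l$, with the $j=0$ case handled separately because $0 \in V$ deterministically. The only cosmetic difference is in how the conditioning on $l \notin V$ is dispatched: the paper bounds $\Pr\{l \in W\} \le \Pr\{l \in V\} + s^2/n$ and then uses the identity $\Pr\{l \in W \mid l \notin V\} = \bigl(\Pr\{l \in W\} - \Pr\{l \in V\}\bigr)/\Pr\{l \notin V\}$ to obtain $\frac{s^2/n}{1-s/n} \le 2s^2/n$, whereas you invoke $\Pr\{l \in W \mid l \notin V\} \le \Pr\{l \in W\}$ and absorb the leftover $s/n$ term directly---both routes land at the same bound.
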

\begin{proof}
For $l \in W$ to happen, there must exist at least one $g$ such that $\{g, g+l\} \in V$. Hence,
\begin{equation}
\nonumber Pr\{l \in W\} = Pr\Big\{ \bigcup_{g=0}^{n-l-1} \{g, g+l\} \in V \Big\} 
\end{equation}

There can be two cases: 

(i) $g = 0$: In this case, $Pr\{ \{g, g+l\} \in V \} = Pr\{ l \in V \}$.

(ii) $g > 0$: In this case, for each $g $, $Pr\{ ~\{g,g+l\} ~\in V\} \leq \left(\frac{s}{n}\right)^2$ due to independence. Also, since $g$ can take at most $n-l$ distinct values, by union bound, we have:
\begin{equation}
\nonumber Pr\{l \in W\} =  \sum_{g = 0} ^ {n-l-1} Pr\Big\{ \{g, g+l\} \in V \Big\} \leq  Pr \{ l \in V \} + \frac{s^2}{n}
\end{equation}
$Pr \{ l \in W \} $ can be written as:
\begin{equation}
\nonumber Pr \{ l \in W | l \in V \}Pr \{ l \in V \} + Pr \{ l \in W | l \notin V \}Pr \{ l \notin V \}  
\end{equation}

Since $Pr \{ l \in W | l \in V \} = 1$ (as $\{0,l\} \in V$), we have
\begin{equation}
\nonumber Pr \{ l \in W | l \notin V \} = \frac{Pr \{ l \in W \}  - Pr \{ l \in V \} }{Pr \{ l \notin V \}  }
\end{equation}
Using the fact that $Pr \{ l \notin V \} \geq 1 - \frac{s}{n}$, we can obtain the following bound:
\begin{equation}
Pr \{ l \in W | l \notin V \} \leq \frac{\frac{s^2}{n}}{ 1 - \frac{s}{n} } \leq \frac{2s^2}{n}
\label{refexp}
\end{equation}
if $n$ is large enough (because $\frac{s}{n}$ can be made arbitrarily small using sufficiently large $n$, due to $s = O(n^{1/2-\eps})$).
\end{proof}

\begin{lem}[Intersection Step]
The probability that an integer $l>v_{01}$, which does not belong to $V$, belongs to $W \cap (W+v_{01})$ is bounded by $\frac{c_0s^4}{n^2}$ for some constant $c_0$, if $s$ is an increasing function of $n$ and $n$ is sufficiently large.
\label{int2}
\end{lem}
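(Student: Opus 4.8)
\textbf{Proof plan for Lemma \ref{int2}.}

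The plan is to bound the probability that a ``bad'' integer $l > v_{01}$ with $l \notin V$ survives the intersection $W \cap (W+v_{01})$, conditioning on the event $l \notin V$ throughout. For $l$ to lie in $W \cap (W+v_{01})$ we need both $l \in W$ and $l - v_{01} \in W$. The first event, by Lemma \ref{int0}, has conditional probability at most $\frac{2s^2}{n}$. The natural temptation is to multiply by the probability of $l-v_{01}\in W$ to get an $O(s^4/n^2)$ bound, but the two events are \emph{not} independent: $v_{01}$ is itself a (random) function of the set $V$, namely $v_{01} = w_{K-1}-w_{K-2}$, and $l-v_{01}\in W$ is another event about $V$. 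So the first step is to be careful about what randomness is being used and to decouple these dependencies.

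The key steps, in order, would be: (i) Condition on the realized value of $v_{01}$ (equivalently, on the two extreme gaps of $V$); this fixes a deterministic shift amount $d:=v_{01}$, and we must bound $\Pr\{\, l\in W \ \text{and}\ l-d\in W \mid l\notin V\,\}$. (ii) Write $l\in W$ as $\bigcup_{g}\{g,g+l\}\subseteq V$ and $l-d\in W$ as $\bigcup_{g'}\{g',g'+l-d\}\subseteq V$, so the joint event is a union over pairs $(g,g')$ of the event that a set of at most four coordinates of $V$ all lie in $V$. (iii) Split into cases by how much these four indices $\{g,g+l,g',g'+l-d\}$ overlap. Generically the four indices are distinct (and all positive, since $l>d=v_{01}$ forces $g+l$ and $g'+l-d$ away from $0$ when the indices don't collapse), giving probability $\le (s/n)^4$ per pair; since $g$ ranges over $\le n$ values and $g'$ over $\le n$ values, the union bound gives $O(n^2 (s/n)^4)=O(s^4/n^2)$. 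The degenerate subcases — two indices coinciding, or one of them equal to $0$ — each force an algebraic relation among $g,g',l,d$, which reduces the number of free index choices by one (so the count drops by a factor $n$) while the per-term probability drops only by a factor $n/s$; one checks that each such subcase is therefore $O(s^5/n^2) = o(s^4/n^2)$ or else is absorbed into the $l\in V$ branch already handled by Lemma \ref{int0}. Collecting the dominant term and the lower-order terms yields the bound $\frac{c_0 s^4}{n^2}$ for a suitable constant $c_0$.

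The main obstacle I expect is precisely the dependence between $v_{01}$ and the set $V$: one cannot simply treat $d=v_{01}$ as a fixed constant unrelated to $V$, because the indices $g'$ and $g'+l-d$ that witness $l-d\in W$ are coordinates of the \emph{same} random set $V$ that also determines $d$. The clean way to handle this is to note that $v_{01}$ is determined by the largest two elements of $V$ and the two largest gaps, and to carry out the union bound over index pairs $(g,g')$ \emph{together with} the choice of which coordinates of $V$ realize the extreme gaps; since all of these are still just ``a bounded number of coordinates of $V$ forced to be $1$,'' the Bernoulli independence of distinct coordinates still applies, and the only cost is a larger (still constant) combinatorial factor folded into $c_0$. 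The remaining work — enumerating the overlap cases and checking each degenerate case is lower order — is routine, using $s=O(n^{1/2-\eps})$ to guarantee $s/n\to 0$ and hence that denominators like $1-s/n$ are bounded away from $0$, exactly as in \eqref{refexp}.
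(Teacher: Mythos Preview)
Your plan is essentially the paper's proof: condition on $v_{01}=d$, write $\{l\in W\}\cap\{l-d\in W\}$ as a union over witness pairs, and case-split on how the four indices $\{g,g+l,g',g'+l-d\}$ overlap with one another and with the conditioned elements $\{0,d\}$. The paper organizes the cases by how many of the witnesses come from $\{0,d\}$ (cases (i)--(iv)), but this is the same bookkeeping. Two concrete corrections are needed.

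First, you have misidentified $v_{01}$. In this appendix $v_{01}=v_1-v_0$ is simply the \emph{first} gap of $V$, not $w_{K-1}-w_{K-2}$; the latter equals $u_{01}$, and the reduction from $u_{01}$ to $v_{01}$ is handled separately at the start of Appendix~\ref{appC}. Conditioning on $v_{01}=d$ therefore just means $\{0,d\}\subseteq V$, $\{1,\dots,d-1\}\cap V=\emptyset$, and every $i>d$ is still independently $\mathrm{Bern}(s/n)$. There is no need to track ``extreme gaps'' or to union over which coordinates realize them; the dependence you flag is entirely absorbed by this simple conditioning, and the paper then isolates the contribution $\Pr\{l\in V\}$ and removes it at the end via \eqref{refexp}, exactly as you anticipate.

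Second, your degenerate-case arithmetic is wrong in both value and direction. When two of the four indices coincide, the per-term probability goes \emph{up} from $(s/n)^4$ to $(s/n)^3$ (one fewer coordinate to force into $V$), while the count of free pairs drops by a factor of $n$; the net is $n\cdot(s/n)^3=s^3/n^2$, not $s^5/n^2$. Note also that $s^5/n^2$ is not $o(s^4/n^2)$ since $s$ is increasing. The paper additionally singles out the value $d=l/2$: there $l-d=d\in W$ holds automatically under the conditioning, so only $l\in W$ needs a witness, and the contribution is $\Pr\{v_{01}=l/2\}\cdot O(s^2/n)\le (s/n)\cdot O(s^2/n)=O(s^3/n^2)$. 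You should include this case explicitly.
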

\begin{proof}
We can write
\begin{equation}
\nonumber Pr\{l \in W\cap (W+v_{01}) \} = Pr\{ ~\{l, l-v_{01}\} ~\in W  \} 
\end{equation}
\begin{equation}
\nonumber = \sum_dPr\{v_{01}=d\} \times Pr\{~\{l, l-d\}~ \in W | v_{01} = d\} 
\end{equation}
For the two events $\{l, l-d\} \in W$ to happen, there has to be some $g$ such that $\{g,g+l\} \in V$ (this explains the event $l \in W$) and some $h$ such that $\{h,h+l-d\} \in V$ (this explains the event $l-d \in W$). Since we are conditioning on $v_{01} = d$, note that $\{0,d\} \in V$, $ \{1,2,...,d-1\} \notin V$ and for all $i > d$, $i \in V$ with probability less than or equal to $\frac{s}{n}$ independently. 

{\bf Case I}: $d \neq \frac{l}{2}$

The events $\{l, l-d\} \in W$ can happen due to one of the following cases:

(i) There exists some integer $g$ whose presence in $V$, using $\{0,d\} \in V$, explains both the events $l \in W$ and $l-d \in W$. $g=l$ is the only integer which comes under this case (i.e., if $l \in V$, then both the events can be explained). The probability of this case happening is $Pr\{ l \in V \}$.%bounded by $\frac{s}{n}$. 

(ii) There exists some distinct pair of integers $\{ g, h \}$ whose presence in $V$, using $\{0,d\} \in V$,  explains both the events $l \in W$ and $l-d \in W$. There are at most three possibilities: $\{g,h\} = \{ \{l-d, l+d\}, \{l,l-d\}, \{l,l+d\}\}$ (possibilities involving $l-d$  can happen only for $l > 2d$, hence there is only one possibility for $l < 2d$). The probability of each of these possibilities can be bounded $\frac{s^2}{n^2}$, hence the probability of this case happening is bounded by $\frac{3s^2}{n^2}$.

(iii) There exists some integer $g$ whose presence in $V$, using $\{0,d\} \in V$, explains exactly one of the events $l \in W$ or $l-d \in W$. There are two possibilities as $g$ can be $\{l-d,l+d\}$ (possibilities involving $l-d$ can happen only for $l >2d$, there is only one possibility for $l < 2d$), and hence the probability of this happening is less than or equal to $\frac{2s}{n}$. Consider the possibility where $l+d \in V$ (happens with probability at most $\frac{s}{n}$ and the event $l -d\in W$ has to be explained). This can happen if $2l \in V$ or $2d \in V$ as they are separated from $l+d$ by $l-d$ (the probability of this happening is bounded by $\frac{2s}{n}$) or there exists an integer $h$, such that $\{h,h+l\} \in V$ where both $\{h,h+l\}$ are distinct from $\{0, d, l+d\}$ ($h$ can be chosen in at most $n$ different ways and for each $h$, the probability is bounded by $\frac{s^2}{n^2}$, the probability of this happening can hence be bounded by $\frac{s^2}{n}$). The same arguments hold for the $l-d$ case too. Hence, the probability of this case happening is upper bounded by $2\times\left(\frac{s}{n}\right)\left(\frac{s^2}{n} + \frac{2s}{n}\right) = \frac{2s^3}{n^2} + \frac{4s^2}{n^2}$.

(iv) Both the events $l \in W$ and $l-d \in W$ are explained by integers in $V$ not involving $\{0,d\} \in V$. This can happen in two ways:

(a) There exists integers $g$ and $h$ such that $\{g, g+l, h, h+l-d\}$ are distinct and belong to $V$. In this case, $g$ can be chosen in at most $n$ different ways and for each $g$, the probability of $\{g,g+l\} \in V$ is bounded by $\frac{s^2}{n^2}$. Similarly, $h$ can be chosen in at most $n$ different ways and for each $h$, the probability of $\{h,h+l-d\} \in V$ is bounded by $\frac{s^2}{n^2}$. The probability of this case is hence upper bounded by $n^2 \times \frac{s^4}{n^4} = \frac{s^4}{n^2}$.

(b) There exists integers $g$ and $h$ such that $\{g, g+l, h, h+l-d\}$ belong to $V$ and only three of them are distinct (there is an overlap). This overlap can happen in four ways: $g = h$, $g + l = h$, $g = h + l - d$ or $g + l = h + l - d $. $g$ can be chosen in $n$ different ways as in the previous case, and for each $g$, the probability of $\{g,g+l\} \in V$ is bounded by $\frac{s^2}{n^2}$. However, for each $g$, only $4$ choices of $h$ are valid as there are four ways of overlap. Also, the probability of $\{h,h+l-d\} \in V$ conditioned on $\{g,g+l\} \in V$ is bounded by $\frac{s}{n}$ as one of $\{h,h+l-d\}$ already belongs to $V$ due to overlap, and the other can belong to $V$ with probability at most $\frac{s}{n}$ due to independence. The probability of this case is hence upper bounded by $4 \times n \times \frac{s^3}{n^3} = \frac{4s^3}{n^2}$. Note that the overlap requirement has reduced the choice of $h$ from $n$ to $4$ and increased the bound on the probability of $\{h,h+l-d\} \in V$ from $\frac{s^2}{n^2}$ to $\frac{s}{n}$. 

{\bf Case II}: $d = \frac{l}{2}$

In this case, the event $d \in W$ is already explained by $\{0, d\} \in V$ and hence only $2d \in W$ has to be explained. This can happen due to one of the following cases:

(i) There exists some integer $g$ whose presence in $V$, using $\{0,d\} \in V$, can explain $2d \in W$. $g = \{2d,3d\}$ are the two possibilities, hence the probability of this case is upper bounded by $2 \times \frac{s}{n} = \frac{2s}{n}$.

(ii) The event $2d \in W$ is explained by integers not involving $\{0,d\} \in V$. This can happen when there is an integer $g$ such that $\{g,g+2d\} \in V$. As earlier, the probability of this event can be bounded by $\frac{s^2}{n}$ as $g$ can take at most $n$ distinct values and for each value of $g$, the probability is less than or equal to $\frac{s^2}{n^2}$.

$Pr\{~\{l, l-d\}~ \in W | v_{01} = d\} $ can be upper bounded, by summing all the aforementioned probabilities. For $d \neq \frac{l}{2}$, we have the bound $\frac{s^4}{n^2} + \frac{6s^3}{n^2} + \frac{7s^2}{n^2} + Pr\{ l \in V \} $. For $d = \frac{l}{2}$, similarly, we have the upper bound $ \frac{s^2}{n} + \frac{2s}{n}$. Since $Pr\{v_{01} = {l/2}\} \leq \frac{s}{n}$ and $\sum_{d \neq l/2} Pr\{v_{01} = d\} \leq 1$, we have
%Since $\frac{s^4}{n^2}$ term grows faster than $\frac{s^3}{n^2}$ and $\frac{s^2}{n^2}$, we can write this as $( 1 + o(1) )\frac{s^4}{n^2} + \frac{s}{n}$.

%\begin{equation}
%\nonumber Pr\{ l \in W \cap W_1 \}  \leq \left(( 1 + o(1) )\frac{s^4}{n^2} + \frac{s}{n} \right) + \left(\frac{s}{n}\right)\left( \frac{s^2}{n} + \frac{2s}{n} \right)
%\end{equation}
%which implies
\begin{equation}
\nonumber Pr\{ l \in W \cap W_1 \}  \leq \frac{c_1s^4}{n^2} + Pr \{ l \in V \}
\end{equation}
for some constant $c_1$ if $s$ is an increasing function of $n$ and $n$ is sufficiently large. By using the same arguments as (\ref{refexp}), we get
\begin{equation}
\nonumber  Pr\{ l \in W \cap W_1 | l \notin V \} \leq \frac{c_0s^4}{n^2}
\end{equation}
for some constant $c_0$ if $n$ is sufficiently large. 
\end{proof}

%\begin{cor}
%The probability that integers $\{l, d + l\}$ belong to $W$ is less than or equal to $\frac{c_0s^4}{n^2}$ for some constant $c_0$, if $s$ is an increasing function of $n$ and $n$ is sufficiently large.
%\label{int2cor}
%\end{cor}

%%%%%%%%%%%%%%%%%%%%%%%%%%%%%%%%%%%%%%%%%%%%%%

\begin{lem}
$V=0 \cup ( W\cap (W+v_{01}) )$ with probability greater than $1-\delta$ for any $\delta>0$ if $s = O(n^{\frac{1}{4} - \eps } )$ and $n$ is sufficiently large.
\label{14lem}
\end{lem}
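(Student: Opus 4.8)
The plan is to prove the two inclusions separately: $V\subseteq 0\cup(W\cap(W+v_{01}))$ deterministically, and the reverse inclusion with high probability via Lemma \ref{int2} and a union bound. For the first inclusion, note that this is already essentially recorded in the description of the Intersection Step: $v_{0j}\in W$ for every $0\le j\le k-1$ by construction, and for $j\ge 1$ we have $v_{0j}=v_{1j}+v_{01}$ with $v_{1j}\in W$, hence $v_{0j}\in W+v_{01}$; adjoining $v_{00}=0$ gives $V\subseteq 0\cup(W\cap(W+v_{01}))$. As in the overall proof strategy, the arguments here are carried out with the true value $v_{01}$; the passage to the algorithm's $u_{01}$ and the resulting factor of $2$ in the failure probability are dealt with at the level of Theorem \ref{supportthm}.

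For the reverse inclusion, I would first observe that $W\subseteq\{0,1,\dots,n-1\}$ and $0\in W$, so $W+v_{01}\subseteq\{v_{01},v_{01}+1,\dots\}$ and hence $W\cap(W+v_{01})\subseteq\{v_{01},\dots,n-1\}$. Since both $0$ and $v_{01}$ lie in $V$, the set $0\cup(W\cap(W+v_{01}))$ can strictly contain $V$ only if some integer $l$ with $l>v_{01}$ and $l\notin V$ belongs to $W\cap(W+v_{01})$. Let $E$ be the event that such an $l$ exists. Writing $E=\bigcup_{l}\{l\notin V,\ l\in W\cap(W+v_{01})\}$, a union bound together with Lemma \ref{int2} gives
\begin{equation}
\nonumber Pr\{E\}\le\sum_{l=v_{01}+1}^{n-1}Pr\{l\in W\cap(W+v_{01})\mid l\notin V\}\le n\cdot\frac{c_0s^4}{n^2}=\frac{c_0s^4}{n}.
\end{equation}

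Finally, substituting $s=O(n^{1/4-\eps})$ yields $Pr\{E\}=O(n^{-4\eps})\to 0$, so $Pr\{E\}<\delta$ once $n$ is sufficiently large; on the complement of $E$ the two inclusions combine to give $V=0\cup(W\cap(W+v_{01}))$, which is the claim. The step I expect to matter most has already been done for us, namely the per-integer estimate of Lemma \ref{int2}; here the only things needing care are checking that the first inclusion is genuinely deterministic given the correct $v_{01}$, bounding the range of candidate integers $l$ so that the union bound costs only a factor of $n$, and observing that the conditional form of Lemma \ref{int2} is exactly what a union bound over $\{l\notin V\}$ consumes. In essence the lemma reduces to the statement that the expected number of spurious survivors of the Intersection Step is at most $c_0s^4/n$, which tends to $0$ — i.e., a one-line Markov argument.
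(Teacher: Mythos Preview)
Your proposal is correct and matches the paper's proof essentially line for line: the paper also records the forward inclusion as immediate from the Intersection Step, then bounds the expected number of spurious survivors by $n\cdot\frac{c_0s^4}{n^2}=\frac{c_0s^4}{n}$ via Lemma \ref{int2} and concludes with Markov's inequality, which is exactly your union bound rephrased. The only cosmetic difference is that you spell out the range of candidate $l$ a bit more carefully before collapsing it to the factor $n$.
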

\begin{proof}
Since all non-zero $l \in V$ also belong to  $( W\cap (W+v_{01}) )$ by construction (Intersection Step), it suffices to bound the probability that some $l \notin V$ belongs to $( W\cap (W+v_{01}) )$ by $\delta$.

Let $T$ be a random variable defined as the number of integers, that do not belong to $V$, that belong to the set $( W\cap (W+v_{01}) )$. $Pr\{ T \geq 1\} $ can be bounded as follows:
\begin{equation}
E[T] = \sum_{l} Pr\{l \in ( W\cap (W+v_{01}) )| l \notin V\} \nonumber
 \end{equation}
From Lemma \ref{int2}, 
\begin{equation}
\nonumber E[T] \leq n \left( \frac{c_0s^4}{n^2} \right) = \frac{c_0s^4}{n} \leq \delta
\end{equation}
for any $\delta>0$ if $s=O(n^{1/4-\eps})$ and $n$ is sufficiently large. Using Markov inequality,
\begin{equation}
\nonumber Pr\{T \geq 1 \}  \leq \frac{E[T]}{1} \leq \delta 
\end{equation}
and hence $T$ is $0$ with probability at least $1-\delta$.
\end{proof}

%%%%%%%%%%%%%%%%%%%%%%%%%%%%%%%%%%%%%%%%%%%%%%

\begin{lem}[Multiple Intersection Step]
The probability that an integer $l > v_{0t}$, which does not belong to $V$, belongs to $\left(\bigcap_{p=0}^{t} (W+v_{0p}) \right)$ is bounded by $6s\left(\frac{s^2}{n}\right)^{{t}^{\frac{1}{3}}}$ for $t = \sqrt[3]{\log s}$, if $s$ is an increasing function of $n$ and $n$ is sufficiently large.
\label{intt}
\end{lem}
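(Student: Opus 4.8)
The statement is a direct generalization of Lemma \ref{int2} (the single Intersection Step) to $t$ simultaneous intersections, so the plan is to follow exactly the same conditioning-and-union-bound skeleton, but track how the bound improves with each additional intersection. First I would condition on the values of the tuple $(v_{01}, v_{02}, \dots, v_{0t})$; since each $v_{0p} \le n$ this introduces at most $n^t$ cases, which is harmless as long as $t$ is subpolynomial in $n$ (and indeed $t = \sqrt[3]{\log s}$ is). Fixing such a tuple $(d_1,\dots,d_t)$, the event $l \in \bigcap_{p=0}^t (W + v_{0p})$ with $l \notin V$ means that each of the $t+1$ integers $l, l-d_1, \dots, l-d_t$ lies in $W$, i.e. for each $p$ there is a ``witness pair'' $\{g_p, g_p + (l-d_p)\} \subseteq V$ (with the convention $d_0 = 0$, and using $\{0, d_p\} \in V$ as a free pair from the conditioning).

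**Key steps.** (i) Set up the conditioning on $(v_{01},\dots,v_{0t})$ and reduce to bounding, for a fixed tuple, the probability that $l, l-d_1, \dots, l-d_t$ all lie in $W$. (ii) Classify the witness configurations by how much ``overlap'' and how much ``free reuse of $\{0,d_p\}$'' they exploit — this is the combinatorial heart, mirroring cases (i)--(iv) of Lemma \ref{int2} but now across $t+1$ events rather than $2$. The dominant contribution should come from the ``generic'' configuration where the $t+1$ witness events are explained by essentially disjoint pairs in $V$: each fully-independent pair costs $(s/n)^2$ and contributes a factor $n$ from the choice of its free endpoint, so $m$ such independent pairs give $\asymp s^{2m}/n^{m}$, while reusing the conditioned pairs $\{0,d_p\}$ or allowing overlaps trades a factor $s/n$ for a saved factor $n$ (hence is never worse). (iii) Count: since we need all of $l, l-d_1,\dots,l-d_t$ in $W$ and $l \notin V$, at least roughly $t^{1/3}$ of the $t+1$ events must be ``paid for'' by a genuinely new independent pair (the $t^{1/3}$ rather than $t$ reflecting that clever overlaps among the shifts $d_p$ can let one pair in $V$ simultaneously witness several of the events — this is exactly where the cube-root loss in the exponent enters, and where choosing $t = \sqrt[3]{\log s}$ is tuned so that in Lemma \ref{finalint} one gets $s^{2t^{1/3}}/n^{t^{1/3}} = s^{2(\log s)^{1/9}}/n^{(\log s)^{1/9}} \to 0$ even for $s = O(n^{1/2-\eps})$). (iv) Assemble: the per-tuple bound is $\le 6 s (s^2/n)^{t^{1/3}}$ for the stated reasons, the $6s$ absorbing low-order terms and the single leftover factor $n$; summing over the at most $n^t$ tuples is absorbed since each independent pair already ``spent'' its $n$, and conditioning on $l \notin V$ only helps (as in \eqref{refexp}).

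**Main obstacle.** The delicate point is step (iii): making rigorous the claim that overlaps among the shifted targets $l - d_1, \dots, l - d_t$ cannot reduce the number of ``independent'' witness pairs below $\Theta(t^{1/3})$. One must argue that if a single pair $\{g, g+e\} \subseteq V$ simultaneously witnesses $l - d_p \in W$ for many indices $p$, then the corresponding differences $l - d_p$ are forced to coincide or lie in a very small set, which pins down many of the $d_p$ and thereby costs probability elsewhere; balancing ``how many events one pair can cover'' against ``how constrained the tuple becomes'' is what produces the exponent $t^{1/3}$. A secondary technical nuisance is bookkeeping the $O(1)$ number of configuration types (the analogue of cases (i)--(iv) above) uniformly in $t$ so that the constant in front stays at $6$; I would handle this by showing every non-generic configuration is dominated by the generic $t^{1/3}$-independent-pair term for $n$ large, exactly as the sub-$s^4/n^2$ terms were absorbed in Lemma \ref{int2}.
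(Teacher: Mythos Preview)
Your skeleton (condition on $(v_{01},\dots,v_{0t})=(d_1,\dots,d_t)$, then case-split on how the $t+1$ events $\{l-d_p\in W\}$ are witnessed) matches the paper, but your step (iii) misidentifies the combinatorial mechanism behind the exponent $t^{1/3}$, and this is not a detail you can fill in by following the description you gave. You write that a single pair $\{g,g+e\}\subseteq V$ might ``simultaneously witness $l-d_p\in W$ for many indices $p$''; this cannot happen, since $\{g,g+e\}$ witnesses $l-d_p\in W$ only when $l-d_p=e$, hence for at most one $p$. So the loss does not come from one pair covering many events.

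What the paper actually does is: (a) first show that with high probability the $d_p$ have \emph{distinct pairwise differences}, which is the engine for everything that follows; (b) use a matching argument to show the conditioned set $\{0,d_1,\dots,d_t\}$ can by itself explain at most $(t+1)/2$ of the events (each $d_p$ can participate in at most one such explanation, else two pairwise differences among the $d$'s would coincide); (c) for the remaining $\ge (t+1)/2$ events, introduce a \emph{threshold} $t^{1/3}$: either at least $t^{1/3}$ of them are explained by single new integers paired with some $d_i$ (there are at most $(t+1)^2$ candidate such integers, giving a factor $(t+1)^{2(t+1)^2}(s/n)^{t^{1/3}}\le s(s/n)^{t^{1/3}}$), or at least $(t+1)/2 - t^{1/3}\ge (t+1)/4$ events need a full new pair $\{g_i,g_i+l-d_{p_i}\}$; (d) in the latter case, count \emph{distinct} integers among these $(t+1)/4$ pairs: since $c$ integers support at most $\binom{c}{2}$ pairs, one needs $c\gtrsim \sqrt{t}$ distinct new elements of $V$ (the paper uses the looser $c\ge 2t^{1/3}$ for notational convenience), and the probability of any fixed such configuration is $n^{c/2}(s/n)^c=(s^2/n)^{c/2}$, with the $n^{c/2}$ coming from the number of free choices (one per connected component of the overlap graph). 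The final bound $6s(s^2/n)^{t^{1/3}}$ then collects these cases.

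Two smaller points: you do not need to ``sum over $n^t$ tuples'' --- the tuple probabilities already sum to $1$, so a uniform conditional bound transfers directly; and the role of $t=\sqrt[3]{\log s}$ in the proof is only to ensure the combinatorial prefactors $(t+1)^{2(t+1)^2}$ and $5^{t^2}$ are $\le s$, not to tune the exponent itself.
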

\begin{proof}
This lemma, which takes into account multiple intersections, is a generalization of Lemma \ref{int2}.  The bounds derived in this lemma are very loose, but sufficient for the proof of Theorem \ref{supportthm}. 

As in Lemma \ref{int2}, we have $Pr\{l \in \left(\cap_{p=0}^{t} (W+v_{0p})\right)\}$
\begin{equation}
\label{eqn}
\nonumber = \sum_{d_1,d_2...,d_t} (Pr\{v_{0p}=d_p: 0 \leq p \leq t\} \times 
\end{equation}
\begin{equation}
\nonumber Pr\{~ \{l-d_p:0 \leq p \leq t\} ~\in W | v_{0p}=d_p: 0 \leq p \leq t\})
\end{equation}
where $d_0=0$. The integers $\{d_p: 0 \leq p \leq t\}$ have unique pairwise distances with arbitrarily high probability if $n$ is sufficiently large. This can be seen as follows: for some $\{i_1, j_1\}$ and $\{i_2, j_2\}$ (without loss of generality $j_2 > j_1$) (i) If $i_2 > j_1$ (the intervals do not overlap), $Pr \{ d_{i_2j_2} = d_{i_1j_1} \} \leq \frac{s}{n}$ due to independence (ii) If $i_2 < j_1$, $d_{i_2j_2} = d_{i_1j_1}$ can equivalently be written as $d_{j_1j_2} = d_{i_1i_2}$ which involves non-overlapping intervals. Hence the probability can still be bounded by $\frac{s}{n}$. Since there are $(t+1)^4$ ways of choosing $\{i_1,j_1,i_2,j_2\}$, the probability that the pairwise distances of $\{d_p: 0 \leq p \leq t\}$ are not distinct can be upper bounded by $\frac{(t+1)^4s}{n}$ which can be made arbitrarily small if $n$ is sufficiently large.

%where $j_2 > j_1$ without loss of generality. Then,
%\begin{equation}
%\nonumber Pr\{ d_{i_2j_2} = d_{i_1j_1} \}  = \sum_d Pr\{ d_{i_1j_i} = d \} Pr\{ d_{i_2j_2} = d | d_{i_1j_1} = d\} 
%\end{equation} 
%If $i_2 > j_1$, then the two regions don't overlap and hence $Pr\{ d_{i_2j_2} = d | d_{i_1j_1} = d\} = Pr\{ d_{i_2j_2} = d\}$ which can be upper bounded $\frac{s}{n}$. Since $ \sum_d Pr\{ d_{i_1j_i} = d \} \leq 1$, 
%\begin{equation}
%\nonumber Pr\{ d_{i_2j_2} = d_{i_1j_1} \} \leq \frac{s}{n}
%\end{equation}

We will bound the probability with which the $t+1$ events $\{l-d_p:0 \leq p \leq t\} \in W$ can happen, conditioned on $\{v_{0p}=d_p: 0 \leq p \leq t\}$, or equivalently, $\{0, d_1, d_2, ... ,  d_t\} \in V$, no other $\{0 \leq i \leq d_t \}$ belong to $V$. For $i > d_t$, $i \in V$ happens with probability less than or equal to $\frac{s}{n}$ independently.

Since $\{0, d_1, d_2, ..., d_t\} \in V$, they can explain some of the $t+1$ events due to pairwise distances among themselves. These integers cannot explain more than $\frac{t+1}{2}$ events due to pairwise distances among themselves, which can be seen as follows: suppose there exists a $0 \leq p \leq t$ such that $d_{p} - d_{i_1} = l - d_{j_1}$ and $d_{p} - d_{i_2} = l - d_{j_2}$ for some $\{i_1, j_1, i_2, j_2\}$ (where $i_1$ and $i_2$ are distinct), by subtracting, we get $d_{i_2} - d_{i_1} = d_{j_2} - d_{j_1}$, which is a contradiction. Hence, for each $d_p$, there can be at most one $i$ such that $d_p- d_i$ can explain one of the $t+1$ events. Consider a graph with $t+1$ nodes such that each term $d_p$ for $0 \leq p \leq t$ corresponds to a node. Draw an edge between two nodes $\{p,i\}$ in this graph if $d_p- d_i$ can explain one of the $t+1$ events. Since no vertex in this graph can have a degree greater than $1$, this graph can have at most $\frac{t+1}{2}$ edges, because of which $\{0, d_1, d_2, ..., d_t\} \in V$ can explain at most $\frac{t+1}{2}$ events due to pairwise distances among themselves. 

Hence, at least $\frac{t+1}{2}$ events must be explained by other integers greater than $d_t$ in $V$. This can happen due to one of the following cases: 

(i) There exists some integer $g$ whose presence in $V$, using $\{0, d_1, d_2, ..., d_t\} \in V$, explains at least two of the $t+1$ events $\{l-d_p:0 \leq p \leq t\}\in W$. $g=l$ is the only integer which comes under this case, which can be seen as follows: If for some $g$, we have $g - d_{i_1} = l - d_{j_1}$ and $g - d_{i_2} = l - d_{j_2}$ for some $\{i_1,j_1,i_2,j_2\}$, then by subtracting, we get $d_{i_1} - d_{i_2} = d_{j_1} - d_{j_2}$ which is a contradiction unless $i_1 = i_2$ and $j_1 = j_2$. Hence, $l \in V$ is the only possibility, the probability of this case is given by $Pr\{ l \in V \}$.

Let $\mathcal{G}_1$ be the set of integers $g$ whose presence in $V$, using only integers from $\{0, d_1, d_2, ..., d_t\} \in V$, can explain exactly one of the $t+1$ events. The size of this set is less than or equal to $(t+1)^2$: For any $g$ to belong to this set, it has to be a distance $l-d_j$ away from some integer $d_i$, where $0 \leq \{i,j\} \leq t$. Hence, there can be at most $(t+1) \times (t+1)$ such integers.

(ii) Consider the case where at least $ t^\frac{1}{3}$ of the events are explained by integer pairs in $V$ such that one integer is in $\mathcal{G}_1$ and the other is in $\{0, d_1, d_2, ..., d_t\}$. Since the number of ways in which $c$ integers in $\mathcal{G}_1$ can be chosen is bounded by $(t+1)^{2c}$, the probability of this case is bounded by 
\begin{equation}
\nonumber \sum_{c=t^\frac{1}{3}}^{(t+1)^2} (t+1)^{2 c}\left(\frac{s}{n}\right)^{c} \leq  (t+1)^{2+2(t+1)^2}\left(\frac{s}{n}\right)^{t^\frac{1}{3}} 
\end{equation}
as each term involved in the summation can be bounded by $(t+1)^{2(t+1)^2}\left(\frac{s}{n}\right)^{t^\frac{1}{3}}$. More integers might be required to be present in $V$ to explain all the events, which might decrease the probability of this case further. However, this bound is sufficient. Since $t = \sqrt[3]{\log s}$, for large enough $n$, we can write  $(t+1)^{2+2(t+1)^2} \leq s $ (as $s$ and $t$ are increasing functions of $n$, and $s$ grows faster than $(t+1)^{2+2(t+1)^2}$ order-wise). The probability of this case is hence bounded by $s\left(\frac{s}{n}\right)^{t^\frac{1}{3}}$.

If less than $ t^\frac{1}{3}$ events are explained by integer pairs in $V$ such that one integer is in $\mathcal{G}_1$ and the other is in $\{0, d_1, d_2, ..., d_t\}$: Since the integers in $\mathcal{G}_1$ can explain at most $ t^\frac{1}{3}$ events using $\{0, d_1, d_2, ..., d_t\} \in V$, at least $\frac{t+1}{2} - t^\frac{1}{3}$ events must be explained by integer pairs in $V$ such that both the integers in the pair are greater than $v_{0t}$. For sufficiently large $n$, $\frac{t+1}{2} - t^\frac{1}{3} \geq \frac{t+1}{4}$ and hence at least $\frac{t+1}{4}$ events must be explained by such pairs.

(iii) At least $\frac{t+1}{4}$ events are explained by pairs of integers not involving  $ \{ 0 \leq i \leq d_t \} \in V$. This can happen in two ways:

(a) There exists integers $\{g_1, g_2, ... , g_{\frac{t+1}{4}}\}$ such that $\{ g_1, g_1 + l - d_{p_1}, g_2, g_2 + l - d_{p_2}, ..., g_{\frac{t+1}{4}}, g_{\frac{t+1}{4}} +l - d_{p_{\frac{t+1}{4}}} \}$ are distinct and belong to $V$. In this case, each $g_i$ can be chosen in $n$  ways and the probability of $\{g_i, g_i + l - d_{p_i}\} \in V$ is bounded by $\frac{s^2}{n^2}$.  The probability of this case is hence  bounded by $\left(\frac{s^2}{ n}\right)^\frac{t+1}{4}$.

(b) There exists integers $\{g_1, g_2, ... , g_{\frac{t+1}{4}}\}$ such that $\{ g_1, g_1 + l - d_{p_1}, g_2, g_2 + l - d_{p_2}, ..., g_{\frac{t+1}{4}}, g_{\frac{t+1}{4}} +l - d_{p_{\frac{t+1}{4}}} \}$ are not distinct. The following steps are a generalization of this case in Lemma \ref{int2}: Consider a graph of $\frac{t+1}{4}$ vertices where each node corresponds to a pair $\{g_i , g_i + l - d_{p_i}\}$. An edge is drawn between vertices $\{i, j\}$ if $\{g_i , g_i + l - d_{p_i}\}$ and $\{g_j , g_j + l - d_{p_j}\}$ overlap, i.e., have an integer in common. This can happen due to $4$ different cases, as in Lemma \ref{int2}. Hence, between each pair $\{i,j\}$, there are at most $5$ possibilities, which bounds the total number of possibilities by $5^{t^2}$. 

For this graph, the following can be said: (i) the number of distinct integers in $\{ g_1, g_1 + l - d_{p_1}, g_2, g_2 + l - d_{p_2}, ..., g_{\frac{t+1}{4}}, g_{\frac{t+1}{4}} +l - d_{p_{\frac{t+1}{4}}} \}$, say $c$,  must be at least $2{t}^{\frac{1}{3}}$  (as at least $\frac{t+1}{4}$ events have to be explained. The tight bound is $\frac{1}{2}{t^{\frac{1}{2}}}$, we use a weaker bound here for simplicity of expressions). (ii) the number of forests in the graph is less than or equal to $\frac{c}{2}$ (as each forest must have at least two distinct integers).

Since the number of $g_i$ which can be chosen in $n$ different ways is equal to the number of forests in the graph and the rest of the $g_i$ get fixed due to overlap, the probability of this case can be bounded by:
\begin{equation}
\nonumber \sum_{c = 2{t}^{\frac{1}{3}}}^{2t} 5^{t^2} n^\frac{c}{2} \left(\frac{s}{n}\right)^c \leq 2t ( 5^{t^2} )\left(\frac{s^2}{n}\right)^ {{t}^{\frac{1}{3}}}
\end{equation}
Since $2t ( 5^{t^2} )$ grows slower than $s$ order-wise, for sufficiently large $n$, we have $2t ( 5^{t^2} ) < s$. Hence the probability of this case can be bounded by $s\left(\frac{s^2}{n}\right)^{{t}^{\frac{1}{3}}}$. 

Since the expressions are independent of $\{d_{0p}: 1 \leq p \leq t \}$, we have the following bound for $Pr\{l \in \left(\cap_{p=0}^{t} (W+v_{0p})\right)\}$: 
\begin{equation}
Pr\{ l \in V \} + s\left(\frac{s}{n}\right)^{t^\frac{1}{3}} +s\left(\frac{s^2}{n}\right)^{{t}^{\frac{1}{3}}}+ \left(\frac{s^2}{ n}\right)^\frac{t+1}{4} \nonumber
\end{equation}
which can be further bounded by $3s\left(\frac{s^2}{n}\right)^{{t}^{\frac{1}{3}}} + Pr\{ l \in V \} $ for simplicity of notation.  

Conditioning on $l \notin V$, using the same argument as (\ref{refexp}), we have the following bound:
\begin{equation}
Pr\{l \in \left(\cap_{p=0}^{t} (W+v_{0p})\right) ~ | ~ l \notin V\} \leq 6s\left(\frac{s^2}{n}\right)^{{t}^{\frac{1}{3}}} \nonumber
\end{equation}
for large enough $n$.
\end{proof}

%%%%%%%%%%%%%%%%%%%%%%%%%%%%%%%%%%%%%%%%%%%%%%

\begin{lem}
$V=\{ v_{00}, v_{01}, ..., v_{0,t-1}\} \cup \left(\bigcap_{p=0}^{t} (W+v_{0p})\right) $ with probability greater than $1-\delta$ for any $\delta>0$ if $t\geq \sqrt[3]{log(s)}$, $s=O(n^{1/2-\eps})$ and $n$ is sufficiently large.
\label{finalint}
\end{lem}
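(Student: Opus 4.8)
The plan is to mirror the argument of Lemma \ref{14lem}, but to feed in the stronger multiple-intersection estimate of Lemma \ref{intt} in place of the single-intersection estimate of Lemma \ref{int2}. The easy inclusion comes first: each $v_{0j}$ with $0\le j\le t-1$ lies in the explicitly adjoined head $\{v_{00},\dots,v_{0,t-1}\}$, while each $v_{0j}$ with $t\le j\le k-1$ lies in $\bigcap_{p=0}^{t}(W+v_{0p})$ by the construction underlying the Intersection Step, since $v_{0j}=v_{pj}+v_{0p}$ with $v_{pj}\in W$ for $p\le j$. Hence $V$ is contained in the claimed set and only spurious survivors remain to be excluded.

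For the reverse inclusion I would first note that any $l\in\bigcap_{p=0}^{t}(W+v_{0p})$ satisfies $l\ge v_{0t}$: since $0\in W$ we have $\min(W+v_{0p})=v_{0p}$, so $\min\bigcap_{p=0}^{t}(W+v_{0p})=v_{0t}$, and the value $l=v_{0t}$ already belongs to $V$. Thus the only candidates for a spurious element are integers $l>v_{0t}$ with $l\notin V$, which is exactly the regime of Lemma \ref{intt}. (Intersecting against more shifted copies of $W$ only shrinks the set, so the estimate of Lemma \ref{intt}, proved for $t=\sqrt[3]{\log s}$, remains an upper bound whenever $t\ge\sqrt[3]{\log s}$, and I need not re-run that argument.)

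Next, let $T$ count the integers $l\notin V$ that survive in $\bigcap_{p=0}^{t}(W+v_{0p})$. A union bound over the at most $n$ candidate values of $l$, together with Lemma \ref{intt}, gives
\begin{equation}
\nonumber E[T]\ \le\ n\cdot 6s\left(\frac{s^2}{n}\right)^{t^{1/3}}\ =\ 6ns\left(\frac{s^2}{n}\right)^{t^{1/3}}.
\end{equation}
With $s=O(n^{1/2-\eps})$ we have $s^2/n=O(n^{-2\eps})$ and $ns=O(n^{3/2})$, so the right-hand side is $O\!\left(n^{3/2-2\eps t^{1/3}}\right)$. Since $t\ge\sqrt[3]{\log s}$ and $s$ increases with $n$, the exponent $t^{1/3}$ grows without bound, so for $n$ large enough $2\eps t^{1/3}>3/2$ and $E[T]\le\delta$ for any prescribed $\delta>0$. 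The Markov inequality then yields $Pr\{T\ge 1\}\le E[T]\le\delta$, so with probability at least $1-\delta$ no spurious integer survives and the set equality holds.

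The main difficulty here is not a single step but the way the growth of $t$ interacts with the sparsity: keeping $t$ constant (as in Lemma \ref{14lem}) only reaches $s=O(n^{1/4-\eps})$, whereas letting $t$ grow like $\sqrt[3]{\log s}$ is precisely what makes $2\eps t^{1/3}$ eventually overtake the $n^{3/2}$ prefactor and thereby lifts the admissible sparsity to $O(n^{1/2-\eps})$; the heavy lifting is already buried in Lemma \ref{intt}, whose (deliberately loose) combinatorial count of the ways a spurious integer can be explained must still beat $n^{3/2}$. One should also check the off-by-one in the monotonicity remark, namely that $\bigcap_{p=0}^{t}(W+v_{0p})\subseteq\bigcap_{p=0}^{t_0}(W+v_{0p})$ for the integer $t_0=\lceil\sqrt[3]{\log s}\rceil\le t$, and recall that, per the reduction at the start of this appendix, repeating the computation with the shifts $v_{k-1}-v_{k-1-p}$ in place of $v_{0p}$ handles the conjugate-flip ambiguity at the cost of only a factor of $2$ in the failure probability.
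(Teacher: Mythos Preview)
Your proposal is correct and follows essentially the same approach as the paper: define $T$ as the count of spurious survivors, bound $E[T]\le n\cdot 6s(s^2/n)^{t^{1/3}}$ via Lemma \ref{intt}, show this tends to zero for $s=O(n^{1/2-\eps})$, and conclude by Markov's inequality. You supply more detail than the paper (the easy inclusion, the observation $l\ge v_{0t}$, the monotonicity reduction to $t_0=\lceil\sqrt[3]{\log s}\rceil$, and an explicit growth-rate comparison), but the skeleton is identical.
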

\begin{proof}
The proof is identical to Lemma \ref{14lem}. Let $T$ be a random variable defined as the number of integers, that do not belong to $V$, that belong to the set $ \left(\bigcap_{p=0}^{t} (W+v_{0p})\right)$. $Pr\{ T \geq 1 \} $ can be bounded as follows:
\begin{align}
\nonumber E[T] = \sum_{l} Pr\{l \in \left(\cap_{p=0}^{t} (W+v_{0p})\right)| l \notin V\} \\
 \nonumber \leq 6ns\left(\frac{s^2}{n}\right)^{{t}^{\frac{1}{3}}} = \frac{6s^{2\log^{1/9}{s}+1}}{n^{\log^{1/9}{s}-1} } \leq \delta
\end{align}
for any $\delta>0$ if $s=O(n^{1/2-\eps})$ and $n$ is large enough, as the numerator grows slower than the denominator order-wise. Using Markov inequality, we get
\begin{equation}
\nonumber Pr\{T \geq 1 \}  \leq \frac{E[T]}{1} \leq \delta 
\end{equation}
which completes the proof.
\end{proof}

\begin{lem}[Graph Step]
In the graph $G( \{0\} \cup ( W \cap (W + v_{01}) ) ,W)$, integers $\{v_{0p}: 1 \leq p \leq t=\sqrt[3]{\log(s)}\}$ have an edge with $v_{0,k-1}$ with probability greater than $1-\delta$ for any $\delta>0$ if $s = O(n^{\frac{1}{2} - \eps})$, $s$ is an increasing function of $n$ and $n$ is sufficiently large.
\label{glem}
\end{lem}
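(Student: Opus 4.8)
The plan is to leverage the deterministic feature of the Graph Step noted just before the lemma: an edge of $G(Z,W)$, with $Z=\{0\}\cup(W\cap(W+v_{01}))$, can only join two elements of a set whose pairwise‑distance set is $W$ — in particular two elements of $V$ — since otherwise the shared pairwise distance would be realized by a second pair inside $V$, contradicting uniqueness. So it is enough to verify, for each $p\in\{0,1,\dots,t\}$ with $t=\sqrt[3]{\log s}$, that the pair $\{v_{0p},v_{0,k-1}\}$ meets the two edge conditions. Condition (ii) is immediate: $v_{0,k-1}-v_{0p}=v_{p,k-1}$ is a pairwise distance of $V$, hence in $W$. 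Moreover $v_{0,k-1}=v_{k-1}$ is the largest element of $W$ and $v_{0,k-1}-v_{01}=v_{1,k-1}\in W$, so $v_{0,k-1}$ is the maximum of $Z$ and lies in $Z$, while $v_{0p}=v_p\in V\subseteq Z$. Thus everything reduces to condition (i): with high probability, for every $p\le t$ simultaneously, no pair of $Z$ other than $\{v_p,v_{k-1}\}$ has difference $v_{p,k-1}$.

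First I would extract the structural constraint on a colliding pair. If $\{z,z'\}\subseteq Z$, $z>z'$, $z-z'=v_{k-1}-v_p$, and $\{z,z'\}\ne\{v_p,v_{k-1}\}$, then $z\le v_{k-1}$ and $z=v_{k-1}$ would force $z'=v_p$; hence $z'<v_p\le v_t$ and $z\in[\,v_{k-1}-v_t,\ v_{k-1}\,)$. So the small endpoint lies among the first handful of integers and the large endpoint lies within $v_t$ of the maximum. I would then condition on the event $\mathcal E$ that $v_t\le L:=C\frac{n}{s}\log s$ and that $V$ has at most $C\log s$ elements in $(\,v_{k-1}-L,\ v_{k-1}\,]$; standard large‑deviation estimates for the near‑geometric gaps of a $\mathrm{Bern}(s/n)$ support give $\Pr(\mathcal E^c)<\delta/2$ for $n$ large. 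On $\mathcal E$, since $z\in Z\subseteq W$ we may write $z=v_d-v_c$ with $v_c\le L$ (so $c\le t$) and $v_d\ge v_{k-1}-L$ (so $d$ ranges over at most $C\log s$ indices near $k-1$), i.e. $z$ takes only $O(\mathrm{polylog}\,s)$ possible values.

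Next I would split the colliding pairs into three cases and union‑bound. (A) $z,z'\in V$: then $z'=v_a$ with $a<p\le t$ — at most $t$ choices — and $z=v_a+v_{k-1}-v_p$ is determined and lies in $V$ with probability $\le s/n$ (conditioning on the order statistics at the two ends and using $\Pr(i\in V)\le s/n$ for $i>0$). (B) $z$ spurious: $z$ is one of the $O(\mathrm{polylog}\,s)$ values above and, being $>v_{01}$, lies in $Z\setminus V$ with probability $\le c_0 s^4/n^2$ by Lemma \ref{int2}. (C) $z\in V$ but $z'$ spurious: $z=v_b$ with $v_{k-1}-L\le v_b<v_{k-1}$ — at most $C\log s$ choices — and then $z'=v_b-v_{k-1}+v_p$ is determined, is $>v_{01}$, and lies in $Z\setminus V$ with probability $\le c_0 s^4/n^2$ by Lemma \ref{int2}. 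Hence the failure probability for a fixed $p$ is $O(t^2)\,\frac{s}{n}+O(\mathrm{polylog}\,s)\,\frac{c_0 s^4}{n^2}$; union‑bounding over the $t+1=O(\log^{1/3}s)$ values of $p$ and adding $\Pr(\mathcal E^c)$ yields a total that is $o(1)$ — hence below $\delta$ for $n$ large — as soon as $s=O(n^{1/2-\eps})$ with $s$ increasing in $n$.

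The hard part will be cases (B) and (C). One must be sure that the reduction ``confined within $v_t$ of the maximum $\Rightarrow$ only $O(\mathrm{polylog}\,s)$ candidate values $z=v_d-v_c$'' is valid, and — more delicately — that the bound $c_0 s^4/n^2$ of Lemma \ref{int2} still applies after conditioning on the near‑end order statistics of $V$, since that conditioning removes or re‑weights some of the ``explanation'' sub‑cases enumerated in the proof of Lemma \ref{int2} and this must be checked not to inflate the bound. By comparison, the tail estimates defining $\mathcal E$ and the elementary counting in case (A) are routine.
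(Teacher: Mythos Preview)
Your structural reduction is sound and close to the paper's: you correctly localize any colliding pair so that the larger element $z$ lies within $v_t$ of $v_{k-1}$, whence $z=v_d-v_c$ with $c\le t$ and $d$ among the top few indices. The genuine gap is the probability bound in cases (B) and (C). Lemma~\ref{int2} is proved for a \emph{fixed} integer $l$ (the proof conditions only on $v_{01}=d$), whereas the integers you feed it --- $z=v_d-v_c$ in (B) and $z'=v_b-v_{k-1}+v_p$ in (C) --- are random functions of $V$ that depend on order statistics at both ends of the support. You flag this yourself (``must be checked not to inflate the bound''), but it is precisely the crux: once you condition on $v_c,v_d,v_p,v_{k-1}$, the Case~I/Case~II accounting in the proof of Lemma~\ref{int2} no longer applies as written, and the $c_0s^4/n^2$ rate is not available without a new argument. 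Case~(A) has the same issue in milder form: ``$v_a+v_{k-1}-v_p\in V$ with probability $\le s/n$'' also requires a conditioning argument since the target integer is random and lies in a region whose occupancy you have partially fixed.

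The paper closes exactly this gap with a dedicated tool, Lemma~\ref{intc} and Corollary~\ref{intc2}, which bound $\Pr\{v_{r_1p}\pm v_{k-1-q,k-1-r_2}\in W\}$ by $c_1(p^2+q^2)(1/s+s^2/n)$ via explicit conditioning on the order statistics at both ends of $V$. With this in hand the paper's decomposition is also simpler than yours: since $Z\subseteq W\cup\{0\}$, the larger element of any colliding pair is some pairwise distance $v_{ij}$ with $i\le p-1$; a Chebyshev comparison of $v_{0t}$ against $v_{k-s^\alpha,k-1}$ disposes of $j\le k-s^\alpha$, and for the remaining $O(ts^\alpha)$ pairs $(i,j)$ one bounds $\Pr\{z'=v_{ip}-v_{j,k-1}\in W\}$ directly by Corollary~\ref{intc2} --- no split on whether $z$ or $z'$ lies in $V$ is needed. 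To salvage your route you would effectively have to prove Lemma~\ref{intc}; the cleaner fix is to collapse (A)--(C) into the single event ``$z'\in W$'' and invoke that lemma.
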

\begin{proof}
For any $p$ such that $1 \leq p \leq t$, the terms $v_{0p}$ and $v_{0,k-1}$ have a difference $v_{p,k-1}$. For there to be no edge between $v_{0p}$ and $v_{0,k-1}$, another integer pair in $\{0\} \cup ( W \cap (W + v_{01}) )$ should have the same difference. For this to happen, at least one of the integers in this integer pair should be greater than $v_{p,k-1}$. The only integers greater than $v_{p,k-1}$ in  $W$ can be terms of the form $\{v_{ij}: 0 \leq i \leq p-1,  j>i\}$. For any $0<\alpha<\frac{1}{2}$, these terms can be split into two cases:

%For each such integer, we will show that either $v_{ij} \notin W \cap (W+v_{01})$ or $\{v_{ij} - v_{p,k-1}\} \notin W \cap (W+v_{01})$, with the desired probability. 

(i) $j \leq k-s^\alpha$: Note that $Pr\{v_{0t} > v_{k-s^\alpha,k-1}\} \leq \delta_1$ for any $\alpha, \delta_1>0$ if $t=\sqrt[3]{\log(s)}$, $s$ is an increasing function of $n$  and $n$ is sufficiently large. This can be shown as follows: $v_{0t}$ concentrates around its mean $\frac{tn}{s}$ with a variance bounded by $\frac{2tn^2}{s^2}$. $v_{k-s^\alpha,k-1}$ concentrates around its mean $\frac{s^\alpha n}{s}$ with a variance bounded by $\frac{2s^\alpha n^2}{s^2}$. Chebyshev's inequality completes the proof. Using this, we see that 
\begin{equation}
\nonumber v_{ij} \leq v_{0t}+v_{pj} < v_{k-s^\alpha,k-1}+v_{p,k-s^\alpha} = v_{p,k-1}
\end{equation}
with probability $1 - \delta_1$ for any $\delta_1>0$. Hence, with probability at most $\delta_1$, one or more of these terms can be the greater term in an integer pair which can produce a difference $v_{p,k-1}$.

(ii) $k-s^\alpha< j$: There are at most $ts^\alpha$ such terms and $p$ can be chosen in $t$ different ways. For each of these terms and each choice of $p$, $v_{ij} - v_{p,k-1} = v_{ip} - v_{j,k-1}$ can belong to $W\cap ( W + v_{01}) $ with a probability at most $c_1(t^2 + s^{2\alpha}) ( \frac{1}{s} + \frac{s^2}{n} )$ (Lemma \ref{intc} and Corollary \ref{intc2}), hence the probability that at least one of these terms will  belong to $W\cap ( W + v_{01})$ can be union bounded by multiplying this probability by $t^2s^\alpha$. This probability can be made less than $\delta_2$ for any $\delta_2 > 0$ if $s = O(n^{\frac{1}{2} - \eps})$, $s$ is an increasing function of $n$ and $n$ is sufficiently large, and $\alpha$ is chosen to be small enough. 

%(b) Consider the terms with $i=0$. There are a total of $s^\alpha$ such terms, the probability that each of the terms $v_{0j} - v_{p,k-1} = v_{0p} - v_{j,k-1}$, for each $p$, belong to $W\cap ( W + v_{01}) $ is at most $(t^2 + s^{2\alpha}) ( \frac{1}{s} + \frac{s^2}{n} )$ (see Lemma \ref{intc}), hence the probability of at least one of these terms will belong to $W\cap ( W + v_{01}) $ can be bounded by multiplying this term by $ts^\alpha$. This probability can be made less than $\delta_3$ for  any $\delta_3 > 0$ for sufficiently large $n$ if $s = O(n^{\frac{1}{2} - \eps})$ and $\alpha$ is chosen to be less than $\eps/3$.

Hence, with probability at least $1 - \delta_1 - \delta_2$, there will be no other integer pair in $\{0\} \cup ( W \cap (W + v_{01}) )$ with a difference $v_{p,k-1}$ for each $1 \leq p \leq t$, because of which there will be an edge between $v_{0p}$ and $v_{0,k-1}$ for each $1 \leq p \leq t$.
\end{proof}

\begin{lem}
The probability that $\{v_{0p} - v_{k-1 - q,k-1}\} \in W$, for any $ 0 < \{p , q\} < \frac{s}{4}$, is bounded by $c_1(p^2+q^2)\left( \frac{1}{s} + \frac{s^2}{n}\right)$ for some constant $c_1$, if $s$ is an increasing function of $n$ and $n$ is sufficiently large.  
\label{intc}
\end{lem}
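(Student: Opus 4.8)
The plan is to mimic the conditioning-then-union-bound strategy of Lemmas~\ref{int0} and \ref{int2}, now conditioning on \emph{both} ends of the set at once. Write $m=|v_{0p}-v_{k-1-q,k-1}|$ and expose the \emph{bottom block} $\{v_0,\dots,v_p\}$ and the \emph{top block} $\{v_{k-1-q},\dots,v_{k-1}\}$ of $V$, i.e.\ the positions of its $p+1$ smallest and $q+1$ largest elements. Since $p,q<s/4$, a Chernoff estimate shows that, outside an event of probability exponentially small in $s$ (absorbed into the final bound), the bottom block spans at most $\approx n/4$ and the top block spans at most $\approx n/4$, so the two blocks are disjoint and $p+q<k$. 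Conditioned on this, $m$ is a fixed integer, the bottom gaps (resp.\ the top gaps) are independent $\mathrm{Geom}(s/n)$ variables — so the sum of any fixed number of them has pointwise mass at most $s/n$ — and every position strictly between $v_p$ and $v_{k-1-q}$ still belongs to $V$ independently with probability $s/n$. I then split the event $m\in W$ according to the location in $V$ of the pair realizing the distance $m$.

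Pairs with at least one endpoint in the fresh middle region $(v_p,v_{k-1-q})$: here I copy the bound of Lemma~\ref{int0}, using that $m$ is now a \emph{fixed} value given the conditioning. A pair at distance exactly $m$ with a middle endpoint is either a middle--middle pair (at most $n$ candidate positions for the lower endpoint, each contributing $\le (s/n)^2$) or a pair of a middle element with a block element (at most $O(p+q)$ block positions, each contributing $\le s/n$). Summing gives $\le c\,s^2/n+O\!\left((p+q)s/n\right)=O(s^2/n)$ since $p+q\le s$; as this is uniform in the conditioning, it holds unconditionally.

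Pairs with both endpoints in the exposed blocks: if one endpoint lies in the bottom block and the other in the top block, the realized distance is at least the middle span $M=v_{k-1-q}-v_p$, whereas $m\le\max(v_{0p},v_{k-1-q,k-1})$ is bounded by the larger of the bottom and top spans; by the same Chernoff estimate these spans are each $\approx n/4$ while $M\approx n/2$ with overwhelming probability, so such a collision forces a rare concentration failure and contributes $\le 1/s$ for $n$ large. If both endpoints lie in the bottom block, substituting $m=v_{0p}-v_{k-1-q,k-1}$ rearranges the equality $v_b-v_a=m$ into ``(sum of some top gaps) $=$ (a function of the bottom gaps only)'', an equality between independent quantities, which holds with probability $\le s/n$ by the pointwise-mass bound above; a union over the $O(p^2)$ such pairs gives $O(p^2 s/n)\le O(p^2/s)$. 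The top-block case is symmetric and gives $O(q^2/s)$. Adding the three contributions yields $\Pr\{m\in W\}\le c_1(p^2+q^2)(1/s+s^2/n)$.

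The main obstacle I anticipate is making the two-block order-statistic conditioning rigorous: showing that, given the blocks, the middle region is genuinely a fresh i.i.d.\ Bernoulli field and the bottom/top gaps are genuinely independent geometric variables — which is exactly where the $s/n\le 1/s$ per-collision bound comes from — and handling the exceptional events (blocks overlapping, spans deviating from $\approx n/4$) uniformly. The block--block left--right case additionally needs the Chernoff comparison of $M$ with the end-spans, which is where the hypothesis $p,q<s/4$ is used. Everything else is a routine reprise of the counting in Lemma~\ref{int0}.
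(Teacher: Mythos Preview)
Your approach is sound and reaches the same bound, but the route differs from the paper's in two places worth noting.

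\textbf{Same-block pairs.} The paper conditions on $v_{0p}=d_1$, $v_{k-1-q,k-1}=d_2$, $v_{0,k-1}=l$, so that the interior of the bottom block is $p-1$ uniform points in $\{1,\dots,d_1-1\}$; this gives a conditional bound $(p-1)^2/(d_1-1)$ for ``both in bottom'', and the unconditional bound then requires splitting the sum over $d_1$ at $n/s^2$ (producing the two pieces $1/s$ and $s^2/n$). Your argument is cleaner: rewriting $v_b-v_a=m$ as $v_{k-1-q,k-1}=v_{0p}-(v_b-v_a)$ is an equality between a top-gap sum and a bottom-gap function, and the pointwise mass bound on a sum of geometrics gives $s/n$ per pair directly, with no integration over $d_1$. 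This yields $O(p^2 s/n)$, which you then relax to $O(p^2/s)$ to match the lemma's stated form.

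\textbf{Cross-block pairs.} The paper simply bounds this case by $\frac{(p-1)(q-1)}{(d_1-1)(d_2-1)}\cdot(d_i-1)$ and carries it through the sum. Your span comparison ($m\le\max(v_{0p},v_{k-1-q,k-1})<M$ with Chernoff-high probability under $p,q<s/4$) is a different, more geometric way to dispose of the case; it is where the hypothesis $p,q<s/4$ actually does work in your version.

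\textbf{One presentational snag.} Your sentence ``Conditioned on this, \dots the bottom gaps (resp.\ the top gaps) are independent $\mathrm{Geom}(s/n)$ variables'' is literally false: once both blocks are exposed the gaps are deterministic. What you actually use is two different modes: (i) for middle pairs, condition on both blocks so $m$ is fixed and the middle is fresh Bernoulli; (ii) for same-block pairs, do \emph{not} condition---use instead that the bottom-gap vector and the top-gap vector are (up to the rare overlap/short-$k$ event you already flag) independent, each a vector of truncated geometrics with pointwise mass $\le s/n$. Separate these two modes explicitly in the write-up; the case decomposition $P(m\in W)\le P(\text{middle pair})+P(\text{same-block pair})+P(\text{cross-block pair})$ should be made \emph{before} any conditioning, and then each summand handled with the appropriate tool.
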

\begin{proof}
We can write $Pr \{ \{v_{0p} - v_{k - 1 - q,k-1}\} \in W \}$ as
\begin{align}
\nonumber \sum_ {d_1, d_2 ,l} Pr \{ \{v_{0p},v_{k - 1 - q,k-1}, v_{0,k-1} \} = \{d_1, d_2, l\}\} \\
\times Pr \{  d_1 - d_2 \in W | v_{0p} = d_1,v_{k-1-q,k-1}=d_2, v_{0,k-1} = l\} \nonumber
\end{align}
The distribution of $V$, conditioned by $v_{0p} = d_1,v_{k-1-q,k-1}=d_2$, and $v_{0,k-1} = l$ is as follows (note that we are not conditioning on the value of $k$): $v_{0p} = d_1$ ensures that there are $p-1$ integers in between $1$ and $d_1-1$ (call this region $\mathcal{R}_1$), the $p-1$ elements will be uniformly distributed in $\mathcal{R}_1$. Similarly, $v_{k-1-q,k-1}=d_2$ ensures that there will be $q-1$ integers uniformly distributed in the range $l - d_2+1$ to $l-1$ (call this region $\mathcal{R}_3$). Since we have not fixed $k$ to any value, $i$ in the range $d_1+1$ to $l - d_2 - 1$ (call this region $\mathcal{R}_2$) will belong to $V$ according to an independent $Bern(\frac{s}{n})$ distribution.

For $d_1 - d_2$ to belong to $W$, there must be a pair of integers $\{g, g+d_1-d_2\} \in V$. This can happen in the following ways:

(i) If both $\{g, g+d_1-d_2\} \in V$ are in (a) $\mathcal{R}_2$: the probability of this happening (using arguments similar to Lemma \ref{int0}) can be upper bounded by $\frac{s^2}{n^2} \times l \leq \frac{s^2}{n}$ (b) $\mathcal{R}_1$: the probability of this is bounded by  $\frac{{ p-1\choose 2}}{ {d_1-1 \choose 2 } } \times {(d_1-2)} \leq \frac{(p-1)^2}{(d_1-1)}$ (c) $\mathcal{R}_3$: this probability is, similarly, bounded by $\frac{(q-1)^2}{ (d_2-1) }$.

(ii) If $\{g, g+d_1-d_2\} \in V$ are such that (a) one of them is in $\mathcal{R}_1$ and the other is in $\mathcal{R}_2$: the probability of this is bounded by $ \frac{(p-1)}{ (d_1-1) }\times \frac{s}{n} \times (d_1 - 1 ) $ which can be upper bounded by $\frac{ps}{n}$ (b) one of them is in $\mathcal{R}_2$ and the other is in $\mathcal{R}_3$: this probability is similarly upper bounded by $\frac{qs}{n}$ (c) one of them is in $\mathcal{R}_1$ and the other is in $\mathcal{R}_3$: the probability of this is bounded by $\frac{(q-1)}{ (d_2-1) }\frac{(p-1)}{ (d_1-1) } \times ( d_2 - 1)$ or $\frac{(q-1)}{ (d_2-1) }\frac{(p-1)}{ (d_1-1) }\times (d_1-1)$.

(iii) If one of $g$ or $g+d_1-d_2$ is in $\{0, d_1, l - d_2 , l\}$: the other can be chosen in at most six ways, this probability can be upper bounded by $6(\frac{s}{n} + \frac{p-1}{d_1-1} + \frac{q-1}{d_2-1})$. 

The summation of the probabilities can be bounded by $c_0(\frac{s^2}{n} + \frac{(p-1)^2}{d_1-1} + \frac{(q-1)^2}{d_2-1})$ (the other terms are smaller than either one of these terms for sufficiently large $n$). The term $c_0\frac{s^2}{n}$ doesn't depend on $\{d_1,d_2, l\}$ and since $\sum_ {d_1, d_2 ,l} Pr \{ \{v_{0p},v_{k - 1 - q,k-1}, v_{k-1} \} = {d_1, d_2, l}\} \leq 1$, this bound remains the same after the summation. The term  $c_0\frac{(p-1)^2}{d_1-1}$ depends on $d_1$ and the summation can be bounded as follows:
\begin{equation}
\sum_ {2 \leq d_1 \leq \frac{n}{s^2}} Pr\{u_{0p} = d_1 \}\frac{1}{d_1-1}+ \sum_ {d_1 > \frac{n}{s^2}} Pr\{u_{0p} = d_1 \} \frac{1}{d_1-1} \nonumber
\end{equation}
In the first sum, $Pr\{u_{01} = d_1\}$ can be bounded by $\frac{s}{n}$ and $\frac{1}{d_1-1}$ can be bounded by $1$. In the second sum, $\frac{1}{d_1-1}$ can be bounded by $\frac{s^2}{n}$ and $\sum_ {d_1 > \frac{n}{s^2}} Pr\{u_{0p} = d_1 \}$ can be bounded by $1$, to bound the total summation by $c_0p^2\left( \frac{1}{s} + \frac{s^2}{n}\right)$. Similarly, the term involving $d_2$ can be bounded by 
$c_0q^2\left( \frac{1}{s} + \frac{s^2}{n}\right)$.

(iv) Both $g$ and $g+d_1-d_2$ are in $\{0, d_1, l - d_2 , l\}$. This can happen only when: $l = 2d_1$ or $l=2d_1-d_2$ or $d_1=2d_2$. The probability of each of these happening is bounded by $\frac{s}{n}$.

Hence, the total probability can be upper bounded by $c_1(p^2+q^2)\left( \frac{1}{s} + \frac{s^2}{n}\right)$ for some constant $c_1$, if $n$ is sufficiently large.
\end{proof}

\begin{cor}
The probability that $\{v_{r_1p} \pm v_{k-1- q,k-1-r_2}\} \in W$, for some $0 \leq r_1 < p$, $0 \leq r_2 < q$ and any $0 < \{p , q\} < \frac{s}{4}$ is bounded by $c_1(p^2+q^2)\left( \frac{1}{s} + \frac{s^2}{n}\right)$ for some constant $c_1$, if $s$ is an increasing function of $n$ and $n$ is sufficiently large.  
\label{intc2}
\end{cor}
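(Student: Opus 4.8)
The plan is to reproduce the proof of Lemma~\ref{intc} almost verbatim; the only structural change is that the two endpoints of each of the two differences are no longer anchored at $v_0=0$ and $v_{k-1}$. Fix $r_1<p$, $r_2<q$ and a sign, and write $D=v_{r_1p}\pm v_{k-1-q,\,k-1-r_2}$. First I would condition on the locations of the anchor elements $v_0,v_{r_1},v_p,v_{k-1-q},v_{k-1-r_2},v_{k-1}$, which fixes $v_{r_1p}=d_1$, $v_{k-1-q,k-1-r_2}=d_2$, and hence $D=d_1\pm d_2$. Conditioned on this, but not on $k$, the set $V$ decomposes into: a ``middle'' block between $v_p$ and $v_{k-1-q}$ whose integers are i.i.d.\ $Bern(s/n)$; an interval of length $d_1$ carrying the $\le p-1$ elements strictly between $v_{r_1}$ and $v_p$, placed uniformly at random; an interval of length $d_2$ carrying the $\le q-1$ elements between $v_{k-1-q}$ and $v_{k-1-r_2}$; and two further ``sparse intervals'', one below $v_{r_1}$ with $\le p-1$ elements and one above $v_{k-1-r_2}$ with $\le q-1$ elements. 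The crucial point is that $r_1<p$ and $r_2<q$, so every sparse interval holds fewer than $p$ (respectively $q$) elements, exactly like the single interval of Lemma~\ref{intc}.

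Next, for this fixed conditioning I would bound $Pr\{D\in W\}$ by the same region count as in Lemma~\ref{intc}: $D\in W$ requires a pair $\{g,g+|D|\}\subseteq V$, and I would enumerate by which blocks --- the middle block, the four sparse intervals, or the $O(1)$ anchor points --- contain $g$ and $g+|D|$. A pair lying inside the middle block contributes $\le s^2/n$; a pair lying inside a sparse interval of length $\ell$ contributes $\lesssim p^2/\ell$ or $\lesssim q^2/\ell$ (via $\binom{p}{2}/\binom{\ell}{2}\cdot\ell$, as there); a pair split across two blocks, or with one endpoint at an anchor, contributes a strictly lower-order term ($\lesssim ps/n$, $\lesssim qs/n$, $\lesssim pq/\ell$, and so on); and a pair with both endpoints at anchors forces a rigid linear relation among $d_1,d_2$ and the anchors and is bounded by $\le s/n$. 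The sign in $D=d_1\pm d_2$ is never used in this counting, so the ``$+$'' and ``$-$'' cases are identical. Summing these cases gives a conditional bound of the shape $c_0\big(\frac{s^2}{n}+\frac{(p-1)^2}{d_1}+\frac{(q-1)^2}{d_2}+(\text{two more length-dependent } p^2,q^2 \text{ terms})\big)$ plus $O(s/n)$ boundary terms.

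Finally I would undo the conditioning. The $s^2/n$ term does not depend on the conditioned quantities and survives unchanged. For each length-dependent term, say $(p-1)^2/d_1$, I would use $Pr\{v_{r_1p}=d_1\}\le s/n$ --- which holds because, conditioned on $v_{r_1}$, the event $v_p=v_{r_1}+d_1$ forces $v_{r_1}+d_1\in V$, an event independent of everything below $v_{r_1}$ and of probability $\le s/n$ --- together with the split of the sum at $d_1=n/s^2$ ($1/d_1\le 1$ below the split, $1/d_1\le s^2/n$ above it, residual probability mass $\le 1$), exactly as in Lemma~\ref{intc}, to obtain $\lesssim p^2\big(\frac{1}{s}+\frac{s^2}{n}\big)$; the $d_2$-term and the two remaining length terms give $\lesssim q^2\big(\frac{1}{s}+\frac{s^2}{n}\big)$ or $\lesssim p^2\big(\frac{1}{s}+\frac{s^2}{n}\big)$ by the same computation, using $r_1<p$ and $r_2<q$. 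Collecting everything and absorbing constants yields $c_1(p^2+q^2)\big(\frac{1}{s}+\frac{s^2}{n}\big)$. I expect the only genuine obstacle to be organizational: Lemma~\ref{intc} had three regions and four anchors, whereas here there are up to five regions and six anchors, so one must check that each new region/anchor pairing still contributes only $O\big((p^2+q^2)(\frac{1}{s}+\frac{s^2}{n})\big)$ --- and it does, precisely because each new sparse interval holds fewer than $p$ or $q$ elements. Everything else is the routine bookkeeping already carried out in Lemma~\ref{intc}.
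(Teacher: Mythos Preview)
Your proposal is correct and matches the paper's approach: the paper states Corollary~\ref{intc2} immediately after Lemma~\ref{intc} with no separate proof, so the intended argument is precisely the one you outline---rerun the region/anchor case analysis of Lemma~\ref{intc} with the extra anchors $v_{r_1}$ and $v_{k-1-r_2}$, using throughout that $r_1<p$ and $r_2<q$ force every new sparse block to carry fewer than $p$ (resp.\ $q$) elements. Your bookkeeping of the additional cross-region pairings and the split-sum treatment of the extra length parameters is exactly the expected extension.
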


{\em Remark}: The proof also works for the case when the $k$ locations of the support are chosen uniformly at random, if $k=O(n^{\frac{1}{2}-\eps})$ is an increasing function of $n$ and $n$ is sufficiently large. This is due to the fact that all the probability upper bounds derived in this section still hold true up to a constant scaling. For example, the probability that $\{g,g+l\} \in V$ for $l > 0$ can be bounded by $\frac{{k \choose 2}}{{n \choose 2}} \leq \left(\frac{k}{n/2}\right)^2 =  \left(\frac{2k}{n}\right)^2$ for sufficiently large $n$. This is identical in behavior to $\left(\frac{s}{n}\right)^2$, except for a constant scaling factor. Even though the events $i \in V$ are no longer independent, the bounds  will be identical up to a constant scaling factor.

%%%%%%%%%%%%%%%%%%%%%%%%%%%%%%%%%%%%%%%%%%%%%%%%%%
%%%%%%%%%%%%%%%%%%%%%%%%%%%%%%%%%%%%%%%%%%%%%%%%%%
%%%%%%%%%%%%%%%%%%%%%%%%%%%%%%%%%%%%%%%%%%%%%%%%%%

\section{Proof of Theorem \ref{signalthm}}

\label{appD}

Analysis of semidefinite relaxation-based programs with such deterministic measurements is a difficult task in general. We will instead analyze (\ref{signalrec3}), which is a further relaxation of (\ref{SPRS}), and show that (\ref{signalrec3}) has $\X_0=\x_0\x_0^\star$ as its optimizer with the desired probability, which is sufficient to prove the theorem as $\x_0\x_0^\star$ is a feasible point of (\ref{SPRS}).

In this section, we use the following notation: $H(U) = G(U, W)$ (see the description of Graph step). In other words, $H(U)$ is a graph with $k$ vertices, where each vertex corresponds to an integer in $U$ and two vertices have an edge between them if their corresponding integers have a unique pairwise distance. 

The key idea is the following: if there exists an edge between vertices corresponding to $u_i$ and $u_j$ in the graph $H(U)$, then $X_{u_iu_j}$ can be deduced from the autocorrelation. This is because if there is an edge between $u_i$ and $u_j$, then $a_{|u_i-u_j|}=x_{u_i}x_{u_j}^\star$, which by definition is $X_{u_iu_j}$. (\ref{SPRS}) can be relaxed by using only such autocorrelation constraints which fix certain entries of $\X$ (and discarding the rest), and by replacing the positive semidefinite constraint with the constraint that every $2 \times 2$ submatrix of $\X$ is positive semidefinite, i.e,
\begin{align}
\label{signalrec3}
&\textrm{minimize}  \hspace{1cm}  trace(\X) \\
\nonumber & \textrm{subject to} \hspace{0.9cm}  X_{u_iu_j} = a_{|u_i-u_j|} \quad \textrm{if  $u_i \leftrightarrow u_j$ in $H(U)$} \\
\nonumber & \hspace{2.3cm} X_{ij}= 0 \quad  \textrm{if}  \quad  \{i,j\} \notin U   \\ 
\nonumber & \hspace{0.5cm} X_{ii}X_{jj} \geq X_{ij}^2 ~~ \forall \textrm{ distinct }  (i,j) ~~ \& ~~ X_{ii} \geq 0 \quad \forall ~i
\end{align} 
where $u_i \leftrightarrow u_j$ means that there exists an edge between vertices corresponding to $u_i$ and $u_j$ in $H(U)$. 

Consider the following matrix completion problem: let $\R_0=\rr\rr^\star$ be a positive semidefinite $t \times t$ matrix with all the off-diagonal components known, where $\rr=(r_0, r_1, ..., r_{t-1})$ is a $t \times 1$ vector. The objective is to recover the diagonal components (robustly) by solving a convex program. Since $\R$ is positive semidefinite, any $2 \times 2$ submatrix of $\R$ is also positive semidefinite. Consider the convex program
\begin{align}
\label{compprog}
& \textrm{minimize}  \hspace{1.1cm} trace(\R) \\
\nonumber & \textrm{subject to} \hspace{1cm} R_{ii}R_{jj} \geq {(|r_i||r_j|)}^2 \quad \forall \textrm{ distinct }  (i,j) \\
\nonumber & \hspace{2.4cm} R_{ii} \geq 0 \quad \forall ~i
\end{align}

\begin{lem}
\label{matcomp}
$\R_0=\rr\rr^\star$ is the unique optimizer of (\ref{compprog}) with probability greater than $1-\delta$ for any $\delta>0$ if $t$ is sufficiently large.
\end{lem}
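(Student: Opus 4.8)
The plan is to strip (\ref{compprog}) down to a one-dimensional problem over the diagonal of $\R$, pin down \emph{exactly} when $\R_0$ is the unique minimizer, and then argue that the resulting condition holds with probability tending to $1$ as $t$ grows.

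First I would note that, since the off-diagonal entries are pinned at $R_{ij}=r_ir_j^\star$, the only free variables are $d_i:=R_{ii}\ge0$, so (\ref{compprog}) is equivalent to: minimize $\sum_i d_i$ subject to $d_id_j\ge b_ib_j$ for all distinct $(i,j)$, where $b_i:=|r_i|^2$. Because the entries of $\rr$ are drawn from a continuous i.i.d.\ distribution, $b_i>0$ for all $i$ almost surely, and $d_i=b_i$ (i.e.\ $\R=\R_0$) is feasible. The structural point I would exploit is that at most one coordinate of a feasible $d$ can lie strictly below $b$: if $d_i<b_i$ and $d_j<b_j$ for distinct $i,j$ then $d_id_j<b_ib_j$, a contradiction. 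So for a feasible $d\neq b$ either $d\ge b$ coordinatewise (then $\sum_id_i>\sum_ib_i$ at once), or exactly one coordinate $i_0$ has $d_{i_0}<b_{i_0}$; in the latter case, setting $\rho:=b_{i_0}/d_{i_0}>1$, the constraints force $d_j\ge\rho b_j$ for all $j\neq i_0$, so
\[
\textstyle\sum_i d_i\;\ge\;\rho^{-1}b_{i_0}+\rho\sum_{j\neq i_0}b_j\;=:\;g(\rho),\qquad g(1)=\textstyle\sum_i b_i,\quad g'(\rho)=-b_{i_0}\rho^{-2}+\textstyle\sum_{j\neq i_0}b_j .
\]
Hence, provided $b_{i_0}\le\sum_{j\neq i_0}b_j$, we have $g'>0$ on $(1,\infty)$, so $\sum_id_i>\sum_ib_i$ strictly. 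The conclusion I would record is: $\R_0$ is the unique optimizer of (\ref{compprog}) whenever $|r_i|^2\le\sum_{j\neq i}|r_j|^2$ for every $i$, equivalently $2\max_i|r_i|^2\le\|\rr\|_2^2$ (and the claim genuinely fails otherwise).

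It then remains to bound the probability of the bad event $\bigcup_iA_i$, $A_i:=\{\,|r_i|^2>\sum_{j\neq i}|r_j|^2\,\}$. Writing $Y_j:=|r_j|^2$, the $A_i$ are pairwise disjoint (two would force $Y_i>Y_{i'}$ and $Y_{i'}>Y_i$), so $\Pr\{\bigcup_iA_i\}\le t\,\Pr\{A_0\}$. For a small fixed $\alpha>0$ I would split
\[
\Pr\{A_0\}\;\le\;\Pr\{Y_0>\alpha(t-1)\}\;+\;\Pr\Big\{\textstyle\sum_{j=1}^{t-1}Y_j\le\alpha(t-1)\Big\},
\]
bound the second term by a Chernoff estimate, $\Pr\{\sum_{j=1}^{t-1}Y_j\le\alpha(t-1)\}\le(e^{\alpha}\,\mathbb{E}[e^{-Y}])^{t-1}$ with $\mathbb{E}[e^{-Y}]<1$ (as $Y>0$ a.s.), so for $\alpha$ small this is geometrically small even after multiplying by $t$; and bound the first by $t\,\Pr\{Y_0>\alpha(t-1)\}\le\frac{t}{\alpha(t-1)}\,\mathbb{E}[Y_0\mathbf{1}\{Y_0>\alpha(t-1)\}]\to0$, using finiteness of $\mathbb{E}|r|^2$. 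Thus $t\,\Pr\{A_0\}\to0$, so the good event holds with probability $>1-\delta$ once $t$ is large.

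The hard part will be the probabilistic step, specifically obtaining decay \emph{faster than} $1/t$ for $\Pr\{A_0\}$. The lower-tail Chernoff bound on the sum is routine, but controlling $t\,\Pr\{Y_0=\Omega(t)\}$ forces a light moment hypothesis on the entries: for a continuous distribution with $\mathbb{E}|r|^2=\infty$ one can have $t\,\Pr\{Y_0>\alpha t\}\not\to0$ and the statement fails, so this is precisely where the distributional assumption (satisfied by the Gaussian/finite-variance models used elsewhere in the paper) must enter. The structural step is the conceptual crux but short; its only subtlety is ruling out improvement for \emph{all} $\rho>1$ rather than just infinitesimal perturbations, which the monotonicity of $g$ handles.
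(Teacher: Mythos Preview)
Your proposal is correct and follows essentially the same route as the paper's proof: both reduce to the diagonal variables, observe that at most one $R_{ii}$ can drop below $|r_i|^2$, push the constraints to force the other diagonals up by the reciprocal factor, and arrive at the identical sufficient condition $\sum_{j\neq i}|r_j|^2>|r_i|^2$ for all $i$. Your version is slightly cleaner in using the monotonicity of $g(\rho)$ on $(1,\infty)$ where the paper linearizes via $\tfrac{1}{1-\gamma}>1+\gamma$, and you give an explicit Chernoff/tail argument for the probabilistic step where the paper simply cites \cite{vershinyn}; your remark that a finite-second-moment hypothesis on the entries is genuinely needed is a valid caveat that the paper leaves implicit.
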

\begin{proof}
Suppose $\R_0=\rr\rr^\star$ is not the unique optimizer of (\ref{compprog}). If $\R^\dagger \neq \R_0$ is the optimizer, then there exists at least one $i$ such that $R_{ii}^\dagger<|r_i|^2$. For this $i$,  $R_{ii}^\dagger$ can then be expressed as $(1-\gamma)|r_i|^2$ for some $\gamma > 0$. The constraints of (\ref{compprog}) corresponding to $R_{ii}$ (i.e., $R_{ii}R_{jj} \geq (|r_{i}||r_j|)^2$ for all $j \neq i$) will ensure that all other diagonal components $R_{jj}, j \neq i$ to be greater than or equal to $\frac{1}{1-\gamma}|r_j|^2$, which also implies that $R_{jj}$ is greater than $({1+\gamma})|r_j|^2$ (as $\frac{1}{1-\gamma} > 1+\gamma$) . The objective function value at the optimum can be written as
\begin{equation}
\nonumber trace(\R^\dagger)=\sum_{i=1}^{i=t}R_{ii}^\dagger >  \sum_j{|r_j|}^2+{\gamma}(\sum_{j\neq i}|r_j|^2-|r_i|^2)
\end{equation}
If we can ensure that $(\sum_{j\neq i}|r_j|^2-|r_i|^2) > 0$ for all $i$, we are through because $trace(\R^\dagger)$ is greater than $\sum_jr_j^2$, which is a contradiction. \cite{vershinyn} provides an exponentially decreasing probability in $t$ for failure of this condition.
\end{proof}

Lemma \ref{largedeg} shows that $d_{min}(H(U)) > k(1-\frac{1}{t})$ if we ensure $\frac{2s^2t}{n}<1$. Suppose we choose $t=c\log(s)$ for some constant $c$. Hajnal-Szemeredi theorem on disjoint cliques \cite{szemeredi} states that such graphs contain $\frac{k}{t}$ vertex disjoint union of complete graphs of size $t$.  Lemma \ref{matcomp} applies to each of the $\frac{k}{t}$ complete graphs and hence using union bound, we see that the diagonal entries of the optimizer of (\ref{signalrec3}) match with the diagonal entries of $\X_0=\x_0\x_0^\star$ with the desired probability. Also, since the graph $H(U)$ has a Hamiltonian cycle (Lemma \ref{largedeg}), by rearranging the indices, we see that the first off-diagonal entries of the optimizer of  (\ref{signalrec3}) also match with the first off-diagonal entries of $\X_0=\x_0\x_0^\star$. Since the optimizer's diagonal and first off-diagonal entries are sampled from a rank one matrix, there is exactly one positive semidefinite completion, which is the rank one completion $\x_0\x_0^\star$. Since the optimizer also satisfies all the constraints of (\ref{SPRS}), $\X_0=\x_0\x_0^\star$ is the unique minimizer of (\ref{SPRS}) with the desired probability.

\begin{lem}  
The probability that there exists an edge between any two particular vertices in $H(U)$ is greater than or equal to $1-\frac{2s^2}{n}$ if $n$ is sufficiently large.
\label{edgeprob}
\end{lem}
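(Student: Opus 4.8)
The plan is to bound the complementary event. Two fixed vertices of $H(U)$, corresponding to integers at positions $a<b$ with $d:=b-a$, fail to be joined by an edge precisely when the distance $d$ is realised by a second pair of elements of $U$, i.e.\ when there is some $g\neq a$ with $\{g,g+d\}\subseteq V$. So I would first condition on $a,b\in V$; as in the discussion preceding Lemma~\ref{int0}, conditioning on the presence of these two elements (and implicitly on their ranks inside $V$) can only remove candidate positions for a second pair, hence all the upper bounds below, computed in the plain i.i.d.\ $\mathrm{Bern}(s/n)$ model, remain valid as upper bounds.

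I would then split the choices of $g$ and apply a union bound. The two values $g=a-d$ and $g=a+d$ give pairs overlapping $\{a,b\}$ in exactly one element, so each requires only one additional Bernoulli success and contributes at most $s/n$. Every other admissible $g$ (at most $n$ of them, after the constraints $0\le g$ and $g+d\le n-1$) gives a pair disjoint from $\{a,b\}$, so by independence $\Pr\{\{g,g+d\}\subseteq V\}\le (s/n)^2$, and these contribute at most $n\cdot (s/n)^2 = s^2/n$ in total. Summing, the probability of no edge is at most $2s/n+s^2/n$.

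Finally, since $s$ is an increasing function of $n$, for $n$ sufficiently large we have $s\ge 2$, hence $2s/n\le s^2/n$ and the no-edge probability is at most $2s^2/n$, which gives $\Pr\{\text{edge}\}\ge 1-2s^2/n$. The argument is a one-element variant of the counting already done in Lemma~\ref{int0}, and essentially the only thing to be careful about is the bookkeeping of which $g$ force one versus two fresh Bernoulli successes: the two overlapping cases are exactly the source of the factor $2$ in the bound, while everything else is a routine union bound. I therefore do not expect a genuine obstacle here beyond getting that accounting right and confirming that the conditioning on the chosen positions does not inflate any of the per-$g$ probabilities.
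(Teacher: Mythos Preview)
Your proposal is correct and follows essentially the same approach as the paper's own proof: bound the probability of \emph{no} edge by a union bound over all $g$ with $\{g,g+d\}\subseteq V$, separating the two overlapping choices (each contributing at most $s/n$) from the at most $n$ disjoint choices (contributing at most $s^2/n$ in total), and then absorb $2s/n$ into $s^2/n$ for large enough $n$. The only cosmetic difference is that you are slightly more explicit about why conditioning on $\{a,b\}\subseteq V$ does not inflate the per-$g$ bounds, a point the paper takes for granted.
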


\begin{proof}
Consider any pair of integers $\{i, j\}$. There will be no edge between their corresponding vertices if there exists another pair of integers $\{g, g+j-i\} \in V$. For any particular $g$ such that $\{g, g+j-i\}$ are distinct from $i$ and $j$, $\{g, g+j-i\} \in V$ happens with probability at most $\left(\frac{s}{n}\right)^2$. Since $g$ can be chosen in at most $n$ distinct ways, this probability can be bounded by $\frac{s^2}{n}$. If one of $\{g, g+j-i\}$ is equal to $i$ or $j$, there are two possibilities and the probability of each of the possibilities can be bounded by $\frac{s}{n}$. Hence, the probability that there is no edge between any two particular vertices can be bounded by $\frac{s^2}{n} + \frac{2s}{n} \leq \frac{2s^2}{n}$.
\end{proof}

\begin{lem}
Suppose $d_{min}(H(U))$ denotes the minimum degree of the graph $H(U)$, then $d_{min}(H(U)) \geq k(1-1/t)$ with  probability greater than or equal to $1-\delta$ for any $\delta>0$, for any $t>0$ if $\frac{2s^2t}{n}<1$ and $n$ is sufficiently large. Also, the graph $H(U)$ has a Hamiltonian cycle.
\label{largedeg}
\end{lem}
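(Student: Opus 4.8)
The plan is to bound, for each vertex of $H(U)$, the number of non-edges incident to it, and then make this bound hold for all vertices simultaneously. Fix the vertex corresponding to $u_i$ and let $N_i$ be the number of $j\neq i$ for which $u_i$ and $u_j$ are \emph{not} adjacent; equivalently $N_i$ counts the $j$'s for which $|u_i-u_j|$ is a repeated pairwise distance of $V$. Lemma~\ref{edgeprob} gives $E[N_i]\le (k-1)\tfrac{2s^2}{n}$, so the hypothesis $\tfrac{2s^2t}{n}<1$ already forces $E[N_i]<\tfrac{k-1}{t}$. Since the degree of $u_i$ is $k-1-N_i$, it suffices to show that with probability at least $1-\delta$ every vertex has $N_i<k/t$, as then $d_{\min}(H(U))\ge k-1-(k/t-1)=k(1-1/t)$.

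To pass from this expectation bound to a statement over all $k$ vertices I would use a moment bound rather than Markov's inequality, since the crude estimate $\Pr\{\exists i:N_i\ge k/t\}\le t\,E[N_i]=O(s^3t/n)$ need not be $o(1)$ for small $\eps$. Fix an integer $m$ and expand $E[N_i^m]=\sum_{j_1,\dots,j_m}\Pr\{u_i\not\leftrightarrow u_{j_p}\ \forall p\}$. Each event $u_i\not\leftrightarrow u_{j_p}$ has a witness pair $\{g_p,\,g_p+|u_i-u_{j_p}|\}\subseteq V$ other than $\{u_i,u_{j_p}\}$; grouping $m$-tuples of witnesses by which underlying positions coincide, bounding the number of configurations of each overlap type by a power of $k$ and the probability of each by the corresponding power of $s/n$ — the same bookkeeping as in the proof of Lemma~\ref{intt} — gives $E[N_i^m]\le C_m\sum_{l=1}^m\big(\tfrac{2ks^2}{n}\big)^l$ for a constant $C_m=C_m(m)$. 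Hence
\begin{equation}
\nonumber \Pr\{\exists i:\ N_i\ge k/t\}\ \le\ k\,\frac{E[N_i^m]}{(k/t)^m}\ \le\ k\,C_m\sum_{l=1}^m\Big(\tfrac{2s^2t}{n}\Big)^l\Big(\tfrac{t}{k}\Big)^{m-l},
\end{equation}
and since $s=O(n^{1/2-\eps})$ makes $\tfrac{2s^2t}{n}=O(n^{-2\eps}\log n)$ while $t=O(\log n)$ and $k\to\infty$ (conditioning on the high-probability event $k=\Theta(s)$), taking $m$ to be a large enough constant (any $m>\tfrac{1}{4\eps}$) drives the right-hand side to $0$, hence below $\delta$ for $n$ large. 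This yields $d_{\min}(H(U))\ge k(1-1/t)$ with probability at least $1-\delta$.

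For the Hamiltonian cycle I would invoke Dirac's theorem: on the event above, whenever $t\ge 2$ — which holds in every regime where the lemma is used, e.g. $t=c\log s$ — we have $d_{\min}(H(U))\ge k(1-1/t)\ge k/2$, and a graph on $k\ge 3$ vertices with minimum degree at least $k/2$ has a Hamiltonian cycle. One cannot instead claim that $H(U)$ is complete, since $\Pr\{H(U)\text{ misses an edge}\}$ can be of order $s^4/n$, which need not vanish for small $\eps$; the detour through the minimum degree and Dirac's theorem is what makes the argument robust for all $\eps$.

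The step I expect to be the main obstacle is the moment estimate $E[N_i^m]\le C_m\sum_{l\le m}(2ks^2/n)^l$: the non-edge events at a fixed vertex are strongly dependent — all are functions of the single random support $V$, and their witness pairs may share positions — so the combinatorial control of overlap patterns (the analogue of the multiple-intersection analysis in Lemma~\ref{intt}) is where the real work lies; the remaining ingredients are a union bound, the substitution $s=O(n^{1/2-\eps})$, and a citation of Dirac's theorem.
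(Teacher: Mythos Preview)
Your overall architecture matches the paper's: bound the degree deficiency at a single vertex, union over the $k$ vertices, then invoke Dirac's theorem with $t=2$ for Hamiltonicity. The genuine difference is in the concentration step. You propose a moment method, bounding $E[N_i^m]$ by enumerating overlap patterns among the witness pairs and then choosing $m>1/(4\eps)$ to beat the union bound. The paper instead runs a vertex-exposure martingale: it sets $X_j=E[d(u_i)\mid F_j]$ with $F_j$ the $\sigma$-algebra of the first $j$ exposed vertices, observes the bounded-difference condition $|X_{j+1}-X_j|\le 1$, and applies Azuma's inequality to get
\[
\Pr\{d(u_i)<E[d(u_i)]-\lambda\}\le 2e^{-\lambda^2/2k},
\]
after which the union bound over $k$ vertices is immediate. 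The martingale route buys you a clean sub-Gaussian tail with no combinatorial bookkeeping and no need to tune a moment order to $\eps$; your route would work in principle but, as you correctly flag, the overlap analysis for $E[N_i^m]$ is where all the labor sits, and it is precisely this labor that Azuma sidesteps. For the Hamiltonian cycle the paper does exactly what you do: take $t=2$ (so that $\tfrac{2s^2t}{n}<1$ holds for $s=O(n^{1/2-\eps})$) to force $d_{\min}\ge k/2$ and cite Dirac.
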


\begin{proof}
Consider a vertex $u_i$. Construct a graph $H_i$ from $H(U)$ by removing all the edges which do not involve the vertex $u_i$. Let us consider the vertex exposure martingale \cite{alon} on this graph $H_i$ with the graph function $d(u_i)$, where $d(u)$ denotes the degree of the vertex $u$. Let $F_j$ be the induced subgraph of $H_i$ formed by exposed vertices after $j$ exposures. We define a martingale $X_0,X_1,.....X_{k-1}$ as follows
\begin{equation}
\nonumber X_j=E[d(u_i)|F_j]
\end{equation}
We have $X_0=E[d(u_i)] \geq k(1-\frac{2s^2}{n})$ and $X_{k-1}=d(u_i)$. Note that $|X_{j+1}-X_j| \leq 1 \quad \forall \quad 0\leq j\leq k-2$. Azuma's inequality \cite{alon} gives us
\begin{equation}
\nonumber Pr\{d(u_i)<E[d(u_i)]-\lambda\} \leq 2e^{-\lambda^2/2k}
\end{equation}
for $\lambda > 0$. Choosing $\lambda=k\left(\frac{1}{t}-\frac{2s^2}{n}\right)$ and $\frac{2s^2t}{n}<1$, we get
\begin{equation}
\nonumber Pr\{d(u_i) < k\left(1-\frac{1}{t}\right)\} \leq 2e^{-\frac{k}{2}{\left(\frac{1}{t}-\frac{2s^2}{n}\right)}^2}
\end{equation}
Using union bound to accommodate all the vertices $u_i$ for $i=\{0,1,...,k-1\}$, we get
\begin{equation}
\nonumber Pr\{\exists i: d(u_i) < k\left(1-\frac{1}{t}\right)\} \leq  2ke^{-\frac{k}{2}{\left(\frac{1}{t}-\frac{2s^2}{n}\right)}^2} < \delta 
\end{equation}
if $n$ is sufficiently large.

Suppose $s=O(n^{1/2-\eps})$. By setting $t=2$, we see that every vertex in the graph has a degree at least $\frac{k}{2}$ with very high probability. Dirac's theorem \cite{graham} states that such graphs have a Hamiltonian cycle.
\end{proof}

%%%%%%%%%%%%%%%%%%%%%%%%%%%%%%%%%%%%%%%%%%%%%%%%%%
%%%%%%%%%%%%%%%%%%%%%%%%%%%%%%%%%%%%%%%%%%%%%%%%%%
%%%%%%%%%%%%%%%%%%%%%%%%%%%%%%%%%%%%%%%%%%%%%%%%%%

\section{Proof of Theorem \ref{noisethm}}

\label{appE}

The proof outline is as follows: Lemma \ref{uijlem} bounds the probability of the first step failing by an arbitrarily small number. A discussion, along with the necessary probability bounds, is provided for the generalized Intersection Steps. We show that TSPR can precisely recover the support of the signal with the desired probability. We then show that the signal values can be robustly recovered by the convex relaxation-based program.

\begin{lem}
The output of the first step is a term of the form $v_{i_0j_0}$ or $v_{k-1-j_0,k-1-i_0}$, where $0 \leq i_0 < j_0 \leq 2c+1$, with probability greater than $1 - \delta$ for any $\delta > 0$, if $s=O(n^{\frac{1}{4}-\eps})$ is an increasing function of $n$, and $n$ is sufficiently large.
\label{uijlem}
\end{lem}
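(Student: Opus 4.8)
The plan is to condition on a high-probability ``regular'' event for $V$ and then follow the selection rule of Step~(ii) of Algorithm~\ref{SPRn}, showing that neither inserted nor deleted integers can move it away from the extreme pairwise distances of $V$. For the regular event: since $s=O(n^{1/4-\eps})$ the expected number of coincidences among the pairwise distances of $V$ is $O(k^4/n)=o(1)$, so with probability $1-o(1)$ the set $V$ is Sidon and $|W|=\binom{k}{2}+1$; a Markov bound gives $|W_{ins}|=o(\binom{k}{2})$ (its mean being $O(np)=o(s^2)$); and a union bound over intervals controls the extreme gaps of $V$ so that the largest $O(c^2)$ elements of $W$ all have $a+(k-1-b)\le 2c+1$ while every other $v_{ab}$ lies below $v_{0,k-2-2c}$ (paying here a constant but $<\delta$ probability, which the statement allows). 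On this event $K^\dagger=(1+o(1))\binom{k}{2}$, so the threshold $m_0:=\lceil\sqrt{K^\dagger}/4\rceil$ is $\Theta(k)$ with $m_0\le k/4$ for $n$ large.

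Next I would classify which $d>0$ satisfy the second defining property of $T^\dagger_{sub}$, i.e.\ have $r(d):=\#\{(a,b):a<b,\ \{a,b\}\subseteq W^\dagger,\ b-a=d\}\ge m_0$. If $d=v_{ij}$ with $i+(k-1-j)\ge m_0+2c$, the two representation families $\{(v_{i,b},v_{j,b}):b>j\}$ and $\{(v_{a,i},v_{a,j}):a<i\}$ supply $i+(k-1-j)$ pairs inside $W$, at most $2c$ of which are broken by deletions (only $\le c$ distances at each support point are deleted), so $d$ qualifies. Conversely, w.h.p.\ every qualifying $d$ is a genuine distance $v_{ij}$ with $i+(k-1-j)\ge m_0/2$: an integer with fewer than $m_0/2$ representations inside $W$ would need $\Omega(k)$ representations involving an inserted integer, and a first--moment bound over the polynomially many candidate values of $d$, using $p=o(s^2/n)$, shows none exists. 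So, up to $o(1)$, the distances realized in $T^\dagger_{sub}$ are the near--endpoint distances.

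I would then exhibit a high element of $T^\dagger_{sub}$: among the $2c+1$ indices $j'\in\{k-2-2c,\dots,k-2\}$ at most $2c$ have $v_{0,j'}$ or $v_{j',k-1}$ in $W_{del}$, so a good $j'$ exists and (using $v_{0,k-1}\notin W_{del}$) the pair $(v_{0,j'},v_{0,k-1})$ meets conditions (i) and (ii) — its difference is $v_{j',k-1}\in W^\dagger$ with $j'\ge m_0$ — whence $w_{\min}^\dagger\ge v_{0,j'}\ge v_{0,k-2-2c}$, i.e.\ $w_{\min}^\dagger$ is among the largest $O(c^2)$ elements of $W^\dagger$. The heart of the argument is to rule out that $w_{\min}^\dagger$ or its partner $w_{\max}^\dagger$ is an inserted integer: such an integer lies in the short top window above $v_{0,k-2-2c}$, and if it were the lower element of a $T^\dagger_{sub}$ pair its difference with the upper element would, by the classification, be a genuine distance of small value (hence small index gap), pinning the inserted integer to one of only $\mathrm{poly}(c)\cdot k\cdot\mathrm{polylog}(n)$ prescribed values; since each prescribed value is inserted with probability $p$ (and a prescribed pair with probability $p^2$), a first--moment bound — precisely where $s=O(n^{1/4-\eps})$ together with $p=o(s^2/n)$ enters — shows w.h.p.\ none occurs. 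Thus $w_{\min}^\dagger=v_{ab}$ and $w_{\max}^\dagger=v_{cd}$ are genuine distances both exceeding $v_{0,k-2-2c}$, so both are near--endpoint: $a,c\le 2c+1$ and $k-1-b,\,k-1-d\le 2c+1$. Finally condition (i) forces $v_{cd}-v_{ab}\in W^\dagger$, which (insertions excluded) is a genuine distance; eliminating the $O(c^4)$ possible additive coincidences among near--endpoint pairs by one more first--moment bound ($O(c^4k/n)=o(1)$) leaves only $a=c$, giving output $v_d-v_b$ of the form $v_{k-1-j_0,k-1-i_0}$ with $j_0=k-1-\min(b,d)\le 2c+1$, or $b=d$, giving output $v_a-v_c$ of the form $v_{i_0j_0}$ with $j_0=\max(a,c)\le 2c+1$ — the asserted form, with all bad events above totalling probability $o(1)<\delta$.

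The main obstacle is the probabilistic bookkeeping against the noise: one must verify, via first--moment estimates set up over exactly the right ranges of candidate values, that neither an inserted integer nor the deletion pattern lets the ``largest valid $w_{\min}^\dagger$'' rule latch onto a distance that is not near--endpoint, and keeping the constant at $2c+1$ (through the right choice of the deletion--robustness window and the extreme--gap bounds) is where the accounting is most delicate. The purely combinatorial ingredients — the Sidon property and the index structure of the largest distances in a difference set — are routine by comparison.
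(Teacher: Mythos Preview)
Your overall architecture matches the paper's: exhibit a witness pair in $T^\dagger_{sub}$ to lower-bound $w^\dagger_{\min}$, then rule out that $w^\dagger_{\min}$ or $w^\dagger_{\max}$ is an inserted integer, and finally eliminate additive coincidences among the $O(c^2)$ genuine near-endpoint distances. The paper's witness is your $(v_{0,j'},v_{0,k-1})$ (its $l_1,l_2$ play the role of your $j'$), and its final coincidence bound comes from Corollary~\ref{intc2} rather than a Sidon argument, but the skeleton is the same.

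The one substantive departure is your \emph{classification step}, where you propose to show that every $d$ with $r(d)\ge m_0$ is a genuine $v_{ij}$ with $i+(k-1-j)\ge m_0/2$. The converse direction here is not covered by your sketch: to conclude that no $d$ accrues $\Omega(k)$ ``accidental'' representations inside $W$ you must control $\max_d |W\cap(W+d)|$ beyond the systematic pairs, which is a second-order additive statement about $W=V-V$ that the Sidon property of $V$ alone does not give; and a first-moment bound over all $n$ candidate values of $d$ is too weak, since the total number of accidental pairs is $\Theta(k^4)$, not $o(1)$. The paper never attempts this classification. When it rules out an inserted $w^\dagger_{\min}$ (or $w^\dagger_{\max}$), it uses only condition~(i) --- that $w^\dagger_{\max}-w^\dagger_{\min}\in W^\dagger$ --- and splits on whether this difference lies in $W$ (pinning the inserted integer to one of $O(k^2)$ values $v_{i'j'}-v_{i''j''}$, each hit with probability $\le p$) or in $W_{ins}$ (two insertions in the short top window, probability $\le np^2$). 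You in fact invoke condition~(i) yourself in your final paragraph; if you drop the classification and reuse that same condition-(i) split for the inserted case, your argument goes through and coincides with the paper's.
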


\begin{proof}
%Consider the terms of the form $\{v_{0i}: k-1-c \leq i \leq k-1 \}$. Since at most $c$ of them belong to $W_{del}$, at least one of them belongs to $W^\dagger$. Hence, there exists at least one integer (denote the smallest of them by $l_0^{(1)}$) which satisfies $v_{0,k-1-l_0^{(1)}} \in W^\dagger$ and $l_0^{(1)} \leq c$. 
Consider the terms of the form $\{v_{0i}: 1 \leq i \leq 2c+1 \}$. Since at most $c$ of them belong to $W_{del}$, at least $c+1$ of them belong to $W^\dagger$. Similarly, at least $c+1$ terms of the form $\{v_{i,k-1}: 1 \leq i \leq 2c+1\}$ belong to $W^\dagger$. Hence, there exists at least one integer (denote the minimum of them by $l_1$) which satisfies $1 \leq l_1 \leq 2c+1$ and $\{v_{0l_1}, v_{l_1,k-1}\} \in W^\dagger$. 

%Hence,  we have $(v_{l_1l_0}, v_{0l_0}) \in T^\dagger_{sub}$.
%Similarly, there exists at least one integer $l_2$ (denote the maximum of them by $l_2$ which satisfies $ (k - 1) - (2c + 1 ) \leq l_2 \leq k - 2$ and $\{v_{0l_2}, v_{l_2l_0}, v_{0l_0} \} \in W^\dagger$. 
%$\{v_{l_1, k-1}, v_{0, k - 1}\}$ and  $\{v_{0l_2}, v_{0, k - 1}\}$ satisfy both the necessary conditions:
%and $\{v_{l_2, k-1}\}$ 
%the integer pairs of the form $\{v_{i, l_2}, v_{i,k-1}\}$ for $0 \leq i \leq ( k -1 )- (2c+2)$ correspond to a pairwise distance of $\{v_{l_2,k-1}\}$ and at least $\frac{k}{2}$ of them belong to $W^\dagger$. 

Since $v_{0,k-1} \in W^\dagger$, we have $(v_{l_1,k-1}, v_{0, k-1}) \in T^\dagger_{sub}$ as both the conditions are satisfied:

(i) They have a difference $v_{0l_1}$, which belongs to $W^\dagger$.

(ii) The integer pairs of the form $\{v_{0i}, v_{l_1 i}\}$ for $2c+2 \leq i \leq k - 1$ have a difference $v_{0l_1}$, and since at most $2c$ of the terms involved belong to $W_{del}$, at least $\frac{\sqrt{K^\dagger}}{4}$ (which is less than $\frac{k}{2}$ with the desired probability: as $k$ concentrates within $\frac{3s}{4}$ and $\frac{5s}{4}$, and the number of inserted errors are less than $s^2$ with arbitrarily high probability) such pairs belong to $W^\dagger$ with arbitrarily high probability if $n$ is sufficiently large. 

Similarly, we have $(v_{0,k-1-l_2} , v_{0,k-1}) \in T^\dagger_{sub}$ for some $1 \leq l_2 \leq 2c +1$. Hence, the first step chooses a value of $w_{min}^\dagger$ which is at least $\max\{v_{l_1,k-1} , v_{0,k-1-l_2} \}$, which results in a value of $v_{0j_0}$ or $v_{k-1-j_0,k-1}$ for some $0 < j_0 \leq 2c+1$. 

If $w_{min}^\dagger$ is a value higher than $\max\{v_{l_1,k-1} , v_{0,k-1-l_2} \}$, one of the following two cases must happen:

(i) $w_{min}^\dagger = v_{ij}$ for some $0 \leq i \leq 2c+1$ and $k - 1 - (2c+1) \leq j \leq k-1$: For each such $v_{ij}$, this can happen in two ways: (a) the integer pair involving $w_{min}^\dagger$ which satisfies both the conditions contains another (strictly greater) term of the form $v_{i'j'}$ which belongs to $W$. This can happen only if $\{v_{i'j'} - v_{ij}\} \in W$ or $\{v_{i'j'} - v_{ij}\} \in W_{ins}$ for some $v_{i'j'} \in W$. If $i'=i$ or $j'=j$, the resulting value is either $v_{jj'}$ or $v_{i'i}$ respectively, which is within the requirements of this step. If $i' \neq i$ and $j' \neq j$, the probability of $\{v_{i'j'} - v_{ij}\} \in W$ can be bounded, using Corollary \ref{intc2}, by $c_0\left( \frac{1}{s} + \frac{s^2}{n}\right)$ for some constant $c_0$ and the probability of $\{v_{i'j'} - v_{ij}\}  \in W_{ins}$ can be bounded by $p \leq \frac{s^2}{n}$ due to the independence of $W_{ins}$ and $W$. The total number of ways in which $\{i',j'\}$ can be chosen is bounded by a constant (b) the integer pair involving $w_{min}^\dagger$ which satisfies both the conditions contains a (strictly greater) term $g$ which belongs to $W_{ins}$. This can happen if $\{g - v_{ij} \} \in W$ or $\{g - v_{ij} \} \in W_{ins}$ for some $g \in W_{ins}$. The event $\{g - v_{ij} \} \in W$ is equivalent to the event $\{v_{i'j'}+v_{ij}\} \in W_{ins}$ for some $\{i',j'\}$, the probability of which can be bounded by $k^2 \times p \leq 4s^2\frac{s^2}{n}$ as the probability for each $\{i',j'\}$ is bounded by $p \leq \frac{s^2}{n}$ due to independence and  $\{i',j'\}$ can have at most $k^2$ different values (and $k$ concentrates within $\frac{3s}{4}$ and $\frac{5s}{4}$). The probability of $\{g - v_{ij} \} \in W_{ins}$ for a particular $\{ i,j\}$ can be bounded as follows: two integers in $W_{ins}$ must be separated by $v_{ij}$, i.e., $\{g,g-v_{ij}\} \in W_{ins}$. This can be bounded by $p^2 n \leq \frac{s^4}{n}$ (using the same arguments as Lemma \ref{int0}). 

Since the total number of ways in which $\{i,j\}$ can be chosen is bounded by a constant, the probability of this case happening can be made arbitrarily small if $s=O(n^{\frac{1}{4}-\eps})$ is an increasing function of $n$, and $n$ is sufficiently large.

(ii) $w_{min}^\dagger = g$ for some $g \in W_{ins}$: For each such $g$, this can happen in two ways (a) the integer pair involving $g$ which satisfies both the conditions contains a (strictly greater) term of the form $v_{i'j'}$ which belongs to $W$. This can happen only if $\{v_{i'j'} - g \} \in W$ or $\{v_{i'j'} - g \} \in W_{ins}$ for some $v_{i'j'} \in W$. The event $\{v_{i'j'} - g \} \in W$  is equivalent to $v_{i'j'} - v_{i''j''} = g$ for some $\{i'',j''\}$. This probability is bounded by $\frac{c_0s^3}{n}$ as the probability is bounded by $\frac{s}{n}$ for each $\{i'',j''\}$ and the total number of ways in which $\{i'',j''\}$ can be chosen is bounded by $k^2 \leq 4s^2$ with arbitrarily high probability, and the total number of ways in which $\{i',j'\}$ can be chosen is bounded by a constant. The probability of $\{v_{i'j'} - g \} \in W_{ins}$ can be bounded by $p \leq \frac{s^2}{n}$ for every $\{i',j'\}$ and the total number of ways in which $\{i',j'\}$ can be chosen is bounded by a constant (b) the integer pair involving $g$ which satisfies both the conditions contains another (strictly greater) term $g'$ which belongs to $W_{ins}$. For each such $g'$, the probability of $\{g'-g\}\in W$ can be bounded by $\frac{2s^2}{n}$ (Lemma \ref{int0}) and the probability of $\{g'-g\} \in W_{ins}$ can be bounded by $p \leq \frac{s^2}{n}$ due to independence. The number of such $g'$ in $W_{ins}$ can be calculated as follows: since $g'$ has to be greater than $\max\{v_{l_1,k-1} , v_{0,k-1-l_2} \}$, the range of values it can take is limited by $\min\{v_{0,l_1} , v_{k-1-l_2,k-1} \}$. Hence, the expected number of such $g'$ is less than or equal to $(2c+1)\frac{n}{s}{p} = (2c+1)o(s)$. Hence, with arbitrarily high probability, the number of such $g'$ is less than or equal to $s$ (Markov inequality) if $n$ is sufficiently large. The probability of this event can hence be bounded by $\frac{c_0s^3}{n}$.

Since the total number of such $g$ is similarly less than or equal to $s$, the probability of this case happening can be made arbitrarily small if $s=O(n^{\frac{1}{4}-\eps})$ is an increasing function of $n$, and $n$ is sufficiently large.

%At least one integer in $W_{ins}$ greater than $L - \min \{ v_{0,2c+1}, v_{(k-1)-(2c+1),k-1} \}$ satisfies both the necessary conditions: For any such integer $g$, the following can be said: the probability that there exists $\frac{k}{2}$ integer pairs in $W$ with pairwise distance $g$ can be bounded, using Lemma \ref{klem}, by $\frac{c_1\max \{s^4, s^3, s^2\}}{kn}$. There can be at most $k$ integers in $W_{ins}$ in this range, the probability of this event happening can be bounded by $\frac{c_1\max \{s^4, s^3, s^2\}}{n}$, which can be made arbitrarily small for sufficiently large $n$ if $s=O(n^{\frac{1}{4}-\eps})$.

Hence, the output of the first step is $v_{i_0j_0}$ or $v_{k-1-j_0, k-1-i_0}$, where $0 \leq i_0 < j_0 \leq 2c+1$ with the desired probability.
\end{proof}

To resolve the flip ambiguity, we aim to recover the support $U$ such that $u_{i_0j_0}$ is the output of the first step. We will provide the details for the case where $u_{i_0j_0} = v_{i_0j_0}$, the calculations are identical for the case where $u_{i_0j_0} = v_{k-1-j_0, k-1-i_0}$. 

Consider the set $W^\dagger \cap (W^\dagger + v_{i_0j_0})$. At least $2c+2$ terms of the form $\{ v_{i_0j}: ( k - 1) - (3c+1) \leq j \leq k - 1 \}$ belong to $W^\dagger$ and at least $2c+2$ terms of the form $\{ v_{j_0j}: ( k - 1) - (3c+1) \leq j \leq k - 1 \}$ belong to $W^\dagger$ (which, when added by $v_{i_0j_0}$, gives $v_{i_0j}$). Hence, at least $c+2$ terms of the form $\{ v_{i_0j}: ( k - 1) - (3c+1) \leq j \leq k - 1 \}$ belong to $W^\dagger \cap (W^\dagger + v_{i_0j_0})$. 

Consider the integers in between $v_{(k-1)-(3c+1)}$ and $v_{k-1}$. For any integer, not in $V$, to belong to $W^\dagger \cap (W^\dagger + v_{i_0j_0})$ in this region, one of the following cases has to happen:

(i) The integer has to belong to $W \cap (W + v_{i_0j_0})$: The probability of this happening can be bounded by $\frac{c_0s^4}{n^2}$ (Lemma \ref{intc3}). Hence, the probability that some integer which is not in $V$, in this region, belongs to $W \cap (W + v_{i_0j_0})$ is union bounded by $\frac{c_0s^4}{n^2} \times n \leq \frac{c_0s^4}{n}$, which can be made arbitrarily small.

(ii) The integer has to belong to $W_{ins} \cap (W_{ins} + v_{i_0j_0})$: For this to happen, there must exist two integers, say $g_1$ and $g_2$, in $W_{ins}$ which are separated by $v_{i_0j_0}$. This probability is bounded by $p^2n \leq \frac{s^4}{n}$, which can be made arbitrarily small.

(iii) The integer has to belong to $W_{ins} \cap (W + v_{i_0j_0})$ or $W \cap (W_{ins} + v_{i_0j_0})$: For each $v_{ij} \in W$, the probability of $\{v_{ij} \pm v_{i_0j_0} \} \in W_{ins}$ can be bounded by $2p \leq \frac{2s^2}{n}$. Since there are at most $k^2 \leq 4s^2$ ways in which $\{i,j\}$ can be chosen, this probability can be bounded by $\frac{c_0s^4}{n}$, which can be made arbitrarily small.

Hence, the largest $c+2$ integers in $W^\dagger \cap (W^\dagger + v_{i_0j_0})$ correspond to the pairwise distance between $v_{i_0}$ and $v_j$ for some $( k - 1) - (3c+1) \leq j \leq k - 1$ (denote these integers by $\{v_{q_0}, v_{q_1}, ... , v_{q_{c+1}} \}$) with the desired probability.

For every $0 \leq p \leq \frac{k}{2}$, there exists at least two terms, say $q_i$ and $q_j$ such that $\{v_{pq_i}, v_{pq_j}\} \in W^\dagger$ and hence $v_{pq_{c+1}}$ will belong to the intersection $\left(W^\dagger \cap (W^\dagger + v_{q_iq_j})\right)+v_{q_jq_{c+1}}$. Hence, by considering intersections with each of the ${c+2 \choose 2}$ pairs $\{v_{q_i}, v_{q_j}\}$ and taking a union of the resulting integer sets, we can ensure that all the terms of the form $v_{pq_{c+1}}$, where $0\leq p \leq \frac{k}{2}$, belong to the resulting integer set. The probability that some other integer in this range will belong to the integer set can be union bounded by $(c+2)^2$ times the probability calculated above in the case of intersection with $v_{i_0j_0}$, which can be made  arbitrarily small as $c$ is a constant. 

Using the fact that $v_{0p} = v_{0q_{c+1}}-v_{pq_{c+1}}$, $v_{0p}$ for $0 \leq p \leq \frac{k}{2}$ can be recovered. Using the first $c+2$ of these terms, by considering intersections with each of the ${c+2 \choose 2}$ pairs $\{v_{i}, v_{j}\}$ and taking a union of the resulting integer sets, we can similarly recover all the terms of the form $v_{0p}$, where $\frac{k}{2} \leq p \leq k-1$.

\begin{lem}
For any integer $l > v_{0\frac{k}{2}}$ such that $l$ does not belong to $V$, the probability that $l-v_{i_0}$ belongs to $W \cap (W+v_{i_0j_0})$, where $0 \leq i_0 < j_0 \leq c$ for some constant $c$, is bounded by $\frac{c_0s^4}{n^2}$ for some constant $c_0$, if $s=O(n^{\frac{1}{4}-\eps})$ is an increasing function of $n$ and $n$ is sufficiently large.
\label{intc3}
\end{lem}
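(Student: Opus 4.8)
The plan is to reproduce the argument of the single Intersection Step, Lemma \ref{int2}, almost verbatim. First I would rewrite the target event: since $v_{i_0j_0} = v_{j_0} - v_{i_0}$, the statement ``$l - v_{i_0} \in W \cap (W + v_{i_0j_0})$'' is exactly ``$l - v_{i_0} \in W$ \emph{and} $l - v_{j_0} \in W$''. This is the analogue of the event analysed in Lemma \ref{int2}, with the anchor pair $\{v_0 = 0,\, v_1\}$ there played here by $\{v_{i_0},\, v_{j_0}\}$. I would then condition on the positions $v_{i_0} = a$ and $v_{j_0} = b$ (so $v_{i_0j_0} = b-a$). Because $i_0, j_0$ are bounded by a constant, this conditioning fixes only the $O(c)$ support points $\{v_0, \ldots, v_{j_0}\} \subseteq \{0,1,\ldots,b\}$, each of size $O(cn/s)$; above $b$ the membership indicators of $V$ remain independent with probability at most $s/n$, which is the only property of $V$ used throughout Appendix \ref{appC}. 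Since $l > v_{0,k/2}$ while $v_{i_0}, v_{j_0}$ are small, both $l - v_{i_0}$ and $l - v_{j_0}$ strictly exceed every pairwise distance among the $O(c)$ anchors, so neither membership can be satisfied ``for free'' by the conditioning --- each must be witnessed by at least one pair of support points reaching above $b$.

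Next I would carry out a case analysis mirroring cases (i)--(iv) of Lemma \ref{int2}. Write the witnessing pairs as $P_1 = \{g_1,\, g_1 + l - a\} \subseteq V$ (for $l - v_{i_0} \in W$) and $P_2 = \{g_2,\, g_2 + l - b\} \subseteq V$ (for $l - v_{j_0} \in W$). (a) If $P_1$ and $P_2$ are disjoint and avoid the anchors, each of $g_1, g_2$ ranges over at most $n$ values and each pair lies in $V$ with probability at most $(s/n)^2$; this is the dominant contribution, $O(s^4/n^2)$. (b) If $P_1$ and $P_2$ overlap in an entry, or an entry of a witnessing pair is forced to a value involving $a$, $b$, or $b-a$, then either the number of free choices drops or one endpoint becomes a determined near-diameter integer that lies in $V$ with probability at most $s/n$; in every such configuration the other membership still costs an independent event of probability at most $s^2/n$, so the contribution is $O(s^3/n^2)$ --- exactly as in cases (ii)--(iv)(b) of Lemma \ref{int2} (and Lemma \ref{int0}). (c) If $l - v_{i_0} \in V$ directly (this branch is vacuous when $i_0 = 0$, since then $l - v_{i_0} = l \notin V$), that costs at most $s/n$; the remaining membership $l - v_{j_0} \in W$ then costs a further $s/n$ if explained via $l - v_{j_0} \in V$ or $v_{i_0j_0} \in V$, and at most $s^2/n$ via a fresh disjoint pair, again giving $O(s^3/n^2)$; the symmetric branch $l - v_{j_0} \in V$ is identical. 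Summing, $Pr\{l - v_{i_0} \in W \cap (W + v_{i_0j_0})\} \le c_1 s^4/n^2 + Pr\{l \in V\}$ for a constant $c_1$, once $s$ is increasing and $n$ is large; the stray $Pr\{l \in V\}$ is the analogue of case (i) of Lemma \ref{int2}, since $l \in V$ (with $l > v_{j_0}$) automatically places both $l - v_{i_0}$ and $l - v_{j_0}$ in $W$.

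Finally I would condition on $l \notin V$. As $l \in V$ implies $l - v_{i_0}, l - v_{j_0} \in W$, the manipulation that produced (\ref{refexp}) applies unchanged and gives
\begin{equation}
\nonumber Pr\{l - v_{i_0} \in W \cap (W + v_{i_0j_0}) \,|\, l \notin V\} \le \frac{c_1 s^4/n^2}{1 - s/n} \le \frac{c_0 s^4}{n^2}
\end{equation}
for a constant $c_0$ and $n$ large (using $s = o(n)$). I expect the only delicate point to be the bookkeeping in cases (b)--(c): one must check that whenever one of the two memberships is witnessed cheaply --- with probability $\Theta(s/n)$, e.g.\ $l - v_{i_0}$ itself falling in $V$ --- the second membership still forces an \emph{independent} event of probability $\Theta(s^2/n)$, so that the product is $O(s^3/n^2)$ rather than $O(s^3/n)$. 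This is precisely the subtlety already handled in Lemma \ref{int2}, and it is here that the hypothesis $l > v_{0,k/2}$ is used, since it excludes the degenerate configurations in which $l - v_{i_0}$ or $l - v_{j_0}$ is small enough to coincide with a distance generated by the anchors.
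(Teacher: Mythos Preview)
Your approach is the same as the paper's --- rerun the case analysis of Lemma~\ref{int2} with the anchor pair $\{0,v_1\}$ replaced by $\{v_{i_0},v_{j_0}\}$ --- but there is a bookkeeping gap. Conditioning on $v_{i_0}=a$ and $v_{j_0}=b$ does not merely fix the three deterministic points $\{0,a,b\}$: it also forces exactly $i_0-1$ support points to lie (uniformly) in $(0,a)$ and $j_0-i_0-1$ support points in $(a,b)$. These intermediate points are \emph{not} i.i.d.\ $\mathrm{Bern}(s/n)$, so your case~(a), which assumes any $g\in V$ outside the anchors costs $s/n$, does not cover the situation $g_1\in(0,a)\cup(a,b)$, and your case~(b) only mentions values ``involving $a$, $b$, or $b-a$'', i.e.\ the deterministic anchors. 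The paper's proof isolates exactly this: it declares that Lemma~\ref{int2} already bounds every configuration \emph{except} four new ones in which a witnessing point lands in one of the two uniform intervals, and then bounds each of these by $O(s^3/n^2)$ using that the density in $(0,a)$ is $(i_0-1)/(a-1)$ while the interval length is $a-1$ (so the expected number of such points is $\le c$), and similarly for $(a,b)$. Once you add these four sub-cases your argument goes through unchanged.

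This is also where the hypothesis $s=O(n^{1/4-\eps})$ enters, which you never actually invoke beyond $s=o(n)$. The paper uses it, via Lemma~\ref{edgeprob} and a union bound over the $O(s^2)$ pairs, to conclude that $H(U)$ is a clique (all pairwise distances distinct) with high probability, hence in particular $d_1\neq d_2$; it also uses the high-probability lower bound $d_1\ge n/s^2$ (from the argument in Lemma~\ref{intc}) so that the term $(i_0-1)^2/(d_1-1)$ arising in the new sub-case~(iii) is $O(s^2/n)$. Your write-up should make the dependence on these two facts explicit.
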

\begin{proof}
This is a generalization of Lemma \ref{int2}, the events are conditioned on $v_{i_0}=d_1$ and $v_{i_0j_0}=d_2$ instead. The conditional distribution of $V$ is as follows: $v_{i_0}=d_1$ ensures that there are $i_0-1$ integers in the range $1$ to $d_1-1$ (these integers will be uniformly distributed in this range). $v_{i_0j_0} = d_2$ ensures that there are $j_0-i_0-1$ integers in the range $d_1+1$ to $d_1+d_2-1$ (these integers will be uniformly distributed in this range). Any integer greater than $d_1+d_2$ will belong to $V$ with a probability at most $\frac{s}{n}$ independently.

If $s=O(n^{\frac{1}{4}-\eps})$, $H(U)$ is a clique with arbitrarily high probability for sufficiently large $n$ (Lemma \ref{edgeprob} bounds the probability of each edge missing by $\frac{2s^2}{n}$, a simple union bound completes the proof), from which we have $d_1 \neq d_2$. Also, if $i_0 \neq 0$, $d_1 \geq \frac{n}{s^2}$ holds with arbitrarily high probability if $s$ is an increasing function of $n$ and $n$ is sufficiently large (see proof of Lemma \ref{intc}).  

For two events $\{l-d_1, l-d_1-d_2\} \in W$ to happen, there has to be some $g$ such that $\{g,g+l-d_1\} \in V$ and some $h$ such that $\{h,h+l-d_1-d_2\} \in V$. Note that $g+l-d_1$ and $g+l-d_1-d_2$ are greater than $d_1+d_2$ with arbitrarily high probability due to $l \geq v_{0\frac{k}{2}}$ and $i_0 < j_0 \leq c$.

Lemma \ref{int2} provides the bound $\frac{c_1s^4}{n^2}$ for all the cases by which the two events can be explained except for four cases, the bounds for which are provided here (see remark at the end of the proof of Theorem {\ref{supportthm}} for relationship between calculations of independent bernoulli and uniform distributions): 

(i) There exists one integer $g$ in the range $1$ to $d_1-1$ or $d_1+1$ to $d_1+d_2-1$, whose presence in $V$, using another integer in $V$ greater than $d_1+d_2$, explains exactly one event. Since the probability of $g \in V$ in this range can be bounded by $\frac{i_0-1}{d_1-1}$ or $\frac{j_0-i_0-1}{d_2-1}$ respectively, and the number of ways in which $g$ can be chosen is bounded by $d_1-1$ or $d_2-1$ respectively, the probability of this happening can be bounded by $\frac{i_0-1}{d_1-1}  \frac{s}{n}  {(d_1-1)} + \frac{j_0-i_0-1}{d_2-1}  \frac{s}{n}  {(d_2-1)} \leq \frac{cs}{n}$. The probability of this case can be bounded, using the same arguments as that of the third case (under Case I) in Lemma \ref{int2}, by $2 \times \frac{cs}{n} \times \left(\frac{s^2}{n}+\frac{2s}{n}\right) \leq  \frac{c_2's^3}{n^2}$.

(ii) There exists one integer $g$ in the range $1$ to $d_1-1$ or $d_1+1$ to $d_1+d_2-1$, whose presence in $V$, using two integers in $V$ greater than $d_1+d_2$, explains both the events. The probability of $g \in V$ can be bounded the same way as in the first case. Hence, the probability of this case can be bounded by $\frac{i_0-1}{d_1-1}  \left(\frac{s}{n}\right)^2  {(d_1-1)} + \frac{j_0-i_0-1}{d_2-1}  \left(\frac{s}{n}\right)^2  {(d_2-1)} \leq \frac{cs^2}{n^2}$.

(iii) There exists two integers $\{g, h\}$ in the range $1$ to $d_1-1$ or $d_1+1$ to $d_1+d_2-1$, whose presence in $V$, using one another integer in $V$ greater than $d_1+d_2$, explains both the events. For this to happen, there must exist two integers $\{g, g+d_2\} \in V$ in this range. If both of them are in the range $1$ to $d_1-1$, the probability of $\{g, g+d_2\} \in V$ can be bounded by $4\left(\frac{i_0-1}{d_1-1}\right)^2$ and the number of ways in which $g$ can be chosen is bounded by $d_1-1$. If one of them is in the range $1$ to $d_1-1$ and the other is in the range $d_1+1$ to $d_1+d_2-1$, the probability of $\{g, g+d_2\} \in V$ can be bounded by $\frac{i_0-1}{d_1-1}  \frac{j_0-i_0-1}{d_2-1}$ and the number of ways in which $g$ can be chosen is bounded by $d_2-1$. Hence, the probability of this case is bounded by $4\left(\frac{i_0-1}{d_1-1}\right)^2 \frac{s}{n} (d_1-1) +\frac{i_0-1}{d_1-1}  \frac{j_0-i_0-1}{d_2-1}  \frac{s}{n} (d_2-1)  \leq \frac{c_3s^3}{n^2}$.
%\frac{c_3s}{n} \frac{1}{d_1-1}

(iv) There exists two integers $\{g, h\}$ in the range $1$ to $d_1-1$ or $d_1+1$ to $d_1+d_2-1$, whose presence in $V$, using two other integers in $V$ greater than $d_1+d_2$, explains both the events. This probability can be bounded by $c_4\left(\frac{s}{n}\right)^2$, as the probability to explain each event can be  bounded by $\frac{cs}{n}$ using the same arguments as that of the first case.
\end{proof}

%%%%%%%%%%%%%%%%%%%%%%%%%%%%%%%%%%%%%%%%%%%%%%

%Suppose the measurements have additive noise $\z$ such that $||\z||_\infty < \eta$, where $\eta$ is less than half of the minimum non-zero entry of the autocorrelation in magnitude. The set $W$ in this case can be calculated exactly by using the indices of the elements in the autocorrelation which have a value greater than $\eta$ in magnitude. 

In order to analyze the error in the recovered signal values, we use a technique similar to the proof of Theorem \ref{signalthm}. If $s=O(n^{\frac{1}{4}-\eps})$, the graph $H(U)$ is a clique with arbitrarily high probability if $n$ is sufficiently large. Hence, we analyze
\begin{align}
\label{SPRS2nr}
& \textrm{minimize}  \hspace{0.4cm} trace(\X) \\
\nonumber & \textrm{subject to} \hspace{0.3cm} |X_{u_iu_j}-a_{|u_i-u_j|}| \leq \eta \quad\textrm{ if }  u_i \leftrightarrow u_j \ \textrm{ in }  H(U) \\
\nonumber & \hspace{1.7cm} X_{ij}= 0 \quad  \textrm{if}  \quad  \{i,j\} \notin U,   ~~~~ \X \succcurlyeq 0
\end{align}
as follows: Let $\R_0=\rr\rr^\star$ be a $k \times k$ matrix whose off-diagonal components are measured with additive noise, i.e., $Q_{ij}=r_ir_j^\star+z_{ij}: 0 \leq i \neq j \leq k-1$, where $\rr=(r_0, r_1, ...,r_{k-1})^T$  and the noise satisfies $|z_{ij}| \leq \eta$. The objective is to recover the diagonal components robustly. Consider the program
\begin{align}
\label{noise1}
& \textrm{minimize}  \hspace{1cm} trace(\R) \\
\nonumber & \textrm{subject to} \hspace{1cm} |Q_{ij}-R_{ij}| \leq \eta : 0 \leq  i \neq j \leq k-1\\
& \hspace{2.3cm} \R \succcurlyeq 0. \nonumber
\end{align}
If $\R^\dagger$ is the optimizer of (\ref{noise1}), for all $0 \leq  i \neq j \leq k-1$,
\begin{equation}
\nonumber |R_{ij}^\dagger - r_ir_j^\star| \leq |R_{ij}^\dagger -Q_{ij}| +|Q_{ij} - r_ir_j^\star| \leq 2\eta.
\end{equation}

%\begin{lem}
%If ${\R^\dagger}$ is the optimizer of (\ref{noise1}), then $\sum_j\left({R}^\dagger_{jj}-|r_j|^2\right) ^2\leq 8k^2 \eta^2$ with arbitrarily high probability if $k$ is sufficiently large.
%\end{lem}
%\begin{proof}
By using AM-GM inequality, we get $|R_{ij}^\dagger|^2 \geq  (|r_i|^2-2\eta)(|r_j|^2-2\eta)$ for all $i \neq j$. Since for all off-diagonal components, we have $|{R}_{ij}^\dagger|^2 \geq (|r_i|^2-2\eta)(|r_j|^2-2\eta)$,  at most one of the diagonal terms (say $i$) is such that ${R}_{ii}^\dagger < (|r_{i}|^2-2\eta)$. If ${R}_{ii}^\dagger <  (|r_{i}|^2-k\eta)$, then  the $2 \times 2$ positive semidefinite constraints would ensure that for all $j \neq i$, ${R}_{jj}^\dagger>(|r_{j}|^2 + \alpha_j \eta)$, where $\alpha_j \geq (k-4) \frac{|r_j|^2}{|r_i|^2} - 4$. The optimum value would, similar to the proof of Theorem \ref{signalthm}, strictly increase with arbitrarily high probability for sufficiently large $k$. Hence, the optimizer has diagonal components ${R}_{jj}^\dagger \geq |r_j|^2-2\eta$ for  $0 \leq j \neq i \leq k - 1$ and $R_{ii}^\dagger \geq |r_i|^2 - k \eta$  .

%there cannot exist $\{i,j\}$ such that ${R}_{ii}^\dagger < |r_i|^2-2\eta$ and ${R}_{jj}^\star < |r_j|^2-2\eta$ due to the fact that $2 \times 2$ submatrices of $\R^\dagger$ have to be positive semidefinite.

Since the objective function value at the optimizer is less than or equal to $\sum_j|r_j|^2$, we have the bound $\sum_j \left( {R}^\dagger_{jj}-|r_j|^2 \right)^2 \leq (2\eta)^2 (k-1) + (\eta k)^2 + (3\eta k)^2 \leq 12k^2 \eta^2$
%\end{proof}

Since there are at most $k^2$ off-diagonal entries and each of them are measured with an error of at most $2\eta$, we have
\begin{equation}
||\X^\dagger - \x_0\x_0^\star||_2^2 \leq 12k^2\eta^2 + 4\eta^2 k^2 \leq 16k^2\eta^2 \nonumber
\end{equation}
which concludes the proof.

%Using the same arguments as Lemma \ref{signalreclem}, we can see that the optimizer of (\ref{SPRS2nr}) and (\ref{SPRS2n}) satisfy $\sum_j|{X}_{jj}^\dagger-X_{0,jj}| \leq 4k \eta$, or in other words, $|||\hat{\x}|-|\x|||_2^2 \leq 4k \eta$. 

%The diagonal components ensure robust magnitude recovery and the bound on $\eta$ ensures that noise does not alter the signs of the measured terms in the matrix, hence the relative signs of the various components can be recovered too. 

\end{document}